\newcommand{\BREAK}{\STATE \textbf{break}}
\tikzset{middlearrow/.style={
        decoration={markings,
            mark= at position 0.5 with {\arrow[scale=1.5]{#1}},
        },
        postaction={decorate},
        >=stealth,
    }
}
\pgfmathsetmacro{\offsetx}{3.7}
\pgfmathsetmacro{\offsety}{2}
\pgfmathsetmacro{\offsetz}{0}
\newtheorem{theorem}{Theorem} 
\newtheorem{lemma}[theorem]{Lemma}
\newtheorem{corollary}[theorem]{Corollary}
\theoremstyle{definition}
\newtheorem{definition}[theorem]{Definition}
\newtheorem{example}[theorem]{Example}
\newtheorem{problem}[theorem]{Problem}
\newtheorem{remark}[theorem]{Remark}
\newtheorem{observation}[theorem]{Observation}
\Crefname{theorem}{theorem}{theorems}
\Crefname{theorem}{Theorem}{Theorems}
\Crefname{lemma}{lemma}{lemmas}
\Crefname{lemma}{Lemma}{Lemmas}
\Crefname{figure}{figure}{figures}
\Crefname{figure}{Figure}{Figures}
\Crefname{problem}{problem}{problems}
\Crefname{problem}{Problem}{Problems}
\Crefname{remark}{remark}{remark}
\Crefname{remark}{Remark}{Remark}
\Crefname{observation}{observation}{observation}
\Crefname{observation}{Observation}{Observation}
\title{Non-Promise Version of Unique Sink Orientations}
\author{Tiago Oliveira Marques \\ \href{mailto:tiago13@mit.edu}{tiago13@mit.edu}}
\affil{Massachusetts Institute of Technology}
\date{}
\begin{document}

\maketitle

\begin{abstract}
A unique sink orientation (USO) is an orientation of the edges of a hypercube such that each face has a unique sink. Many optimization problems like linear programs reduce to USOs, in the sense that each vertex corresponds to a possible solution, and the global sink corresponds to the optimal solution. People have been studying intensively the problem of find the sink of a USO using vertex evaluations, i.e., queries which return the orientation of the edges around a vertex. This problem is a so called \emph{promise} problem, as it assumes that the orientation it receives is a USO.

In this paper, we analyze a \emph{non-promise} version of the USO problem, in which we try to either find a sink or an efficiently verifiable violation of the USO property. This problem is worth investigating, because some problems which reduce to USO are also promise problems (and so we can also define a non-promise version for them), and it would be interesting to discover where USO lies in the hierarchy of subclasses of $\texttt{TFNP}^\texttt{dt}$, and for this a total search problem is required (which is the case for the non-promise version).

We adapt many known properties and algorithms from the promise version to the non-promise one, including known algorithms for small dimensions and lower and upper bounds, like the Fibonacci Seesaw Algorithm. Furthermore, we present an efficient resolution proof of the problem, which shows it is in the search complexity class $\texttt{PLS}^\texttt{dt}$ (although this fact was already known via reductions). Finally, although initially the only allowed violations consist of $2$ vertices, we generalize them to more vertices, and provide a full categorization of violations with $4$ vertices, showing that they are also efficiently verifiable.
\end{abstract}

{\small
\paragraph{Acknowledgements.} I would like to thank the organizers of the Student Summer Research Fellowship at ETH Zürich, for providing me with the opportunity of doing research in Theoretical Computer Science, from which resulted this paper. I would also like to thank my supervisor, Professor Bernd Gärtner, for his guidance through the project. I would like to extend my gratitude to Simon Weber, Sebastian Haslebacher, Patrick Schnider and Michaela Borzechowski for the fruitful brainstorm discussions. Finally, this paper's quality improved significantly after the comments and revision from Simon Weber, Sebastian Haslebacher and Professor Bernd Gärtner, which I deeply appreciate.
}

\section{Introduction}\label{introduction}

In many optimization problems, given a non-optimal solution, it is possible in polynomial time to find a better solution. When modeling this as a directed graph, then an optimal solution is one without outgoing edges, i.e., a sink. Stickney and Watson showed that the graph modelled from the P-matrix Linear Complementarity Problem has a rigid structure, namely the graph is an orientation of the hypercube graph, in which each face has a unique sink \cite{StickneyWatson}. Szabó and Welzl later named these graphs \emph{unique sink orientations} \cite{upperuso}.

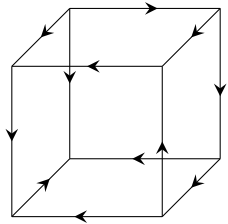
\begin{figure}[htbp]\centering
\begin{tikzpicture}
        \draw[middlearrow={<}] (0,0,0) -- (2,0,0);
        \draw[middlearrow={<}] (2,0,0) -- (2,2,0);
        \draw[middlearrow={<}] (2,2,0) -- (0,2,0);
        \draw[middlearrow={>}] (0,2,0) -- (0,0,0);
        \draw[middlearrow={<}] (0,0,2) -- (2,0,2);
        \draw[middlearrow={>}] (2,0,2) -- (2,2,2);
        \draw[middlearrow={>}] (2,2,2) -- (0,2,2);
        \draw[middlearrow={>}] (0,2,2) -- (0,0,2);
        \draw[middlearrow={<}] (0,0,0) -- (0,0,2);
        \draw[middlearrow={>}] (2,0,0) -- (2,0,2);
        \draw[middlearrow={>}] (0,2,0) -- (0,2,2);
        \draw[middlearrow={>}] (2,2,0) -- (2,2,2);
\end{tikzpicture}
    \caption{A unique sink orientation}
    \label{ex1}
\end{figure}

Many known problems from combinatorics, geometry and mathematical optimization can be reduced to unique sink orientations, where the goal is to find its unique sink \cite{schurrthesis, StickneyWatson, upperuso}. The most studied model to solve such problems is to do \emph{vertex evaluations}, which are queries to an oracle that return the orientation of the edges around a vertex. Sometimes, \emph{edge evaluations}, which are queries to an oracle that return the orientation of a single edge, are also studied, specially for grid USOs, but this is not the focus of this paper \cite[Chapter 3]{grids}. We usually measure the complexity of such algorithms by the number of oracle queries performed. Since in all known reductions to USO vertex evaluations can be implemented in polynomial time, and since all known USO algorithms efficiently calculate the next vertex to query, the true computational complexity is usually only a polynomial factor larger than the query complexity.

USO sink-finding, as introduced above, is a promise problem, as we assume that the oracle corresponds to a unique sink orientation, although checking this property is \texttt{coNP}-complete (when the oracle is represented as a Boolean circuit) \cite[Theorem 5]{gartnerrec}. This means that the algorithm assumes that the input corresponds to a USO, and so we do not care about what it returns when it receives something else.

One can define a non-promise version, where the oracle does not need to encode a USO \cite[Definition 21]{FEARNLEY20201}. The output is allowed to be either a sink or an efficiently verifiable certificate that the oracle does not encode a USO. The certificate should be defined in a way such that every orientation that is not a USO has such certificate. Hence, given any orientation, a solution always exists, either a sink or a certificate, so this problem is a \emph{total search problem}. The existence of such certificates is already known, the simplest one being called a \emph{clash}. A \emph{clash} consists of two distinct vertices which have the same orientation of vertices around it within the face they span in the hypercube graph.

This non-promise version is important because some problems that reduce to USO are also promise problems, so a violation of the USO property can be used as a violation of the initial problem conditions. For example, it is \texttt{coNP}-complete to test whether a given matrix is a P-matrix \cite[Corollary 1]{Coxson1994ThePP}, and so finding the unique solution to a linear complementary problem (LCP) with a P-matrix is a promise problem \cite{StickneyWatson}. Moreover, a non-promise version of the P-matrix LCP has been studied before, where the goal is to find either a solution to the LCP or a nonpositive principal minor of the matrix \cite[Problem 2.3]{nonpromiseplcp}. Hence, if we do not know if a given matrix is a P-matrix, we can still try to solve LCP by trying to find a sink in the hypercube orientation it reduces to, and if a violation is found, then that shows that the matrix is not a P-matrix (although it is not known how a given clash can be used to find a nonpositive principal minor of the matrix).

Some results about the non-promise version of USO sink-finding have been found in complexity theory, namely that it is in the search complexity class $\texttt{UEOPL}^\texttt{dt}$ (the black-box model of Unique End of Potential Line), according to \cite[Subsection 5.2]{endpotentitalline} and \cite[Subsection 5.1]{FEARNLEY20201}, but as far as we know it has not yet been studied from the angle of combinatorial algorithms and query complexity bounds.

We start in \Cref{background} by defining the notation we use and we provide a formal definition of a unique sink orientation. Then, we define the non-promise version of the problem, which will be the main object of study of the paper. Moreover, we also present some known properties and construction techniques which will be useful in the remaining sections.

Afterwards, in \Cref{non-promise}, we show how most results for the promise version still hold for the non-promise one. This includes the $\Omega\left(n^2/\log(n)\right)$ lower bound of Schurr and Szabó \cite[Theorem 9]{quadlower} and the upper bound of $O\left(1.61^n\right)$ by using an adaptation of the Improved Fibonacci Algorithm, originally suggested by Szabó and Welzl \cite[Theorem 4.1]{upperuso}. Then, we show that deterministic algorithms require the same number of queries for both the promise and the non-promise version in dimension up to $4$. However, we show that the same does not hold in the randomized setting, where the non-promise versions require more queries in expectation. While the query complexity is the same for the promise and the non-promise version for $n = 4$, we show that the SevenStepsToHeaven algorithm cannot be adapted to the non-promise setting. We thus introduce a program that finds a new algorithm achieving the same query complexity.

After that, in \Cref{section_resproof}, we investigate the proof complexity of the unsatisfiable CNF formula corresponding to the non-promise version of USO sink-finding. We construct an efficient resolution proof of such CNF formula. The existence of such proof was already known since it is known that USO lies in the search complexity class $\texttt{PLS}^\texttt{dt}$, however no concrete proof was known \cite{endpotentitalline, resproof, FEARNLEY20201}. This resolution proof can possibly provide some insights in proof systems that can efficiently prove the non-promise USO formula, which can be useful if one tries to create a proof system that characterizes the class of problems reducible to USO.

Finally, in \Cref{cert-chapter}, we generalize the notion of a clash to other minimal certificates that show that a given partial orientation (that comes from some vertex evaluations) is not extendable to a USO. Such certificates appear when using an algorithm from the promise version in the non-promise version, because if it fails to find a sink, it must be because the orientation found is not extendable to any USO (but it is possible that no clash was found). Hence, we study such certificates to better understand the difference between the promise and non-promise versions. Schurr already showed that such certificates with $3$ known vertices and no clashes do not exist \cite[Lemma 4.22]{schurrthesis}, and we fully categorize what happens in configurations with $4$ known vertices and no clashes. Moreover, if one understands these certificates, one could try to find upper bounds on the number of queries required to find a clash given any certificate, and that could create upper bounds on the difference between the number of queries required in the promise and non-promise versions. Another possible use of these certificates is to define new non-promise versions of USO sink-finding which accept a larger number of violations beside clashes.

\section{Preliminaries}\label{background}

\subsection{Notation}

We denote $\{1, 2, \dots, n\}$ by $[n]$. We use $\vee, \wedge, \oplus$ to denote the bitwise operations OR, AND and XOR, respectively, when applied to bitstrings. We use $u \cdot v$ as the concatenation of the bitstrings $u$ and $v$. Let $e_i$ be a bitstring with a $1$ in the $i$th position and $0$'s everywhere else, with the appropriate size. Given a bitstring $x$ with $n$ entries and $I \subseteq [n]$, let $x_I$ be the bitstring resulting from $x$ by deleting all the entries with indices not in $I$. Furthermore, $x_i \coloneq x_{\{i\}}$.

Let $\mathfrak{C} \coloneq \{0, 1\}^n$ be a representation of the vertices of an $n$-dimensional hypercube. By abuse of notation, it will also be used to represent the hypercube itself. The edges of $\mathfrak{C}$ consist of $(u, v) \in \mathfrak{C}^2$ such that $u \oplus v = e_i$ for some $i \in [n]$. Call such edges $i$-edges. An orientation $\psi$ of the edges of $\mathfrak{C}$ is represented as $\psi(u, v) = 1$ and $\psi(v, u) = 0$ when the edge is oriented from $u$ to $v$.

\begin{definition}[Face]
    Given an $n$-dimensional hypercube $\mathfrak{C}$, a vertex $u \in \mathfrak{C}$ and a subset of the dimensions $I \subseteq [n]$, let $\mathfrak{C}^I_u$ denote the subgraph of $\mathfrak{C}$ induced by the subset of vertices $\{v \in \mathfrak{C} \mid v_i = u_i, \forall i \in [n] \backslash I\}$. This is called the \emph{face} containing $u$ and spanned by $I$. If $I = [n]$, then $\mathfrak{C}^I_u = \mathfrak{C}$. If $|I| = n-1$, then $\mathfrak{C}^I_u$ is called a \emph{facet}. If $|I| = 1$, then $\mathfrak{C}^I_u$ is called an \emph{edge}. If $|I| = 0$, then $\mathfrak{C}^I_u$ is called a \emph{vertex}.
\end{definition}

\begin{definition}[Antipodal Faces]
    Two faces $\mathfrak{C}_u^I$ and $\mathfrak{C}_v^J$ are called \emph{antipodal} if $I = J$ and $u \oplus 1^n \in \mathfrak{C}_v^J$.
\end{definition}

\begin{definition}[Sink]
    A vertex $v$ in a directed graph $G$ is called a \emph{sink} if it has outdegree $0$, i.e., it has no outgoing edges.
\end{definition}

\subsection{Unique Sink Orientations}

We can now formally define unique sink orientations (USOs).

\begin{definition}[Unique Sink Orientation (USO)]
    An orientation $\psi$ of the edges of $\mathfrak{C}$ is called a \emph{unique sink orientation} if all faces of $\mathfrak{C}$ contain a unique sink.
\end{definition}

For example, \Cref{uso_example} is a USO and \Cref{nonuso-example} is not a USO, as the blue face contains two sinks.

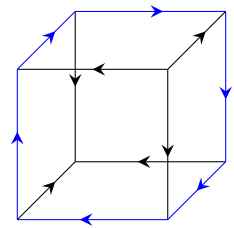
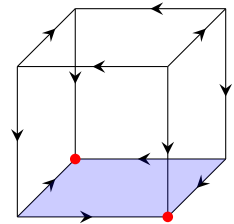
\begin{figure}[htbp]
    \centering
    \begin{subfigure}[t]{0.45\textwidth}
        \centering
        \begin{tikzpicture}
            \draw[middlearrow={<}] (0,0,0) -- (2,0,0);
            \draw[blue, middlearrow={<}] (2,0,0) -- (2,2,0);
            \draw[blue, middlearrow={<}] (2,2,0) -- (0,2,0);
            \draw[middlearrow={>}] (0,2,0) -- (0,0,0);
            \draw[blue, middlearrow={<}] (0,0,2) -- (2,0,2);
            \draw[middlearrow={<}] (2,0,2) -- (2,2,2);
            \draw[middlearrow={>}] (2,2,2) -- (0,2,2);
            \draw[blue, middlearrow={<}] (0,2,2) -- (0,0,2);
            \draw[middlearrow={<}] (0,0,0) -- (0,0,2);
            \draw[blue, middlearrow={>}] (2,0,0) -- (2,0,2);
            \draw[blue, middlearrow={<}] (0,2,0) -- (0,2,2);
            \draw[middlearrow={<}] (2,2,0) -- (2,2,2);
        \end{tikzpicture}
        \caption{A USO with a cycle}
        \label{uso_example}
    \end{subfigure}
    \hfill
    \begin{subfigure}[t]{0.45\textwidth}
        \centering
        \begin{tikzpicture}
            \draw[middlearrow={<}] (0,0,0) -- (2,0,0);
            \draw[middlearrow={<}] (2,0,0) -- (2,2,0);
            \draw[middlearrow={>}] (2,2,0) -- (0,2,0);
            \draw[middlearrow={>}] (0,2,0) -- (0,0,0);
            \draw[middlearrow={>}] (0,0,2) -- (2,0,2);
            \draw[middlearrow={<}] (2,0,2) -- (2,2,2);
            \draw[middlearrow={>}] (2,2,2) -- (0,2,2);
            \draw[middlearrow={>}] (0,2,2) -- (0,0,2);
            \draw[middlearrow={<}] (0,0,0) -- (0,0,2);
            \draw[middlearrow={>}] (2,0,0) -- (2,0,2);
            \draw[middlearrow={<}] (0,2,0) -- (0,2,2);
            \draw[middlearrow={<}] (2,2,0) -- (2,2,2);
            \fill[blue, opacity=0.2] (0,0,0) -- (2,0,0) -- (2,0,2) -- (0,0,2) -- cycle;
            \fill[red] (0,0,0) circle (2pt);
            \fill[red] (2,0,2) circle (2pt);
        \end{tikzpicture}
        \caption{An orientation which is not a USO}
        \label{nonuso-example}
    \end{subfigure}
    \caption{Possible orientations of a $3$-dimensional hypercube.}
\end{figure}

\begin{definition}[Outmap]
    Given an orientation $\psi$ of $\mathfrak{C}$, define the \emph{outmap} of $\psi$ as the function $s \colon \mathfrak{C} \to \{0, 1\}^n$ such that $s(u)_i = \psi(u, u \oplus e_i)$. 
\end{definition}

It is possible that an outmap does not define a proper orientation in $\mathfrak{C}$, as one edge might be oriented in different directions depending on the endpoints. Moreover, if an outmap defines an orientation, then the outmap fully determines that orientation. Furthermore, Szabó and Welzl created a simple condition to check whether an outmap determines a USO (presented later in \Cref{fund}) \cite[Lemma 2.3]{upperuso}. Due to this relation between outmaps and orientations, in this paper sometimes we refer to the outmaps already as a USO.

Unique sink orientations have some properties, which are presented below.

\begin{lemma}\label{flipall}\cite[Lemma 2.1]{upperuso}
    Let $s \colon \mathfrak{C} \to \{0, 1\}^n$ be the outmap of a USO. Define $s' \colon \mathfrak{C} \to \{0, 1\}^n$ as $s'(u) = s(u) \oplus z$ for some $z \in \{0, 1\}^n$. Then, $s'$ is a USO.
\end{lemma}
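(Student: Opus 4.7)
The plan is to prove the lemma in two parts: first verify that $s'$ still defines a valid orientation, then show every face has a unique sink under it. For validity, adjacent $u, v$ with $u \oplus v = e_j$ satisfy $s(u)_j \ne s(v)_j$; since XOR-ing both bits by $z_j$ preserves this disagreement, we get $s'(u)_j \ne s'(v)_j$ as well.

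For the USO property, I will induct on the Hamming weight of $z$. The base case $z = 0^n$ is immediate, and the inductive step reduces to a single-coordinate flip: if $\tilde s$ is a USO, then $\tilde s \oplus e_i$ is a USO for any $i \in [n]$. To prove this, I would fix an arbitrary face $F = \mathfrak{C}^I_w$. If $i \notin I$, the orientation restricted to $F$ is unchanged and the unique sink is preserved. If $i \in I$, the strategy is to decompose $F$ along the $i$-direction into antipodal subfacets $F_0, F_1$ (both of which are faces of $\mathfrak{C}$, hence USOs under $\tilde s$) and track where the sinks go.

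Concretely, let $w^*$ be the unique sink of $F$ under $\tilde s$, with $w^* \in F_0$ (up to relabeling the two halves). Since $w^*$ has no outgoing edges within $F \supseteq F_0$, it is also the sink of $F_0$, and its $i$-edge is incoming, i.e.\ $\tilde s(w^*)_i = 0$. Let $w^{**}$ be the unique sink of $F_1$ under $\tilde s$. Because $F$ already has $w^*$ as its unique sink, $w^{**}$ must have some outgoing edge within $F$; being a sink of $F_1$, this edge must lie in the $i$-direction, forcing $\tilde s(w^{**})_i = 1$. Now a vertex $v \in F$ is a sink of $F$ under $\tilde s \oplus e_i$ iff $\tilde s(v)_j = 0$ for all $j \in I \setminus \{i\}$ and $\tilde s(v)_i = 1$. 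The first condition forces $v = w^*$ when $v \in F_0$ and $v = w^{**}$ when $v \in F_1$; the second then picks out exactly $w^{**}$ as the unique sink.

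The main obstacle is the bookkeeping of sinks across the subface decomposition, but the key observation --- that a sink of a face is automatically the sink of any subfacet containing it --- keeps the case analysis clean and makes the single-coordinate flip resolve in one round of case splits. Everything else is just iterating this single-flip statement along the ones of $z$.
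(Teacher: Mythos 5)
Your proof is correct. Since the paper cites this lemma to Szabó and Welzl without reproducing a proof, there is no internal proof to compare against, but your argument is a valid self-contained one: well-definedness follows because XOR-ing by $z_j$ preserves the disagreement $s(u)_j \neq s(v)_j$ on an edge $u \oplus v = e_j$; the reduction to a single-coordinate flip by induction on the Hamming weight of $z$ is standard; and the accounting in the flip step is clean --- the sink of a face $F$ spanning $i$ moves from $w^*$ (sink of the subfacet $F_0$, with $\tilde s(w^*)_i = 0$ since it is the sink of all of $F$) to $w^{**}$ (sink of the antipodal subfacet $F_1$, which must have $\tilde s(w^{**})_i = 1$ precisely because it is not the sink of $F$), and after flipping the $i$-bit the first- and second-condition bookkeeping isolates $w^{**}$ as the unique sink. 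Worth noting: once \Cref{fund} is available, there is a one-line algebraic alternative, since $s'(u) \oplus s'(v) = (s(u) \oplus z) \oplus (s(v) \oplus z) = s(u) \oplus s(v)$, so the quantity $(s'(u) \oplus s'(v)) \wedge (u \oplus v)$ is unchanged for every pair, and the characterization transfers verbatim. Your face-by-face argument has the advantage of not depending on that characterization (which in Szabó--Welzl is stated as a later lemma), but if \Cref{fund} is taken as known the algebraic route is substantially shorter.
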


Therefore, it is possible to give a full categorization of the outmaps of USOs, as follows.

\begin{theorem}\label{fund}\cite[Lemma 2.3]{upperuso}
    An outmap $s \colon \mathfrak{C} \to \{0, 1\}^n$ represents a USO if and only if $(s(u) \oplus s(v)) \wedge (u \oplus v) \neq 0^n$ for all distinct $u, v \in \mathfrak{C}$. Equivalently, $s$ represents a USO if and only if for all distinct $u, v \in \mathfrak{C}$, there exists some $i \in [n]$ such that $u_i \neq v_i \wedge s(u)_i \neq s(v)_i$.
\end{theorem}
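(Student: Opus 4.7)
My plan is to prove the two directions using very different tools: Lemma~\ref{flipall} (flipall) handles the forward direction, while a counting/bijection argument on faces handles the backward direction.

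For the forward direction, I would argue by contradiction. Given distinct $u, v \in \mathfrak{C}$, let $I = \{i \in [n] : u_i \neq v_i\}$, which is nonempty, and let $F = \mathfrak{C}^I_u = \mathfrak{C}^I_v$ be the smallest face containing both. Suppose toward contradiction that $s(u)_i = s(v)_i$ for every $i \in I$. Define $z \in \{0,1\}^n$ by $z_i = s(u)_i$ for $i \in I$ and $z_i = 0$ otherwise, and set $s' = s \oplus z$. By Lemma~\ref{flipall}, $s'$ is still the outmap of a USO, and by construction $s'(u)_i = s'(v)_i = 0$ for all $i \in I$, so both $u$ and $v$ have no outgoing edges within $F$ under $s'$. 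That makes both sinks of $F$, contradicting the unique-sink property of $s'$. Hence such distinct $u, v$ cannot exist.

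For the backward direction, I first need to verify that $s$ even defines a proper orientation, since the theorem statement warned this is not automatic. Applying the condition to a pair $u, v = u \oplus e_i$ forces $s(u)_i \neq s(u \oplus e_i)_i$, which is exactly the consistency requirement for the edge between them, so $s$ does induce an orientation $\psi$. Next I would fix any face $F = \mathfrak{C}^I_w$ and consider the map $\sigma \colon F \to \{0,1\}^I$ given by $\sigma(u) = s(u)_I$. For distinct $u, v \in F$, every coordinate on which they differ lies in $I$, so the hypothesis supplies some $i \in I$ with $s(u)_i \neq s(v)_i$, giving injectivity of $\sigma$. Since $|F| = 2^{|I|} = |\{0,1\}^I|$, injectivity upgrades to bijectivity, and the unique preimage of $0^I$ is the unique sink of $F$.

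I do not expect a serious obstacle: the forward direction works cleanly once one thinks to apply flipall, and the backward direction is a standard pigeonhole argument once injectivity is in hand. The only subtlety worth flagging is the need to explicitly check, in the backward direction, that $s$ orients each edge consistently before talking about sinks of faces — this is where one uses the $|I| = 1$ case of the hypothesis.
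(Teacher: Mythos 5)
The paper does not supply its own proof of this statement; it imports it from Szabó and Welzl \cite[Lemma 2.3]{upperuso}. Your argument is a correct self-contained proof, and both halves are clean.

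Forward direction: for distinct $u,v$ with $I=\{i:u_i\neq v_i\}\neq\emptyset$, if $s(u)_I=s(v)_I$ then xoring by $z$ (with $z_I=s(u)_I$, $z_{[n]\setminus I}=0$) produces, via \Cref{flipall}, a USO $s'$ in which both $u$ and $v$ are sinks of the face $\mathfrak{C}^I_u$ — contradiction. That is exactly right, and using \Cref{flipall} here is the natural move; no circularity arises since that lemma is independent.

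Backward direction: you correctly flag and then discharge the consistency subtlety. Applying the hypothesis to an adjacent pair $u, u\oplus e_i$ forces $s(u)_i\neq s(u\oplus e_i)_i$, which is precisely the condition for the outmap to induce a well-defined orientation. Then for any face $\mathfrak{C}^I_w$, the map $u\mapsto s(u)_I$ restricted to the face is injective by the hypothesis (any $i$ witnessing non-clash lies in $I$), hence bijective onto $\{0,1\}^I$ by a cardinality count, hence there is exactly one vertex with all-zero outmap on $I$, i.e., a unique sink. This is sound.

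One small remark: the theorem statement also asserts that the bitwise condition $(s(u)\oplus s(v))\wedge(u\oplus v)\neq 0^n$ and the quantified condition ``there exists $i$ with $u_i\neq v_i$ and $s(u)_i\neq s(v)_i$'' are equivalent; your proof implicitly works with the quantified form throughout. That equivalence is immediate (a bitwise AND is nonzero iff some coordinate is simultaneously $1$ in both operands), but if you wanted the write-up fully airtight you could spend one sentence on it. No genuine gap.
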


As a consequence of this theorem, we call a pair of distinct vertices $u, v \in \mathfrak{C}$ a \emph{clash} if they do not satisfy the property from \Cref{fund}.

\begin{definition}[Clash]\label{clash}
    Given an outmap $s \colon \mathfrak{C} \to \{0, 1\}^n$, a pair of distinct vertices $u, v \in \mathfrak{C}$ is denoted as a \emph{clash} if $(s(u) \oplus s(v)) \wedge (u \oplus v) = 0^n$. 
\end{definition}

As mentioned in \Cref{introduction}, the most classic problem in USOs is to find its global sink by querying an oracle that encodes the outmap of the vertices, in an operation denoted by \emph{query} \cite{upperuso}. The goal is to use the minimum number of queries until the sink is found and evaluated.

\begin{problem}[\textsc{Sink-USO}]\label{originalproblem}\cite{upperuso}
    Given access to an oracle that encodes an outmap $s \colon \mathfrak{C} \to \{0, 1\}^n$ which represents a USO, query the sink, i.e., the vertex $u \in \mathfrak{C}$ such that $s(u) = 0^n$.
\end{problem}

\textsc{Sink-USO} is a promise problem, as it assumes that the input outmap is a USO \cite[Definition 2.4.1]{endpotentitalline}. Moreover, checking this property is \texttt{coNP}-complete (when the oracle is represented by a Boolean circuit) \cite[Theorem 5]{gartnerrec}, and if the oracle is just a black-box, then it is easy to see that $2^n$ queries are required.

Based on \Cref{fund}, Fearnley et al. created a non-promise version of USO sink-finding \cite[Definition 21]{FEARNLEY20201}, which is replicated below.

\begin{problem}[\textsc{Sink-or-Clash}]\label{sinkorclash}
    Given access to an oracle that returns a function $s \colon \mathfrak{C} \to \{0, 1\}^n$, query either
    \begin{enumerate}
        \item a vertex $u \in \mathfrak{C}$ such that $s(u) = 0^n$.
        \item distinct vertices $u, v \in \mathfrak{C}$ such that $(s(u) \oplus s(v)) \wedge (u \oplus v) = 0^n$.
    \end{enumerate}
\end{problem}

\textsc{Sink-or-Clash} receives an outmap (possibly not a USO) and finds either a sink (guaranteed to exist if $s$ is a USO) or a clash (which must exist if $s$ is not a USO). It still uses an oracle to evaluate the outmap $s$, and either the sink or the two vertices in the clash must be evaluated. As a solution always exists, this is a total search problem.

\textsc{Sink-or-Clash} is the main focus of this paper. In order to study it, we define the minimum number of queries required to solve it.

\begin{definition}\label{def_t(n)}
    Let $t(n)$ be the minimum number of queries to deterministically solve \textsc{Sink-or-Clash} in an $n$-dimensional hypercube. Formally, given a deterministic algorithm $\mathcal{A}$ that solves \textsc{Sink-or-Clash} and an outmap $s \colon \mathfrak{C} \to \{0, 1\}^n$, let $t(\mathcal{A}, s)$ be the number of queries that algorithm $\mathcal{A}$ requires on the input outmap $s$. Then, \[t(n) = \min_\mathcal{A} \max_s t(\mathcal{A}, s).\]

    Similarly, define $\tilde{t}(n)$ to be the minimum number of queries a randomized algorithm needs in expectation to solve \textsc{Sink-or-Clash} in an $n$-dimensional hypercube. Analogously, given a randomized algorithm $\mathcal{A}$ that solves \textsc{Sink-or-Clash} and an outmap $s \colon \mathfrak{C} \to \{0, 1\}^n$, let $\tilde{t}(\mathcal{A}, s)$ be a random variable representing the number of queries that algorithm $\mathcal{A}$ requires on the input outmap $s$ in a particular instance. Then, \[\tilde{t}(n) = \min_\mathcal{A} \max_s \mathbb{E}(t(\mathcal{A}, s)).\] 
\end{definition}

We define the same notation for the promise problem Sink-USO.

\begin{definition}
    Similarly to \Cref{def_t(n)}, let $q(n)$ be the minimum number of queries to deterministically solve \textsc{Sink-USO} in an $n$-dimensional USO. Analogously, let $\tilde{q}(n)$ minimum number of queries a randomized algorithm needs in expectation to solve \textsc{Sink-USO} in an $n$-dimensional USO.
\end{definition}

The functions $q(n)$ and $\tilde{q}(n)$ have already been studied before. Let us summarize the known bounds.

The only values that are known exactly are $q(1) = 2$, $q(2) = 3$, $q(3) = 5$, $q(4) = 7$ \cite[Subsection 5.3]{schurrthesis} and $\tilde{q}(1) = 2$, $\tilde{q}(2) = 43/20$, $\tilde{q}(3) = 4074633/1369468$ \cite[Section 5]{upperuso}. Moreover, it is also known that $9 \leq q(5) \leq 12$ according to \cite[Proposition 5.8]{schurrthesis} and \cite[Theorem 3.3]{smallalg}.

In the deterministic setting, the best known lower bound was found by Schurr and Szabó by creating an adversary algorithm which ensures $q(n) \in \Omega\left(n^2 / \log(n)\right)$ \cite[Theorem 9]{quadlower}. The best known upper bound follows from the Improved Fibonacci Seesaw algorithm designed by Szabó and Welzl, which shows that $q(n) \in O\left(1.61^n\right)$ \cite[Theorem 4.1]{upperuso}.

In the randomized setting, the best upper bound comes from the Product algorithm, which states that $q(n) \leq q(k)q(n-k)$ and $\tilde{q}(n) \leq \tilde{q}(k)\tilde{q}(n-k)$ \cite[Lemma 3.2]{upperuso}. When using $\tilde{q}(3) = 4074633/1369468$, it is possible to obtain $\tilde{q}(n) \leq \tilde{q}(3)^{\lceil n/3 \rceil} \in O\left(1.438^n\right)$ by induction.

\subsection{Construction Techniques}

Matoušek showed that the number of $n$-dimensional USOs is $n^{\Theta(2^n)}$ \cite[Theorem 1]{usocount}. Due to this large number, it is infeasible to enumerate them in large dimensions. Hence, it is useful to study some techniques which allow the construction of USOs from other USOs. We will briefly explain some already known constructions without proof, which will be important for the results presented in the rest of the paper, especially in \Cref{cert-chapter}.

\begin{lemma}[Inherited Orientation]\label{inheorie}\cite[Lemma 3.1]{upperuso}
    Let $s$ be the outmap of an $n$-dimensional USO in $\mathfrak{C}$. Consider some subset $I \subseteq [n]$ of the dimensions. Let $\mathfrak{C}'$ be the $|I|$-dimensional hypercube and define the outmap $s' \colon \mathfrak{C}' \to \{0, 1\}^{|I|}$ as follows. Given $u \in \mathfrak{C}$, let $u'$ be the sink of $\mathfrak{C}^{[n]\backslash I}_u$ and define $s'(u_I) = s(u')_I$. Then, $s'$ is a USO.
\end{lemma}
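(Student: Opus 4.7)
The plan is to verify the characterization from \Cref{fund} applied to $s'$. Concretely, given distinct $a, b \in \mathfrak{C}'$, I need to exhibit an index $i \in I$ such that $a_i \neq b_i$ and $s'(a)_i \neq s'(b)_i$. The natural move is to lift $a$ and $b$ back to $\mathfrak{C}$: let $u$ be the sink of the face $\mathfrak{C}^{[n]\backslash I}_{\tilde u}$ for any lift $\tilde u$ with $\tilde u_I = a$, and define $v$ analogously for $b$. Note that the definition of $s'$ makes sense because the sink $u$ (resp.\ $v$) depends only on $u_I = a$ (resp.\ $v_I = b$), so $s'(a) = s(u)_I$ and $s'(b) = s(v)_I$ are well defined. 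Since $a \neq b$, we have $u \neq v$, so \Cref{fund} applied to the USO $s$ yields some coordinate $i \in [n]$ with $u_i \neq v_i$ and $s(u)_i \neq s(v)_i$.

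The crux of the argument is then to show $i \in I$. I plan to argue this by contradiction: assume $i \in [n]\setminus I$. Then the edge from $u$ in direction $e_i$ stays inside the face $\mathfrak{C}^{[n]\backslash I}_u$. Because $u$ is the sink of that face, this edge must point \emph{into} $u$, which forces $s(u)_i = 0$. The same reasoning applied to $v$ and $\mathfrak{C}^{[n]\backslash I}_v$ gives $s(v)_i = 0$, contradicting $s(u)_i \neq s(v)_i$. Hence $i \in I$, and since $u_i \neq v_i$ means $a_i \neq b_i$ while $s(u)_i \neq s(v)_i$ means $s'(a)_i \neq s'(b)_i$, the condition of \Cref{fund} is satisfied.

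Finally, one might worry whether $s'$ is actually the outmap of an \emph{orientation} of $\mathfrak{C}'$, i.e., whether $s'(a)_i = 1 - s'(a \oplus e_i)_i$ for every $a \in \mathfrak{C}'$ and $i \in I$. This is automatic from the argument above: applying the displayed property of \Cref{fund} with $b = a \oplus e_i$ (where $i \in I$) forces the distinguishing coordinate to be $i$ itself, giving exactly the required edge-consistency relation. The main obstacle I anticipate is really just carefully handling the lifting — making sure the definition of $u$ and $v$ does not depend on the choice of representative lift and that the sink in the $[n]\setminus I$ direction pins down the $i$-th outmap entry for $i \notin I$ — but this is straightforward once the USO property of the big cube is invoked on the correct pair of vertices.
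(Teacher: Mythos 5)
Your proof is correct, and the paper itself omits a proof of this lemma, deferring to Szabó and Welzl's original argument (\cite[Lemma 3.1]{upperuso}). Your approach is the standard one: lift the two vertices $a,b\in\mathfrak{C}'$ to the sinks $u,v$ of the corresponding $[n]\setminus I$-faces, invoke the characterization of \Cref{fund} on the big cube to get a distinguishing coordinate $i$, and then rule out $i\in[n]\setminus I$ using the fact that both $u$ and $v$ are sinks of their faces, which forces $s(u)_i=s(v)_i=0$ there. The well-definedness check (that the sink $u'$ depends only on $u_I$) and the closing remark that edge-consistency of $s'$ is subsumed by the \Cref{fund} condition are both correctly handled; the latter is in fact automatic since \Cref{fund} already certifies that $s'$ is the outmap of a genuine orientation.
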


This construction allows the construction of smaller USOs from larger ones, by looking only at the sinks of some faces. It will be essential for the Product Algorithm stated in \Cref{prodalg}.

\begin{figure}[htbp]
    \centering
    \begin{subfigure}[t]{0.45\textwidth}
        \centering
        \begin{tikzpicture}
            \draw[middlearrow={<}] (0,0,0) -- (2,0,0);
            \draw[middlearrow={<}] (2,0,0) -- (2,2,0);
            \draw[middlearrow={<}] (2,2,0) -- (0,2,0);
            \draw[middlearrow={>}] (0,2,0) -- (0,0,0);
            \draw[middlearrow={<}] (0,0,2) -- (2,0,2);
            \draw[middlearrow={<}] (2,0,2) -- (2,2,2);
            \draw[middlearrow={>}] (2,2,2) -- (0,2,2);
            \draw[middlearrow={<}] (0,2,2) -- (0,0,2);
            \draw[blue,middlearrow={<}] (0,0,0) -- (0,0,2);
            \draw[blue,middlearrow={>}] (2,0,0) -- (2,0,2);
            \draw[blue,middlearrow={<}] (0,2,0) -- (0,2,2);
            \draw[blue,middlearrow={<}] (2,2,0) -- (2,2,2);
            \fill [blue] (0, 2, 0) circle (2pt);
            \fill [blue] (0, 0, 0) circle (2pt);
            \fill [blue] (2, 2, 0) circle (2pt);
            \fill [blue] (2, 0, 2) circle (2pt);
        \end{tikzpicture}
        \caption{A USO}
        \label{uso_example2}
    \end{subfigure}
    \hfill
    \begin{subfigure}[t]{0.45\textwidth}
        \centering
        \begin{tikzpicture}
            \draw[middlearrow={<}] (0,0) -- (2,0);
            \draw[middlearrow={<}] (2,0) -- (2,2);
            \draw[middlearrow={<}] (2,2) -- (0,2);
            \draw[middlearrow={>}] (0,2) -- (0,0);
        \end{tikzpicture}
        \caption{An inherited orientation from \Cref{uso_example2}}
        \label{inherited}
    \end{subfigure}
    \caption{Inherited Orientation}
\end{figure}
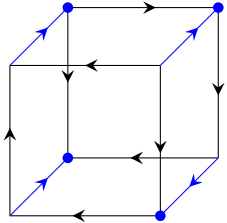
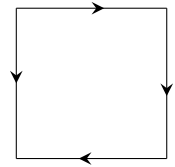

\begin{lemma}[Product USO]\label{product}\cite[Lemma 4.20]{schurrthesis}
    Let $s$ be the outmap of a $k$-dimensional USO in the hypercube $\mathfrak{C}_k$. For each $u \in \mathfrak{C}_k$, let $s_u$ be the outmap of a $(n-k)$ dimensional USO. Let $\mathfrak{C}$ be the $n$-dimensional hypercube. Then, the outmap $\tilde{s} \colon \mathfrak{C} \to \{0, 1\}^n$ defined by $\tilde{s}(u \cdot v) = s(u) \cdot s_u(v)$ for $u \in \mathfrak{C}_k, v \in \mathfrak{C}_{n-k}$ is a USO in $\mathfrak{C}$.
\end{lemma}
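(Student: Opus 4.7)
The plan is to prove this directly by verifying the outmap characterization of \Cref{fund}: I must show that for any two distinct vertices $w_1 = u_1 \cdot v_1$ and $w_2 = u_2 \cdot v_2$ in $\mathfrak{C}$, the bitstring $(\tilde{s}(w_1) \oplus \tilde{s}(w_2)) \wedge (w_1 \oplus w_2)$ is nonzero. The key observation is that both the XOR and the AND respect the concatenation split at coordinate $k$, so
\[
(\tilde{s}(w_1) \oplus \tilde{s}(w_2)) \wedge (w_1 \oplus w_2) = \bigl((s(u_1) \oplus s(u_2)) \wedge (u_1 \oplus u_2)\bigr) \cdot \bigl((s_{u_1}(v_1) \oplus s_{u_2}(v_2)) \wedge (v_1 \oplus v_2)\bigr).
\]
Thus the claim reduces to showing that at least one of the two concatenated halves is nonzero.

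I then split into two cases. If $u_1 \neq u_2$, then since $s$ is a $k$-dimensional USO, \Cref{fund} applied to $s$ gives $(s(u_1) \oplus s(u_2)) \wedge (u_1 \oplus u_2) \neq 0^k$, so the first half of the concatenation is nonzero and we are done. If $u_1 = u_2$, then distinctness of $w_1$ and $w_2$ forces $v_1 \neq v_2$, and the second half simplifies to $(s_{u_1}(v_1) \oplus s_{u_1}(v_2)) \wedge (v_1 \oplus v_2)$, which is nonzero by \Cref{fund} applied to the USO $s_{u_1}$. In either case the full bitstring is nonzero, so by \Cref{fund} the outmap $\tilde{s}$ represents a USO.

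Two small sanity checks are worth doing explicitly. First, I should verify that $\tilde{s}$ actually defines an orientation of $\mathfrak{C}$, i.e., that on each edge the two endpoints assign opposite values to the shared coordinate; this follows because edges of $\mathfrak{C}$ either have the form $(u\cdot v, u \cdot v')$ with $v \oplus v' = e_i$ (and the relevant coordinate is determined by $s_u$, which is an orientation) or $(u \cdot v, u' \cdot v)$ with $u \oplus u' = e_j$ (and the relevant coordinate is determined by $s$). Second, the characterization in \Cref{fund} already encodes both the orientation property and the unique-sink property in a single condition on the outmap, so once the AND is shown to be nonzero for every pair of distinct vertices, nothing further is needed.

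I do not expect a real obstacle here: the result is essentially a bookkeeping exercise about concatenation interacting cleanly with bitwise XOR and AND, combined with two invocations of \Cref{fund}. The only place to be careful is to not forget the case $u_1 = u_2$, where the first half is automatically $0^k$ and all of the work has to be done by the inner USO $s_{u_1}$.
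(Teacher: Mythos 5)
Your proof is correct. The paper does not prove this lemma at all (it is cited from Schurr's thesis and stated explicitly as one of several ``known constructions without proof''), but your argument is the natural one: the concatenation decomposition
\[
(\tilde{s}(w_1) \oplus \tilde{s}(w_2)) \wedge (w_1 \oplus w_2) = \bigl((s(u_1) \oplus s(u_2)) \wedge (u_1 \oplus u_2)\bigr) \cdot \bigl((s_{u_1}(v_1) \oplus s_{u_2}(v_2)) \wedge (v_1 \oplus v_2)\bigr)
\]
is exactly right, the two cases $u_1 \neq u_2$ and $u_1 = u_2$ are handled correctly by invoking \Cref{fund} for $s$ and for $s_{u_1}$ respectively, and your observation that \Cref{fund} already subsumes the check that $\tilde{s}$ defines a consistent orientation (since adjacent vertices are a special case of the condition) is accurate and worth stating.
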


This construction allows us to construct larger USOs from smaller ones.

In the case $k = 1$, all $1$-edges are oriented in the same direction, and both facets can be any USO independently from each other. Given an USO, if there exists a dimension $i$ in which all $i$-edges are oriented in the same direction, the USO is called \emph{combed} \cite[Definition 4.27]{schurrthesis}.

In the case $k = n-1$, we have two copies of the same USO in the antipodal facets $\mathfrak{C}_{0^n}^{[n-1]}$ and $\mathfrak{C}_{1^n}^{[n-1]}$ and all $n$-edges can be oriented in any direction independently from each other.

\begin{figure}[htbp]
    \centering
    \begin{minipage}[t]{0.45\textwidth}
        \centering
        \begin{tikzpicture}
            \draw[middlearrow={<}] (0,0,0) -- (2,0,0);
            \draw[blue, middlearrow={<}] (2,0,0) -- (2,2,0);
            \draw[middlearrow={>}] (2,2,0) -- (0,2,0);
            \draw[blue, middlearrow={>}] (0,2,0) -- (0,0,0);
            \draw[middlearrow={>}] (0,0,2) -- (2,0,2);
            \draw[blue, middlearrow={<}] (2,0,2) -- (2,2,2);
            \draw[middlearrow={>}] (2,2,2) -- (0,2,2);
            \draw[blue, middlearrow={>}] (0,2,2) -- (0,0,2);
            \draw[middlearrow={<}] (0,0,0) -- (0,0,2);
            \draw[middlearrow={<}] (2,0,0) -- (2,0,2);
            \draw[middlearrow={>}] (0,2,0) -- (0,2,2);
            \draw[middlearrow={<}] (2,2,0) -- (2,2,2);
        \end{tikzpicture}
        \caption{\label{combed}A combed USO in the blue edges.}
    \end{minipage}
    \hfill
    \begin{minipage}[t]{0.45\textwidth}
        \centering
        \begin{tikzpicture}
            \coordinate (offset) at (\offsetx, \offsety, \offsetz);
            
            \draw[blue,middlearrow={<}] (0,0,0) -- (2,0,0);
            \draw[blue,middlearrow={<}] (2,0,0) -- (2,2,0);
            \draw[blue,middlearrow={<}] (2,2,0) -- (0,2,0);
            \draw[blue,middlearrow={>}] (0,2,0) -- (0,0,0);
            \draw[blue,middlearrow={<}] (0,0,2) -- (2,0,2);
            \draw[blue,middlearrow={<}] (2,0,2) -- (2,2,2);
            \draw[blue,middlearrow={<}] (2,2,2) -- (0,2,2);
            \draw[blue,middlearrow={>}] (0,2,2) -- (0,0,2);
            \draw[middlearrow={<}] (0,0,0) -- (0,0,2);
            \draw[middlearrow={>}] (2,0,0) -- (2,0,2);
            \draw[middlearrow={<}] (0,2,0) -- (0,2,2);
            \draw[middlearrow={<}] (2,2,0) -- (2,2,2);

            \draw[blue,middlearrow={<}] ($(0,0,0)+(offset)$) -- ($(2,0,0) + (offset)$);
            \draw[blue,middlearrow={<}] ($(2,0,0)+(offset)$) -- ($(2,2,0)+(offset)$);
            \draw[blue,middlearrow={<}] ($(2,2,0)+(offset)$) -- ($(0,2,0)+(offset)$);
            \draw[blue,middlearrow={>}] ($(0,2,0)+(offset)$) -- ($(0,0,0)+(offset)$);
            \draw[blue,middlearrow={<}] ($(0,0,2)+(offset)$) -- ($(2,0,2)+(offset)$);
            \draw[blue,middlearrow={<}] ($(2,0,2)+(offset)$) -- ($(2,2,2)+(offset)$);
            \draw[blue,middlearrow={<}] ($(2,2,2)+(offset)$) -- ($(0,2,2)+(offset)$);
            \draw[blue,middlearrow={>}] ($(0,2,2)+(offset)$) -- ($(0,0,2)+(offset)$);
            \draw[middlearrow={<}] ($(0,0,0)+(offset)$) -- ($(0,0,2)+(offset)$);
            \draw[middlearrow={>}] ($(2,0,0)+(offset)$) -- ($(2,0,2)+(offset)$);
            \draw[middlearrow={<}] ($(0,2,0)+(offset)$) -- ($(0,2,2)+(offset)$);
            \draw[middlearrow={<}] ($(2,2,0)+(offset)$) -- ($(2,2,2)+(offset)$);

            \draw[middlearrow={<}] (0,0,0) -- ($(0,0,0)+(offset)$);
            \draw[middlearrow={<}] (2,0,0) -- ($(2,0,0)+(offset)$);
            \draw[middlearrow={<}] (2,2,0) -- ($(2,2,0)+(offset)$);
            \draw[middlearrow={>}] (0,2,0) -- ($(0,2,0)+(offset)$);
            \draw[middlearrow={<}] (0,0,2) -- ($(0,0,2)+(offset)$);
            \draw[middlearrow={<}] (2,0,2) -- ($(2,0,2)+(offset)$);
            \draw[middlearrow={>}] (2,2,2) -- ($(2,2,2)+(offset)$);
            \draw[middlearrow={<}] (0,2,2) -- ($(0,2,2)+(offset)$);
        \end{tikzpicture}
        \caption{A product of two $2$-dimensional USO}
        \label{productimage}
    \end{minipage}
\end{figure}

\begin{definition}[Hypervertex and Hypersink]\label{hypervertex}\cite[Definition 4.17]{schurrthesis}
    Let $s \colon \mathfrak{C} \to \{0, 1\}^n$ be the outmap of a USO. Given some $u \in \mathfrak{C}$ and $I \subseteq [n]$, the face $\mathfrak{C}_u^I$ is called a \emph{hypervertex} if for all $v, w \in \mathfrak{C}_u^I$, $s(v)_{[n] \backslash I} = s(w)_{[n] \backslash I}$. The face is called a \emph{hypersink} if for all $v \in \mathfrak{C}_u^I$, $s(v)_{[n]\backslash I} = 0^{n - |I|}$.
\end{definition}

\begin{lemma}[Hypervertex Replacement]\label{flip}\cite[Corollary 4.19]{schurrthesis}
    Let $s \colon \mathfrak{C} \to \{0, 1\}^n$ be the outmap of a USO. Let $\mathfrak{C}_u^I$ be a hypervertex, for some $u \in \mathfrak{C}$ and $I \subseteq [n]$. Let $s'$ be the outmap of any USO in $\mathfrak{C}_u^I$. Define the outmap $\tilde{s} \colon \mathfrak{C} \to \{0, 1\}^n$ as $\tilde{s}(v) = s(v)$ for $u \notin \mathfrak{C}_u^I$ and $\tilde{s}(v)_I = s'(v)$, $\tilde{s}(v)_{[n] \backslash I} = s(v)_{[n] \backslash I}$ for $v \in \mathfrak{C}_u^I$. Then, $\tilde{s}$ is a USO.
\end{lemma}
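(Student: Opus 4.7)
\textbf{Proof proposal for Lemma~\ref{flip} (Hypervertex Replacement).}
The plan is to verify the USO characterization of \Cref{fund} for $\tilde{s}$: for every pair of distinct vertices $v, w \in \mathfrak{C}$, I want to exhibit some index $i \in [n]$ with $v_i \neq w_i$ and $\tilde{s}(v)_i \neq \tilde{s}(w)_i$. I would split into three cases depending on how $v$ and $w$ sit relative to the hypervertex $\mathfrak{C}_u^I$, and handle each case by reducing to the USO property of either $s$ or $s'$.

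First, the easy cases. If both $v, w \notin \mathfrak{C}_u^I$, then $\tilde{s}$ agrees with $s$ at both vertices by definition, so the property is inherited directly from the fact that $s$ is a USO. If both $v, w \in \mathfrak{C}_u^I$, then $v$ and $w$ agree on all coordinates in $[n] \setminus I$, so $v \oplus w$ has support contained in $I$. Applying the USO property of $s'$ to the pair $(v_I, w_I)$ inside the $|I|$-dimensional cube $\mathfrak{C}_u^I$ yields some $i \in I$ with $v_i \neq w_i$ and $s'(v_I)_i \neq s'(w_I)_i$; since $\tilde{s}(\cdot)_I = s'(\cdot)$ on $\mathfrak{C}_u^I$, this same $i$ witnesses the property for $\tilde{s}$.

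The main obstacle is the mixed case, where $v \in \mathfrak{C}_u^I$ and $w \notin \mathfrak{C}_u^I$. Here I need to find a discriminating coordinate in $[n] \setminus I$ (because on $I$-coordinates we have replaced $s$ with an essentially arbitrary USO $s'$, losing any control of the original USO's structure). The trick I would use is to transport $v$ to a sibling inside the hypervertex that matches $w$ on $I$. Concretely, let $v^\ast \in \mathfrak{C}_u^I$ be the unique vertex with $v^\ast_I = w_I$; then $v^\ast \oplus w$ is supported entirely in $[n] \setminus I$ and is nonzero since $w \notin \mathfrak{C}_u^I$. Applying \Cref{fund} to $s$ on the pair $(v^\ast, w)$ gives some index $i \in [n] \setminus I$ with $v^\ast_i \neq w_i$ and $s(v^\ast)_i \neq s(w)_i$. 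Since $i \notin I$, the hypervertex property (\Cref{hypervertex}) guarantees $s(v)_i = s(v^\ast)_i$, and by construction $\tilde{s}(v)_i = s(v)_i$ and $\tilde{s}(w)_i = s(w)_i$, so $\tilde{s}(v)_i \neq \tilde{s}(w)_i$. Finally, $v_i = u_i = v^\ast_i \neq w_i$, so $i$ is the desired discriminating coordinate, completing the proof.

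The main conceptual obstacle in this last case is that the hypervertex property is exactly strong enough to let us project the pair $(v,w)$ onto one where both endpoints lie in the same face spanned by $[n] \setminus I$; once that projection works, the separation from $s$ being a USO transfers verbatim to $\tilde{s}$ because the replacement only touches the $I$-coordinates.
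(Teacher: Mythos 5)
Your proof is correct and complete. The paper itself does not give a proof of this lemma --- it cites \cite[Corollary 4.19]{schurrthesis} --- so there is no in-paper argument to compare against; your case split (both outside, both inside, mixed) and the reduction of the mixed case to a sibling $v^\ast \in \mathfrak{C}_u^I$ with $v^\ast_I = w_I$ is exactly the natural route. The one remark worth making: in the mixed case you should note explicitly that \emph{every} witness index for the pair $(v^\ast, w)$ under $s$ automatically lies in $[n]\setminus I$, since $v^\ast$ and $w$ agree on all $I$-coordinates; you state that $i \in [n]\setminus I$ but it reads as if you chose such an $i$ rather than observing it is forced. That is a presentational point, not a gap --- the hypervertex property then transfers the $[n]\setminus I$-coordinates of $s$ from $v^\ast$ to $v$, and since $\tilde{s}$ agrees with $s$ on those coordinates the argument closes.
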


The hypervertex replacement is a technique which allows the replacement of the orientations inside a hypervertex by any other USO, while maintaining the USO property. It has some special cases.

When $|I| = 1$, the hypervertex consists of two vertices $u$ and $u \oplus e_i$ such that $s(u)_{[n] \backslash \{i\}} = s(u \oplus e_i)_{[n] \backslash\{i\}}$. In this case, it is possible to set the orientation of the edge $(u, u\oplus e_i)$ in any direction. This is called a \emph{flippable edge}.

When $|I| = n-1$, there exists some hypervertex $\mathfrak{C}_u^{[n] \backslash \{i\}}$ and so all $i$-edges are oriented in the same direction. \Cref{flip} states that we can replace $\mathfrak{C}_u^{[n] \backslash \{i\}}$ with any other USO. This was already seen in \Cref{product}, as this USO would be combed in the $i$th dimension.

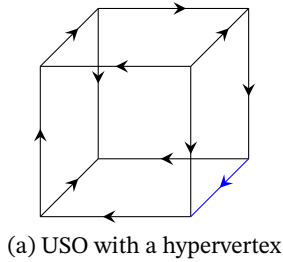
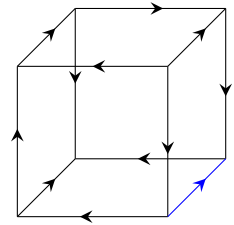
\begin{figure}[htbp]
    \centering
    \begin{subfigure}[t]{0.45\textwidth}
        \centering
        \begin{tikzpicture}
            \draw[middlearrow={<}] (0,0,0) -- (2,0,0);
            \draw[middlearrow={<}] (2,0,0) -- (2,2,0);
            \draw[middlearrow={<}] (2,2,0) -- (0,2,0);
            \draw[middlearrow={>}] (0,2,0) -- (0,0,0);
            \draw[middlearrow={<}] (0,0,2) -- (2,0,2);
            \draw[middlearrow={<}] (2,0,2) -- (2,2,2);
            \draw[middlearrow={>}] (2,2,2) -- (0,2,2);
            \draw[middlearrow={<}] (0,2,2) -- (0,0,2);
            \draw[middlearrow={<}] (0,0,0) -- (0,0,2);
            \draw[blue,middlearrow={>}] (2,0,0) -- (2,0,2);
            \draw[middlearrow={<}] (0,2,0) -- (0,2,2);
            \draw[middlearrow={<}] (2,2,0) -- (2,2,2);
        \end{tikzpicture}
        \caption{USO with a hypervertex}
        \label{orig_uso_flip_edge}
    \end{subfigure}
    \hfill
    \begin{subfigure}[t]{0.45\textwidth}
        \centering
        \begin{tikzpicture}
            \draw[middlearrow={<}] (0,0,0) -- (2,0,0);
            \draw[middlearrow={<}] (2,0,0) -- (2,2,0);
            \draw[middlearrow={<}] (2,2,0) -- (0,2,0);
            \draw[middlearrow={>}] (0,2,0) -- (0,0,0);
            \draw[middlearrow={<}] (0,0,2) -- (2,0,2);
            \draw[middlearrow={<}] (2,0,2) -- (2,2,2);
            \draw[middlearrow={>}] (2,2,2) -- (0,2,2);
            \draw[middlearrow={<}] (0,2,2) -- (0,0,2);
            \draw[middlearrow={<}] (0,0,0) -- (0,0,2);
            \draw[blue,middlearrow={<}] (2,0,0) -- (2,0,2);
            \draw[middlearrow={<}] (0,2,0) -- (0,2,2);
            \draw[middlearrow={<}] (2,2,0) -- (2,2,2);
        \end{tikzpicture}
        \caption{USO obtained from \Cref{orig_uso_flip_edge} by changing the hypervertex}
    \end{subfigure}
    \caption{USOs with a flippable edge, drawn in blue.}
    \label{flippable image}
\end{figure}

\begin{lemma}[Partial Swap]\label{partswap}\cite[Theorem 18]{univconstr}
    Let $s \colon \mathfrak{C} \to \{0, 1\}^n$ be the outmap of a USO. Let $i \in [n]$ be any dimension. Let $\tilde{s} \colon \mathfrak{C} \to \{0, 1\}^n$ be the outmap such that \[\tilde{s}(u) = \begin{cases}
        s(u) & \text{if } u_i = s(u)_i \\
        s(u \oplus e_i) & \text{if } u_i \neq s(u)_i 
    \end{cases}.\] Then, $\tilde{s}$ is a USO.
\end{lemma}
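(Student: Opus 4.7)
The plan is to define the map $\pi : \mathfrak{C} \to \mathfrak{C}$ by $\pi(u) = u$ when $u_i = s(u)_i$ and $\pi(u) = u \oplus e_i$ otherwise, so that $\tilde s(u) = s(\pi(u))$ by construction. The first step is to verify that $\pi$ is a bijection. Since $\pi(u)$ and $u$ agree in every coordinate except possibly $i$, it suffices to check injectivity on each $i$-edge. For an $i$-edge $\{u, u \oplus e_i\}$, the values $s(u)_i$ and $s(u \oplus e_i)_i$ are opposite, so a case split shows that either both endpoints satisfy $w_i = s(w)_i$ (a ``down'' edge, where $\pi$ fixes both) or neither does (an ``up'' edge, where $\pi$ swaps them); in either case $\pi(u) \neq \pi(u \oplus e_i)$. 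A useful identity that falls out of this analysis is $\pi(w)_i = s(w)_i$ for every $w \in \mathfrak{C}$.

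Next, apply \Cref{fund} to $\tilde s$. Given distinct $u, v \in \mathfrak{C}$, injectivity of $\pi$ gives $\pi(u) \neq \pi(v)$, so because $s$ is a USO there is some $j \in [n]$ with $s(\pi(u))_j \neq s(\pi(v))_j$ and $\pi(u)_j \neq \pi(v)_j$. If $j \neq i$ then $\pi$ does not touch coordinate $j$, so $u_j \neq v_j$ and this $j$ already witnesses the USO condition for $\tilde s$ at $u, v$. If $j = i$ and moreover $u_i \neq v_i$, the same $j$ still works. Hence the only problematic situation is when every witnessing dimension for $(\pi(u), \pi(v))$ in $s$ is forced to be $i$, while simultaneously $u_i = v_i$.

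The main obstacle is to rule out this problematic situation, which I would do by contradiction. Assume $u_i = v_i$, $\pi(u)_i \neq \pi(v)_i$, and $s(\pi(u))_i \neq s(\pi(v))_i$. Using $\pi(w)_i = s(w)_i$, the second assumption rewrites as $s(u)_i \neq s(v)_i$, so combined with $u_i = v_i$ exactly one of $u, v$ lies on a down $i$-edge and the other on an up one. Without loss of generality, $u$ is on a down edge and $v$ on an up edge, so $\pi(u) = u$ and $\pi(v) = v \oplus e_i$; then $s(\pi(u))_i = s(u)_i = u_i$, while $s(\pi(v))_i = s(v \oplus e_i)_i = 1 - s(v)_i = v_i = u_i$, contradicting the third assumption. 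This eliminates the bad case, and \Cref{fund} then yields that $\tilde s$ is a USO.
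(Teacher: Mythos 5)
Your proof is correct. The paper cites this lemma to an external source without reproducing an argument, so there is no in-paper proof to compare against; your proof is a clean, self-contained verification. Writing $\tilde s = s \circ \pi$ for the involution $\pi$, checking $\pi$ is a bijection (which uses that $s$, as a USO outmap, orients each $i$-edge consistently, i.e.\ $s(u)_i \neq s(u \oplus e_i)_i$), and then transporting \Cref{fund} through $\pi$ is exactly the right strategy, and the delicate case—where the witnessing coordinate for $(\pi(u),\pi(v))$ could be $i$ while $u_i = v_i$—is correctly ruled out via the identity $\pi(w)_i = s(w)_i$.
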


The partial swap consists in selecting one dimension $i$, and swapping the outmaps of $u$ and $u \oplus e_i$ if and only if the $i$-edge between them is oriented towards $\mathfrak{C}_{1^n}^{[n] \backslash \{i\}}$.

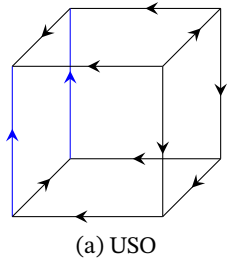
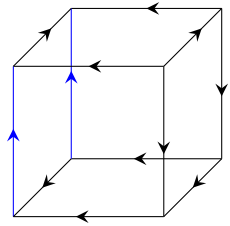
\begin{figure}[htbp]
    \centering
    \begin{subfigure}[t]{0.45\textwidth}
        \centering
        \begin{tikzpicture}
            \draw[middlearrow={<}] (0,0,0) -- (2,0,0);
            \draw[middlearrow={<}] (2,0,0) -- (2,2,0);
            \draw[middlearrow={>}] (2,2,0) -- (0,2,0);
            \draw[blue,middlearrow={<}] (0,2,0) -- (0,0,0);
            \draw[middlearrow={<}] (0,0,2) -- (2,0,2);
            \draw[middlearrow={<}] (2,0,2) -- (2,2,2);
            \draw[middlearrow={>}] (2,2,2) -- (0,2,2);
            \draw[blue,middlearrow={<}] (0,2,2) -- (0,0,2);
            \draw[middlearrow={<}] (0,0,0) -- (0,0,2);
            \draw[middlearrow={>}] (2,0,0) -- (2,0,2);
            \draw[middlearrow={>}] (0,2,0) -- (0,2,2);
            \draw[middlearrow={<}] (2,2,0) -- (2,2,2);
        \end{tikzpicture}
        \caption{USO}
        \label{origusoimage}
    \end{subfigure}
    \hfill
    \begin{subfigure}[t]{0.45\textwidth}
        \centering
        \begin{tikzpicture}
            \draw[middlearrow={<}] (0,0,0) -- (2,0,0);
            \draw[middlearrow={<}] (2,0,0) -- (2,2,0);
            \draw[middlearrow={>}] (2,2,0) -- (0,2,0);
            \draw[blue,middlearrow={<}] (0,2,0) -- (0,0,0);
            \draw[middlearrow={<}] (0,0,2) -- (2,0,2);
            \draw[middlearrow={<}] (2,0,2) -- (2,2,2);
            \draw[middlearrow={>}] (2,2,2) -- (0,2,2);
            \draw[blue,middlearrow={<}] (0,2,2) -- (0,0,2);
            \draw[middlearrow={>}] (0,0,0) -- (0,0,2);
            \draw[middlearrow={>}] (2,0,0) -- (2,0,2);
            \draw[middlearrow={<}] (0,2,0) -- (0,2,2);
            \draw[middlearrow={<}] (2,2,0) -- (2,2,2);
        \end{tikzpicture}
        \caption{USO obtained from \Cref{origusoimage} by applying a partial swap}
        \label{partswapimage}
    \end{subfigure}
    \caption{Partial Swap}
\end{figure}

\section{Lower Bound and Algorithms for \textsc{Sink-or-Clash}}\label{non-promise}

In this chapter, we establish some bounds in the optimal number of queries needed to solve \textsc{Sink-or-Clash}, initially in the deterministic setting and then by allowing randomized algorithms.

\begin{lemma}\label{notharder}
    \textsc{Sink-or-Clash} is not easier \textsc{Sink-USO}, i.e., $q(n) \leq t(n)$ and $\tilde{q}(n) \leq \tilde{t}(n)$.
\end{lemma}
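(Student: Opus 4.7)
The plan is to observe that any algorithm solving \textsc{Sink-or-Clash} automatically solves \textsc{Sink-USO} when the input happens to be a USO, because in that case the alternative output (a clash) is unavailable. Concretely, by \Cref{fund}, a USO has no clash, so if an algorithm for \textsc{Sink-or-Clash} terminates on a USO input, its output cannot be a clash, hence it must be a sink that has been queried. Therefore the algorithm is a valid algorithm for \textsc{Sink-USO} on the restricted family of USO inputs.

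First I would fix any deterministic algorithm $\mathcal{A}$ that solves \textsc{Sink-or-Clash}, and set $t(\mathcal{A}) \coloneq \max_s t(\mathcal{A}, s)$. Viewing $\mathcal{A}$ as an algorithm for \textsc{Sink-USO} (its behavior on USO inputs only), the argument above shows it is correct. Consequently
\[
q(n) \;\leq\; \max_{s \text{ USO}} t(\mathcal{A}, s) \;\leq\; \max_{s} t(\mathcal{A}, s) \;=\; t(\mathcal{A}),
\]
and taking the infimum over $\mathcal{A}$ yields $q(n) \leq t(n)$.

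For the randomized bound, the same reasoning applies verbatim with expectations: any randomized algorithm $\mathcal{A}$ solving \textsc{Sink-or-Clash} solves \textsc{Sink-USO} on USO inputs (since on every branch of its random coins, a terminating run on a USO must have queried the sink), so
\[
\tilde{q}(n) \;\leq\; \max_{s \text{ USO}} \mathbb{E}(\tilde{t}(\mathcal{A}, s)) \;\leq\; \max_{s} \mathbb{E}(\tilde{t}(\mathcal{A}, s)),
\]
and infimizing over $\mathcal{A}$ gives $\tilde{q}(n) \leq \tilde{t}(n)$.

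There is no real obstacle here; the only point requiring a brief justification is the reduction's correctness, namely that on USO inputs the output of a \textsc{Sink-or-Clash} algorithm must be a sink. This is immediate from \Cref{fund} because no two vertices of a USO form a clash, so the only admissible output is the sink.
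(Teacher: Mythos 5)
Your proof is correct and takes essentially the same approach as the paper: both observe that on a USO input an algorithm for \textsc{Sink-or-Clash} must output a sink (since no clash exists by \Cref{fund}), hence it also solves \textsc{Sink-USO}, giving $q(n) \leq t(n)$ and likewise $\tilde{q}(n) \leq \tilde{t}(n)$.
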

\begin{proof}
    Let $\mathcal{A}$ be an optimal algorithm for \textsc{Sink-or-Clash}. When the input is a USO, $\mathcal{A}$ cannot find a clash (as none exist), so it must return a sink. Hence, $\mathcal{A}$ also solves \textsc{Sink-USO}, using at most $t(n)$ queries. Since $q(n)$ minimizes over a superset of algorithms as compared to $t(n)$, we have $q(n) \leq t(n)$.
\end{proof}

This shows that any known lower bound for \textsc{Sink-USO} is also a lower bound for \textsc{Sink-or-Clash}, both in the deterministic and randomized settings. Thus, the lower bound of Schurr and Szabó \cite[Theorem 9]{quadlower} carries over.

\begin{corollary}
    $t(n) \in \Omega\left(n^2/\log(n)\right)$.
\end{corollary}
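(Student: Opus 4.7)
The corollary is essentially an immediate consequence of \Cref{notharder} together with the cited Schurr--Szabó bound, so my plan is to write a short proof that simply chains these two facts. First I would invoke \Cref{notharder} to obtain $q(n) \leq t(n)$ in the deterministic setting. Then I would quote the known lower bound $q(n) \in \Omega\!\left(n^2/\log(n)\right)$ from \cite[Theorem 9]{quadlower}. Combining the two inequalities yields $t(n) \geq q(n) \in \Omega\!\left(n^2/\log(n)\right)$, which is exactly the claim.

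There is essentially no obstacle here: the content of the corollary is purely that lower bounds for the promise version transfer to the non-promise version, and the transfer mechanism is already the whole point of \Cref{notharder}. The only small thing to be careful about is not to conflate the deterministic and randomized settings in the proof, since the corollary as stated only asserts the deterministic bound; I would therefore restrict the two-line chain of inequalities to $q(n)$ and $t(n)$ and not mention $\tilde{q}$ or $\tilde{t}$. If one wanted, one could add a parenthetical remark that the analogous randomized statement $\tilde{t}(n) \in \Omega\!\left(n^2/\log(n)\right)$ also follows, provided the Schurr--Szabó adversary argument yields the lower bound for $\tilde{q}(n)$ as well, but this is not needed for the stated corollary.
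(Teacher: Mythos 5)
Your proposal matches the paper's reasoning exactly: the corollary is stated immediately after \Cref{notharder} with the observation that lower bounds for \textsc{Sink-USO} carry over, and your two-line chain $t(n) \geq q(n) \in \Omega\left(n^2/\log(n)\right)$ is precisely that argument. The care you take to stay in the deterministic setting is appropriate and consistent with how the corollary is phrased.
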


\subsection{Upper Bounds}

In this subsection, we adapt some algorithms from \textsc{Sink-USO} to \textsc{Sink-or-Clash}, which create the same upper bounds on the number of queries both in \textsc{Sink-USO} and in \textsc{Sink-or-Clash}.

Initially, we state how the Product Algorithm \cite[Lemma 3.2]{upperuso} can be adapted to also solve \textsc{Sink-or-Clash}, as shown below.

\begin{lemma}[Adapted Product Algorithm]\label{prodalg}
    $t(n) \leq t(k)t(n-k)$ and $\tilde{t}(n) \leq \tilde{t}(k) \tilde{t}(n-k)$.
\end{lemma}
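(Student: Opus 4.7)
The plan is to mirror the proof of the Product Algorithm for \textsc{Sink-USO}, carefully tracking the additional possibilities that arise when a recursive call returns a clash instead of a sink. Fix a subset $I \subseteq [n]$ with $|I| = k$ and let $\mathcal{A}_k$, $\mathcal{A}_{n-k}$ be optimal \textsc{Sink-or-Clash} algorithms for dimensions $k$ and $n-k$, respectively. I would view the outer $k$-dimensional hypercube as having vertices $p \in \{0,1\}^k$, each labelling a face $F_p \coloneq \{v \in \mathfrak{C} : v_I = p\}$ of $\mathfrak{C}$ of dimension $n-k$. Then I would simulate $\mathcal{A}_k$ on this outer cube: whenever $\mathcal{A}_k$ queries $p$, run $\mathcal{A}_{n-k}$ on $F_p$, serving its inner queries by querying $s$ and projecting the answer onto the $[n] \setminus I$-coordinates.

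Next I would do a case analysis on the recursive calls. If any inner call on $F_p$ returns a clash $(v_1, v_2)$, then $(v_1)_I = (v_2)_I = p$, so the $I$-coordinates of $v_1 \oplus v_2$ vanish; combining this with the inner clash condition on the $[n] \setminus I$-coordinates shows that $(v_1, v_2)$ is already a clash in $\mathfrak{C}$, which I output and halt. Otherwise every inner call returns a sink $v_p$ of $F_p$ satisfying $s(v_p)_{[n] \setminus I} = 0^{n-k}$, and I feed $s'(p) \coloneq s(v_p)_I$ to $\mathcal{A}_k$ as the answer to that outer query. If $\mathcal{A}_k$ returns a sink $p^*$, then $s(v_{p^*}) = 0^n$ and $v_{p^*}$ is the global sink. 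If $\mathcal{A}_k$ instead returns a clash $(p_1, p_2)$, I would verify that $(v_{p_1}, v_{p_2})$ is a clash in $\mathfrak{C}$: the $I$-coordinates of the AND vanish by the outer clash condition, the $[n]\setminus I$-coordinates vanish because $s(v_{p_i})_{[n]\setminus I} = 0^{n-k}$ for both $i$, and distinctness follows from $(v_{p_i})_I = p_i$ and $p_1 \neq p_2$.

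For the query count, in the deterministic case each outer query triggers at most $t(n-k)$ evaluations of $s$ and $\mathcal{A}_k$ issues at most $t(k)$ outer queries, giving $t(n) \leq t(k) t(n-k)$. For the randomized bound, using independent fresh randomness for each recursive call and the tower property of conditional expectation (conditioning on the transcript of $\mathcal{A}_k$), the expected number of inner queries per outer query is at most $\tilde{t}(n-k)$ and the expected number of outer queries is at most $\tilde{t}(k)$, so the product bounds the expected total number of queries to $s$. The only slightly delicate step is the outer-clash-to-clash translation; it relies on the fact that in the no-inner-clash branch every $v_p$ satisfies $s(v_p)_{[n]\setminus I} = 0^{n-k}$, which forces the ``unhelpful'' $[n]\setminus I$ coordinates of the $\mathfrak{C}$-clash condition to vanish for free.
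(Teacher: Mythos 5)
Your proposal is correct and mirrors the paper's proof: you run a \textsc{Sink-or-Clash} algorithm on the inherited $k$-dimensional outmap, serve each outer query by a recursive $(n-k)$-dimensional \textsc{Sink-or-Clash} call, and verify that both an inner clash and an outer clash lift to a clash in $\mathfrak{C}$ (the latter because the two inner sinks have zero outmap on $[n]\setminus I$). The only differences are cosmetic — you work with a general $I$ rather than $I=[k]$ and spell out the independence/tower-property bookkeeping for the randomized bound.
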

\begin{proof}
    This proof is similar to the one Szabó and Welzl presented in \cite[Lemma 3.2]{upperuso}, with a slight modification to include the case when the outmap is not a USO.

    Let $s \colon \mathfrak{C} \to \{0, 1\}^n$ be an outmap in the $n$-dimensional $\mathfrak{C}$. Let $I \coloneq [k] \subseteq [n]$. Create the inherited outmap $s'$ as given by \Cref{inheorie} in the $k$-dimensional hypercube, denoted as $\mathfrak{C}'$.

    We perform this same algorithm in the smaller hypercube $\mathfrak{C}'$ to solve $s'$. Now, the vertex evaluations are implemented as solving an $(n-k)$-dimensional USO, using the same algorithm. When solving a vertex evaluation, if a clash is found, it is also a clash in $s$. Otherwise, a sink is found, which is then used to find the outmap of a vertex in the inherited orientation $s'$.

    After $t(k)t(n-k)$ queries (or after $\tilde{t}(k)\tilde{t}(n-k)$ expected queries in the randomized setting), either a sink or a clash if found in $s'$. If a sink is found, that must be a global sink in $s$. Otherwise, suppose that the algorithm found vertices $u', v' \in \mathfrak{C}'$ which clash. Suppose they correspond to the sinks $u, v \in \mathfrak{C}$. Therefore, there exists no $i \in [k]$ such that $s(u)_i \neq s(v)_i \wedge u_i \neq v_i$. Moreover, $s(u)_i = s(v)_i = 0$ for all $i \in [n] \backslash [k]$ as they are sinks in $(n-k)$-dimensional hypercubes, so it follows that there exists no $i \in [n]$ such that $s(u)_i \neq s(v)_i \wedge u_i \neq v_i$, i.e., $u$ and $v$ clash in $s$.
\end{proof}

By using induction and the fact that $t$ and $\tilde{t}$ are non-decreasing, it follows that $t(n) \leq t(k)^{\lceil n/k \rceil}$ and $\tilde{t}(n) \leq \tilde{t}(k)^{\lceil n/k \rceil}$ \cite[Corollary 3.3]{upperuso}. When the exact value of $t(k)$ or $\tilde{t}(k)$ are known for some small dimension $k$, then this creates the exponential upper bound $t(n) \in O\left(\sqrt[k]{t(k)}\right)$ and $\tilde{t}(n) \in O\left(\sqrt[k]{\tilde{t}(k)}\right)$. These bounds are especially important in the randomized setting, as the best known randomized algorithm for \textsc{Sink-USO} uses the fact that $\tilde{q}(3) = 4074633/1369468 \approx 2.976$ to create an upper bound $\tilde{q}(n) \in O\left(\sqrt[3]{2.976}^n\right) = O\left(1.438^n\right)$ \cite[Section 5]{upperuso}.

Moreover, an adaption Fibonacci Seesaw \cite[Section 4]{upperuso} also works for \textsc{Sink-or-Clash}. Such adaptation is shown as \Cref{fib}, which should be initially called with $J = [n]$ and any vertex $z \in \mathfrak{C}$.

\begin{lemma}
    $t(n) \in O\left(1.61^n\right)$.
\end{lemma}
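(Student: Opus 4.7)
The plan is to adapt the Improved Fibonacci Seesaw of Szabó and Welzl (referenced in \Cref{fib}) so that each step of the algorithm either makes the same progress as in the promise setting or exposes a clash between two already-queried vertices, which we then return.

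First, I would observe that the Seesaw is built from two primitive operations: direct vertex evaluations, and subroutine calls that compute the sink of an inherited orientation (\Cref{inheorie}) on a smaller face. The subroutine calls can be replaced by recursive calls to the non-promise algorithm at the lower dimension. By the same lifting argument used in the proof of \Cref{prodalg}, any clash returned by such a recursive call corresponds to a clash in the original outmap $s$, which we return immediately. If every recursive call instead returns a sink, the Seesaw continues exactly as if $s$ were a USO, producing sinks for antipodal faces whose dimensions grow according to the Fibonacci recurrence.

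Next, I would argue that at termination the Seesaw proposes a candidate global sink $u$ and performs a final vertex evaluation. Either $s(u) = 0^n$, in which case we are done, or the Seesaw's invariants let us trace back to a previously queried vertex $v$ (namely, the sink of the companion antipodal face used in the last combination step) whose outmap must fail the condition of \Cref{fund} together with $u$. Formalizing this step amounts to verifying that whenever the Seesaw's Fibonacci invariant concludes ``$u$ is the sink'' but $u$ is not, the specific pair $(u, v)$ it combined to reach this conclusion is itself a clash.

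The main obstacle will be carefully tracking the invariants of the Seesaw to identify, for every way it could wrongly propose a sink, an explicit witnessing pair of already-queried vertices; this is essentially a matter of instrumenting the proof of correctness of the promise-version Seesaw to extract a clash whenever its inductive hypothesis fails. Once this bookkeeping is in place, no additional queries are needed beyond those performed in the promise version, since comparing outmaps of already-queried vertices is free in the query-complexity model. Consequently the query count remains the Fibonacci bound $O(\phi^n) = O(1.61^n)$ established by Szabó and Welzl, giving $t(n) \in O(1.61^n)$.
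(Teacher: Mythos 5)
Your overall plan—run the Fibonacci Seesaw, let recursive calls propagate clashes up, and observe that the two facet sinks at the end must either clash or identify the global sink—is exactly the paper's approach, and for the \emph{basic} Seesaw it works: no extra queries are needed, yielding the recursion $t(n) \leq 2 + \sum_{k=0}^{n-2} t(k)$ and hence $O(1.62^n)$. A small correction in passing: there is no ``final vertex evaluation'' of a proposed sink; the two antipodal facet sinks are already evaluated, and the decision between returning a sink or a clash is made by inspecting their already-known outmaps.

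The genuine gap is the jump from $O(1.62^n)$ to $O(1.61^n)$. The Improved Fibonacci Seesaw of Szabó and Welzl does not merely use the basic Seesaw recursively; once it reaches antipodal $4$-dimensional faces it switches to the product algorithm, and its recursion $t(n) \leq 2 + \sum_{k=0}^{n-5} t(k) + 5t(n-4)$ relies on an \emph{optimal $7$-query solver for the $4$-dimensional case}. In the promise setting this is the SevenStepsToHeaven algorithm with $q(4) = 7$. But the paper shows that SevenStepsToHeaven fails for \textsc{Sink-or-Clash}: there is an outmap on which it spends $7$ queries without finding a sink or a clash, because it silently uses USO-only deductions that cannot be turned into a witnessing pair when violated. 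This directly contradicts your general claim that ``each step either makes the same progress as in the promise setting or exposes a clash between two already-queried vertices.'' Consequently you cannot simply instrument the promise algorithm; you must separately establish $t(4) = 7$, which the paper does by an exhaustive computer search (\Cref{t4=7}), and only then does the Improved Seesaw recursion become available. Without that step, your argument only reaches the weaker $O(1.62^n)$ bound.
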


\begin{algorithm}
\caption{Adapted\_Fibonacci\_Seesaw}
\begin{algorithmic}[1]\label{fib}
\STATE \textbf{Input:} An outmap $s$ of an $n$-dimensional hypercube $\mathfrak{C}$, a subset $J \subseteq [n]$ of the dimensions, a vertex $z$
\STATE \textbf{Output:} A sink or a clash in the face $\mathfrak{C}_z^J$
\IF{$|J| = 0$}
    \RETURN z
\ENDIF
\STATE $u \gets z$
\STATE $v \gets z \oplus \left(\bigvee_{i \in J} e_i\right)$
\STATE $I \gets \emptyset$
\FOR{$k \gets 0$ to $|J|-2$}
    \STATE $j \gets 0$
    \FOR{$i \in J \backslash I$}
        \IF{$\left(u_i \oplus v_i\right) \wedge \left(s(u)_i \oplus s(v)_i\right) = 1$}
            \STATE $j \gets i$
            \BREAK
        \ENDIF
    \ENDFOR
    \IF{$j = 0$}
        \RETURN $(u, v)$
    \ENDIF
    \IF{$s(u)_j = 1$}
        \STATE \textbf{swap} $(u, v)$
    \ENDIF
    \STATE $w \gets$ Adapted\_Fibonacci\_Seesaw$\left(s, I, v \oplus e_j\right)$
    \IF{$w$ is a clash}
        \RETURN $w$
    \ENDIF
    \IF{$s(w)_j = 1$}
        \RETURN $(v, w)$
    \ENDIF
    \STATE $v \gets w$
    \STATE $I \gets I \cup \{j\}$
\ENDFOR
\STATE $j \leftarrow J \backslash I$
\IF{$s(u)_j = 0$}
    \RETURN $u$
\ENDIF
\IF{$s(v)_j = 0$}
    \RETURN $v$
\ENDIF
\RETURN $(u, v)$
\end{algorithmic}
\end{algorithm}

\begin{proof}

Given an $n$-dimensional hypercube $\mathfrak{C}$, the algorithm maintains sinks of a pair of antipodal $k$-dimensional faces from $k=0$ until $k=n-1$. Initially, $2$ evaluations are used to evaluate two antipodal vertices, which are $0$-dimensional faces.

For some $0 \leq k \leq n-2$, suppose the algorithm has two $k$-dimensional antipodal faces $\mathfrak{C}_u^I$ and $\mathfrak{C}_v^I$ for some $I \subseteq [n]$, with sinks $u$ and $v$, respectively. If $u$ and $v$ clash, the algorithm ends. Otherwise, there exists some $j \in [n]$ such that $s(u)_j \neq s(v)_j \wedge u_j \neq v_j$. As $u$ and $v$ are sinks in the respective faces, then $s(u)_I = s(v)_I = 0^k$, so $j \notin I$. For that specific $j$, suppose without loss of generality that $s(u)_j = 0 \neq s(v)_j = 1$. Afterwards, consider the antipodal faces $\mathfrak{C}_u^{I \cup \{j\}}$ and $\mathfrak{C}_v^{I \cup \{j\}}$. As $s(u)_{I \cup \{j\}} = 0^{k+1}$, then $u$ is a sink in $\mathfrak{C}_u^{I \cup \{j\}}$. Moreover, $v$ is a sink of a facet of $\mathfrak{C}_v^{I \cup \{j\}}$, and as $s(v)_j = 1$, it implies that if a sink exists in $\mathfrak{C}_{v \oplus e_j}^I$, it must be a sink in $\mathfrak{C}_v^{I \cup \{j\}}$ or it would clash with $v$. Furthermore, $t(k)$ queries are enough to find the sink or a clash in $\mathfrak{C}_{v \oplus e_j}^I$, which is also a clash in $\mathfrak{C}$.

Finally, when the algorithm found the sink of two antipodal facets, in order for them not to clash, one of them must be the sink of $\mathfrak{C}$. Hence, the algorithm is correct.

The algorithm performs in total $2 + \sum_{k=0}^{n-2} t(k)$ queries, so $t(n) \leq 2 + \sum_{k=0}^{n-2} t(k)$ for $n \geq 1$, which solves to $t(n) \in O\left(1.62^n\right)$.

Later, we show that $t(4) = 7$ in \Cref{t4=7}. Therefore, an adaptation of the Improved Fibonacci Seesaw, as presented by Szabó and Welzl, also works \cite[Theorem 4.1]{upperuso}. In this algorithm, when two antipodal $4$-dimensional faces have been evaluated, the algorithm uses the product algorithm and evaluates at most another five $(n-4)$-dimensional faces. This creates the recursion \[t(n) \leq 2 + \sum_{k=0}^{n-5} t(k) + 5t(n-4)\] which solves to $t(n) \in O\left(1.61^n\right)$.

\end{proof}

\subsection{Algorithms for Small Dimensions}

Many of the known algorithms still work when the outmap does not necessarily encode a USO. This is true for the optimal algorithms when $n \leq 3$, where the same number of queries are enough.

\begin{lemma}\label{t1}
    $t(1) = 2$
\end{lemma}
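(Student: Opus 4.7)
The plan is to prove $t(1) = 2$ by establishing matching lower and upper bounds.

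For the lower bound $t(1) \geq 2$, I would simply cite \Cref{notharder} together with the known fact $q(1) = 2$ stated earlier in the preliminaries: since any algorithm for \textsc{Sink-or-Clash} also solves \textsc{Sink-USO}, we have $2 = q(1) \leq t(1)$. Alternatively, one can give a direct adversary argument: after a single query to some vertex $u$, the adversary answers $s(u) = 1$, so $u$ is not a sink and no second vertex has been queried to form a clash, hence one query cannot suffice.

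For the upper bound $t(1) \leq 2$, I would exhibit the trivial algorithm that queries both vertices of $\mathfrak{C} = \{0,1\}$. Writing $a = s(0)$ and $b = s(1)$, a short case analysis on $(a,b) \in \{0,1\}^2$ shows that a valid answer is always present among the two queried vertices: if $a = 0$ then $0$ is a sink; if $b = 0$ then $1$ is a sink; and if $a = b = 1$, then $(s(0) \oplus s(1)) \wedge (0 \oplus 1) = 0 \wedge 1 = 0$, so $\{0,1\}$ is a clash. (The remaining case $a=b=0$ is covered by the first two.) Hence two queries always suffice.

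I do not anticipate any real obstacle: the statement is essentially a base case, and the main content is just noting that the $n = 1$ hypercube has only two vertices, so querying both is both necessary (by the promise lower bound) and sufficient.
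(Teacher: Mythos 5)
Your proposal is correct and takes essentially the same approach as the paper: the paper's lower bound is exactly your direct adversary argument (answering that the queried vertex is a source), and the upper bound is the trivial observation that querying both vertices suffices, which you simply spell out with a short case analysis.
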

\begin{proof}
    One query is not enough as an adversary algorithm could state that the first vertex is the source. Moreover, two queries are enough as they cover all the vertices.
\end{proof}

\begin{lemma}\label{t2}
    $t(2) = 3$
\end{lemma}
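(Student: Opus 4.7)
The lower bound $t(2) \geq 3$ follows immediately from Lemma~\ref{notharder} combined with the known equality $q(2) = 3$~\cite[Subsection 5.3]{schurrthesis}, since any algorithm for \textsc{Sink-or-Clash} also solves \textsc{Sink-USO}.

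For the matching upper bound $t(2) \leq 3$, I plan to exhibit a three-query algorithm. The algorithm queries an arbitrary vertex $u$; if $s(u) = 0^2$, it returns $u$. Otherwise it queries the antipodal vertex $w \coloneq u \oplus 1^2$. If $s(w) = 0^2$, it returns $w$, and if $s(u) = s(w)$ then $(s(u) \oplus s(w)) \wedge (u \oplus w) = 0^2$, so it returns the clash $(u, w)$. Otherwise $s(u), s(w) \in \{01, 10, 11\}$ with $s(u) \neq s(w)$, and the non-clash constraints applied to the conjectural sink value $s(t) = 0^2$ for a third vertex $t \in \{u \oplus e_1, u \oplus e_2\}$ read $s(u) \wedge (u \oplus t) \neq 0^2$ and $s(w) \wedge (w \oplus t) \neq 0^2$; a direct inspection of the six possibilities for $(s(u), s(w))$ shows that exactly one of the two candidate vertices satisfies both constraints, and this is the vertex the algorithm queries third.

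The key verification is that after the third query, either $s(t) = 0^2$ (a sink) or a clash is exposed. Write $u \oplus t = e_i$; then $w \oplus t = e_{3-i}$, and the clash conditions with $u$ and $w$ reduce, respectively, to $s(u)_i = s(t)_i$ and $s(w)_{3-i} = s(t)_{3-i}$. By the choice of $t$, we have $s(u)_i = 1$ and $s(w)_{3-i} = 1$, so any nonzero bit of $s(t)$ immediately forces a clash with $u$ or $w$.

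The main obstacle is the six-case check confirming that the candidate $t$ is uniquely determined from $(s(u), s(w))$; this is mechanical and reduces to two essentially distinct cases by swapping the two dimensions, but it is the only nontrivial piece. Once it is done, the algorithm makes at most three queries on every input, establishing $t(2) \leq 3$.
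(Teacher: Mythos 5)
Your proof is correct. The lower bound is identical to the paper's: cite $q(2)=3$ and apply \Cref{notharder}.

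For the upper bound, the paper simply invokes its earlier Adapted Fibonacci Seesaw bound $t(n) \leq 2 + \sum_{k=0}^{n-2} t(k)$, giving $t(2) \leq 2 + t(0) = 3$ in one line. You instead construct an explicit three-query algorithm from scratch and verify it by case analysis. The algorithm you build is in fact exactly the $n=2$ specialization of \Cref{fib}: after querying antipodal $u$ and $w = u\oplus 1^2$, your vertex $t = u\oplus e_i$ chosen so that $s(u)_i = 1$ and $s(w)_{3-i} = 1$ coincides with the vertex $v\oplus e_j$ that the seesaw queries after the swap on line~20 (since $s\neq 0^2$ at both $u$ and $w$ forces the unused bit to be $1$). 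So the content is the same, but your proof is self-contained and doesn't lean on the general lemma, at the cost of the six-way case check confirming uniqueness of $t$; the paper's route is shorter but depends on having already established the seesaw recursion. Both are valid. One small stylistic gap: you should note explicitly that the case where neither candidate $t$ passes both constraints cannot occur (your six-case check rules it out, but the exposition reads as if uniqueness, rather than existence, is the claim under verification). Existence is what the algorithm needs to be well-defined; uniqueness is incidental.
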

\begin{proof}
    As $q(2) = 3$ \cite[Proposition 5.5]{schurrthesis}, then $t(2) \geq 3$. Moreover, the Adapted Fibonacci Seesaw in \Cref{fib} states that $t(2) \leq 2 + t(0) = 3$.
\end{proof}

\begin{lemma}\label{t3}
    $t(3) = 5$
\end{lemma}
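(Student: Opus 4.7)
The plan is to pin down $t(3) = 5$ by matching upper and lower bounds, using exactly the same template that already gave \Cref{t1} and \Cref{t2}. For the lower bound $t(3) \geq 5$, I would invoke \Cref{notharder} to obtain $t(3) \geq q(3)$, and then cite Schurr's result that $q(3) = 5$ (\cite[Subsection 5.3]{schurrthesis}). Since finding a sink in a genuine USO is a special case of \textsc{Sink-or-Clash} on that same outmap, no additional work is needed on this side.

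For the upper bound $t(3) \leq 5$, I would run the Adapted Fibonacci Seesaw (\Cref{fib}) on a $3$-dimensional hypercube and count its queries explicitly. The algorithm first evaluates two antipodal vertices ($2$ queries), then performs the main loop for $k = 0$ and $k = 1$. The $k=0$ step recurses on a $0$-dimensional face, costing $t(0) = 1$ query to evaluate the candidate $w$ so that $s(w)_j$ can be checked; the $k=1$ step recurses on a $1$-dimensional face, which by \Cref{t1} costs at most $t(1) = 2$ queries. Plugging these into the general bound $t(n) \leq 2 + \sum_{k=0}^{n-2} t(k)$ already justified in the proof of the $O(1.61^n)$ upper bound yields $t(3) \leq 2 + t(0) + t(1) = 5$. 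There is no real obstacle here: both directions reduce to results already proved in the excerpt, and the only thing to verify carefully is that the per-iteration cost of the Seesaw is still $t(k)$ (not $q(k)$) in the non-promise setting, which is exactly the content of the general bound established earlier.
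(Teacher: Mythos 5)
Your proposal matches the paper's proof almost verbatim: both directions use \Cref{notharder} together with $q(3)=5$ for the lower bound, and the Adapted Fibonacci Seesaw recurrence $t(3) \leq 2 + t(0) + t(1) = 5$ for the upper bound. The extra unpacking you give of the Seesaw's query count (two antipodal evaluations plus recursions on a $0$- and then a $1$-dimensional face, each at non-promise cost $t(k)$) is correct and simply spells out what the paper leaves implicit.
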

\begin{proof}
    Similarly, as $q(3) = 5$ \cite[Proposition 5.6]{schurrthesis}, then $t(3) \geq 5$. Moreover, the Adapted Fibonacci Seesaw in \Cref{fib} states that $t(3) \leq 2 + t(0) + t(1) = 5$.
\end{proof}

\Cref{t2} and \Cref{t3} are not too surprising, as the optimal algorithm for these small dimensions is the Fibonacci Seesaw, which we have already shown to be adaptable to \textsc{Sink-or-Clash}.

Unfortunately, the SevenStepsToHeaven Algorithm \cite[Lemma 6.1]{upperuso} does not solve \textsc{Sink-or-Clash}. The main problem is that it assumes the orientation of some edges due to the fact that the outmap corresponds to a USO, but when the outmap does not need to be a USO, disrespecting that assumption does not immediately create a clash. Therefore, it cannot be used to prove that $q(4) \leq 7$.

\begin{lemma}
    The SevenStepsToHeaven Algorithm does not solve \textsc{Sink-or-Clash}.
\end{lemma}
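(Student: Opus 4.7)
The plan is to exhibit a non-USO oracle in dimension $4$ on which the SevenStepsToHeaven algorithm terminates with no valid \textsc{Sink-or-Clash} output. The correctness proof of SevenStepsToHeaven in \cite[Lemma 6.1]{upperuso} invokes the USO property at some step to infer, from the outmaps of previously queried vertices, that a specific vertex must be the global sink; the strategy is to build an input on which this inference is false while no clash appears among the queried vertices.

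First I would take a $4$-dimensional USO $s^*$ on which SevenStepsToHeaven queries vertices $v_1, \ldots, v_7$ in that order and returns $v_7$ as the sink. I would then define a new outmap $s$ by setting $s(v_i) = s^*(v_i)$ for $i = 1, \ldots, 6$, $s(v_7) = z$ for some $z \in \{0,1\}^4 \setminus \{0^4\}$, and extending $s$ arbitrarily on the remaining nine vertices of $\mathfrak{C}$. For this to produce a genuine failure it suffices that $z$ be chosen so that
\[
(z \oplus s^*(v_i)) \wedge (v_7 \oplus v_i) \neq 0^4 \quad \text{for every } i = 1, \ldots, 6,
\]
which by \Cref{fund} ensures that no pair among $v_1, \ldots, v_7$ is a clash in $s$. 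Because $s$ and $s^*$ agree on the first six queries, the algorithm follows the same branch on $s$ and issues $v_7$ as its seventh query, now receiving $z \neq 0^4$. Any output the algorithm can then produce is invalid: returning $v_7$ as the sink violates clause (1) of \Cref{sinkorclash} since $s(v_7) \neq 0^4$, and returning any pair of queried vertices as a clash violates clause (2) by construction.

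The main obstacle is showing the existence of such a $z$: for the specific seven-vertex query pattern produced by SevenStepsToHeaven on $s^*$, one must verify that the six constraints above are simultaneously satisfiable by some nonzero $z$. Each constraint excludes a proper affine subspace of $\{0,1\}^4$ whose codimension equals the Hamming distance between $v_7$ and the corresponding $v_i$, and a simple counting argument suggests that for most trajectories through the algorithm enough values of $z$ survive; still, the verification requires unfolding the concrete branch of SevenStepsToHeaven taken on $s^*$. A cleaner alternative is an on-line adversary argument: answer each of the algorithm's queries so as to maintain two mutually USO-incompatible extensions consistent with all responses given so far, and after the seventh query observe that the algorithm's single output must disagree with at least one of them, hence cannot be a valid certificate for \textsc{Sink-or-Clash}.
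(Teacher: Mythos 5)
Your overall strategy is the same as the paper's: run SevenStepsToHeaven until it deduces (under the USO assumption) that a particular vertex $u_7$ must be the sink, then plant a non-zero outmap on $u_7$ that does not clash with any of the previously queried vertices, so that the algorithm's seventh query yields neither a sink nor a clash. What you leave open is precisely what the paper proves: you note that ``the verification requires unfolding the concrete branch of SevenStepsToHeaven taken on $s^*$,'' and you offer an on-line adversary argument as a cleaner alternative; the paper's proof \emph{is} that adversary argument, carried out in full. It follows the algorithm step by step, at each branch choosing the response that avoids an early exit to the product algorithm, arrives at the state where $u_7$ would be declared the sink, and then exhibits the specific values $s(u_7)_p = 1$, $s(u_7)_q = 0$, $s(u_7)_r = 1$, $s(u_7)_s = 0$ for which no pair of the seven queried vertices clashes. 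Your counting heuristic (each non-clash constraint excises an affine subspace of $\{0,1\}^4$ of codimension $d_H(v_7, v_i)$) is suggestive but not a proof; these subspaces could in principle cover all fifteen non-zero points, and indeed whether they do depends on the trajectory, which is why the paper commits to a concrete branch. Your framing of first fixing a genuine USO $s^*$ also adds an unnecessary degree of freedom — the choice of $s^*$ determines the trajectory and hence whether a valid $z$ survives — whereas the paper's adversary construction builds a single suitable trajectory directly. So: right idea, right alternative route identified, but the existence of the witness $z$ is the entire content of the lemma and you have not established it.
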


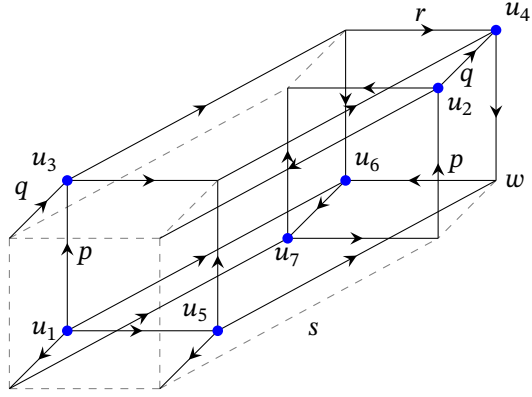
\begin{figure}[htbp]\centering
\begin{tikzpicture}
        \coordinate (offset) at (\offsetx, \offsety, \offsetz);
            
        \draw[middlearrow={>}] (0,0,0) -- (2,0,0);
        \draw[middlearrow={>}] (2,0,0) -- (2,2,0);
        \draw[middlearrow={<}] (2,2,0) -- (0,2,0);
        \draw[middlearrow={<}] (0,2,0) -- (0,0,0);
        \draw[dashed, opacity=0.5] (0,0,2) -- (2,0,2);
        \draw[dashed, opacity=0.5] (2,0,2) -- (2,2,2);
        \draw[dashed, opacity=0.5] (2,2,2) -- (0,2,2);
        \draw[dashed, opacity=0.5] (0,2,2) -- (0,0,2);
        \draw[middlearrow={>}] (0,0,0) -- (0,0,2);
        \draw[middlearrow={>}] (2,0,0) -- (2,0,2);
        \draw[middlearrow={<}] (0,2,0) -- (0,2,2);
        \draw[dashed, opacity=0.5] (2,2,0) -- (2,2,2);

        \draw[middlearrow={<}] ($(0,0,0)+(offset)$) -- ($(2,0,0) + (offset)$);
        \draw[middlearrow={<}] ($(2,0,0)+(offset)$) -- ($(2,2,0)+(offset)$);
        \draw[middlearrow={<}] ($(2,2,0)+(offset)$) -- ($(0,2,0)+(offset)$);
        \draw[middlearrow={>}] ($(0,2,0)+(offset)$) -- ($(0,0,0)+(offset)$);
        \draw[middlearrow={>}] ($(0,0,2)+(offset)$) -- ($(2,0,2)+(offset)$);
        \draw[middlearrow={>}] ($(2,0,2)+(offset)$) -- ($(2,2,2)+(offset)$);
        \draw[middlearrow={>}] ($(2,2,2)+(offset)$) -- ($(0,2,2)+(offset)$);
        \draw[middlearrow={<}] ($(0,2,2)+(offset)$) -- ($(0,0,2)+(offset)$);
        \draw[middlearrow={>}] ($(0,0,0)+(offset)$) -- ($(0,0,2)+(offset)$);
        \draw[dashed, opacity=0.5] ($(2,0,0)+(offset)$) -- ($(2,0,2)+(offset)$);
        \draw[dashed, opacity=0.5] ($(0,2,0)+(offset)$) -- ($(0,2,2)+(offset)$);
        \draw[middlearrow={<}] ($(2,2,0)+(offset)$) -- ($(2,2,2)+(offset)$);

        \draw[middlearrow={>}] (0,0,0) -- ($(0,0,0)+(offset)$);
        \draw[middlearrow={>}] (2,0,0) -- ($(2,0,0)+(offset)$);
        \draw[middlearrow={>}] (2,2,0) -- ($(2,2,0)+(offset)$);
        \draw[middlearrow={>}] (0,2,0) -- ($(0,2,0)+(offset)$);
        \draw[middlearrow={>}] (0,0,2) -- ($(0,0,2)+(offset)$);
        \draw[dashed, opacity=0.5] (2,0,2) -- ($(2,0,2)+(offset)$);
        \draw[middlearrow={<}] (2,2,2) -- ($(2,2,2)+(offset)$);
        \draw[dashed, opacity=0.5] (0,2,2) -- ($(0,2,2)+(offset)$);
            
        \fill[blue] (0,0,0) circle (2pt);
        \fill[blue] ($(2,2,2)+(offset)$) circle (2pt);
        \fill[blue] (0,2,0) circle (2pt);
        \fill[blue] ($(2,2,0)+(offset)$) circle (2pt);
        \fill[blue] (2,0,0) circle (2pt);
        \fill[blue] (offset) circle (2pt);
        \fill[blue] ($(0,0,2)+(offset)$) circle (2pt);

        \node[left] at (0,0,0) {$u_1$};
        \node[below right] at ($(2,2,2)+(offset)$) {$u_2$};
        \node[above left] at (0,2,0) {$u_3$};
        \node[above right] at ($(2,2,0)+(offset)$) {$u_4$};
        \node[right] at ($(2,0,0)+(offset)$) {$w$};
        \node[above left] at (2,0,0) {$u_5$};
        \node[above right] at (offset) {$u_6$};
        \node[below] at ($(0,0,2)+(offset)$) {$u_7$};

        \node[right] at (0,1,0) {$p$};
        \node[right] at ($(2,1,2)+(offset)$) {$p$};
        \node[above left] at (0,2,1) {$q$};
        \node[below] at ($(2,2,1)+(offset)$) {$q$};
        \node[above] at ($(1,2,0)+(offset)$) {$r$};
        \node[below right] at ($(2,0,2)+0.5*(offset)$) {$s$};
\end{tikzpicture}
    \caption{Failed SevenStepsToHeaven Algorithm.}
    \label{failseven}
\end{figure}
\begin{proof}
    For simplicity, we will use the same notation as Szabó and Welzl. We will also follow the algorithm (presented in \cite[Lemma 6.1]{upperuso}) until the end, when it fails to find a sink or a clash if we assume that the outmap is not a USO. The queried vertices are represented in \Cref{failseven}.

    Initially, two antipodal vertices $u_1$ and $u_2$ are evaluated. If they clash, the algorithm ends. Otherwise, there must exist some dimension $p$ such that $s(u_1)_p = 1 \neq s(u_2)_p = 0$.

    If $u_2$ has another incoming edge, then the algorithm finishes in $1+3+3=7$ steps using the product algorithm $t(4) \leq t(2)t(2)$, as the sink of one of the squares was found in $1$ instead of $3$ steps. So, assume otherwise.

    Afterwards, the algorithm evaluates $u_3 \coloneq u_1 \oplus e_p$. If $s(u_3)_p = 1$, then it would clash with $u_1$, so assume $s(u_3)_p = 0$. In order for $u_3$ and $u_2$ not to clash, there must exist some dimension $q$ in which $s(u_3)_q = 0$. If $u_3$ has another incoming edge, it would be the sink of a $3$-cube, so the algorithm would finish in $2+5=7$ steps using the product algorithm $t(4) \leq t(3)t(1)$, as the sink of the $3$-cube was found in $2$ instead of $5$ steps. So, assume otherwise.

    After that, evaluate $u_4 \coloneq u_2 \oplus e_q$. In order for $u_4$ and $u_2$ not to clash, then $s(u_4)_q = 0$. If $s(u_4)_p = 0$, then $u_3$ and $u_4$ are the sinks of two antipodal $2$-cubes, so $3$ more queries are enough using the product algorithm, by finding the sink in the appropriate $2$-cube. So, assume $s(u_4)_p = 1$. Finally, in order for $u_3$ and $u_4$ not to clash, $u_4$ must have another incoming edge, say $s(u_4)_r = 0$, and call the other dimension $s$.

    Now, the SevenStepsToHeaven assumes that $w \coloneq u_4 \oplus e_p$ is the sink of $\mathfrak{C}_{u_2}^{\{p, q\}}$. While this is true if it is a USO, this assumption fails as it is possible to contradict it without creating a clash, as seen below.

    Then, evaluate $u_5 \coloneq w \oplus e_s$. Assume this is the case when $s(u_5)_s = 1$.

    The algorithm would evaluate $u_6 \coloneq w \oplus e_r$ next. If $s(u_6)_r = 1$, then $w$ would be forced to either be a sink or to clash with another vertex. Hence, assume $s(u_6)_r = 0$. In order for $u_6$ and $u_4$ not to clash, then $s(u_6)_p = 0$. In order for $u_6$ and $u_3$ not to clash, then $s(u_6)_s = 0$. If $s(u_6)_q = 0$, then $u_6$ would be the sink. So, assume that $s(u_6)_q = 1$.

    If this was a USO, then $u_7 \coloneq u_6 \oplus e_q$ would need to be a sink. However, it is possible to define $s(u_7)_p = 1$, $s(u_7)_q = 0$, $s(u_7)_r = 1$ and $s(u_7)_s = 0$, and no pair of the queried vertices clashes.

    Hence, the SevenStepsToHeaven algorithm does not solve \textsc{Sink-or-Clash}. \Cref{failseven} shows the orientation following this algorithm, with $7$ evaluated vertices, in which no pair clashes.
\end{proof}

Although the SevenStepsToHeaven algorithm does not solve \textsc{Sink-or-Clash}, $7$ queries are still enough.

\begin{lemma}\label{t4=7}
    $t(4) = 7$.
\end{lemma}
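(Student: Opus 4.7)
The proof splits into a lower bound and an upper bound. For the lower bound, $t(4) \geq 7$ follows immediately by combining \Cref{notharder} with the known value $q(4) = 7$. The substance is therefore the upper bound $t(4) \leq 7$, and since the previous lemma rules out the natural adaptation of the SevenStepsToHeaven algorithm, a new construction is required.

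The approach I would take is a computer search over adaptive decision trees. A candidate algorithm corresponds to a tree whose internal nodes are labeled by the next vertex of $\{0,1\}^4$ to query, whose out-edges are indexed by the $16$ possible outmap responses in $\{0,1\}^4$, and whose leaves are labeled either by a previously queried sink or by a pair of previously queried vertices that form a clash. Such a tree is a valid depth-$7$ algorithm if and only if, for every root-to-leaf path, the leaf label is consistent with the oracle responses along the path, and every response sequence that does not yet display a sink or a clash continues down some child. I would write a program that enumerates such trees by depth-first search with a budget of $7$, fixing the first query and, up to isomorphism, its effective response class to canonical choices, and using the hypercube's symmetry group $\{0,1\}^4 \rtimes S_4$ together with the global flip-all symmetry of \Cref{flipall} to identify equivalent subproblems and prune aggressively.

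At each internal node, the $16$ possible responses split into three kinds: responses with $s(u) = 0^4$, which immediately return $u$ as a sink; responses that clash with some previously queried vertex, which immediately close that branch with a clash certificate; and the remaining open branches, which must be solved recursively within the decremented depth budget. The program's task at each open state is to pick the next vertex to query in such a way that every resulting open branch is still solvable within the remaining budget. The main obstacle I expect is tractability: even with a budget of $7$, the naive search space is astronomical, so the success of the plan hinges on effective canonicalization of partial evaluations under the cube symmetries and a good branching heuristic. Once the program returns a valid tree, the bound $t(4) \leq 7$ follows by finite, mechanical verification of that tree, matching the lower bound and completing the proof.
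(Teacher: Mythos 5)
Your proposal matches the paper's approach: the lower bound $t(4) \ge 7$ comes from $q(4)=7$ together with \Cref{notharder}, and the upper bound is established by an exhaustive computer search over adaptive query strategies with symmetry-based pruning (the paper phrases this as a two-player minimax game between a querier and an adversarial oracle, via its \texttt{Choose\_Vertex} and \texttt{Choose\_Outmap} routines, which is the same object as your decision-tree DFS). The paper's symmetry reduction is in fact lighter than what you propose — it canonicalizes only the first query to $0^n$, its response up to the number of ones, and the induced partition of dimensions, rather than canonicalizing at every node — but the underlying method is identical.
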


This value was found computationally, using \Cref{choosevertex} and \Cref{chooseoutmap}.

\begin{figure}[htbp]\centering
\begin{algorithm}[H]
\caption{Choose\_Vertex}
\begin{algorithmic}[1]\label{choosevertex}
\STATE \textbf{Input:} A dimension $n$, a hypercube $\mathfrak{C}$, the number of queries $q$, a list of known vertices $T$, a map from the known vertices to their outmaps $s$ where no pair of vertices clash and none is the sink
\STATE \textbf{Output:} A boolean which is True if and only if a vertex can be picked which ensures a clash or a sink is found using at most $q$ queries, including the ones already made in $T$
\IF{$\textbf{length}(T) = q$}
    \RETURN \FALSE
\ENDIF
\FOR{each $u$ in $\mathfrak{C} \backslash T$}
    \IF{$\textbf{not} \text{ Choose\_Outmap}(n, \mathfrak{C}, q, T, s, u)$}
        \RETURN \TRUE
    \ENDIF
\ENDFOR
\RETURN \FALSE
\end{algorithmic}
\end{algorithm}

\begin{algorithm}[H]
\caption{Choose\_Outmap}
\begin{algorithmic}[1]\label{chooseoutmap}
\STATE \textbf{Input:} A dimension $n$, a hypercube $\mathfrak{C}$, the number of queries $q$, a list of known vertices $T$, a map from the known vertices to their outmaps $s$ where no pair of vertices clash and none is the sink, an unassigned vertex $u$
\STATE \textbf{Output:} A boolean which is True if and only if an outmap can be picked for $u$ such that it is possible to ensure neither a sink not a clash is found using at most $q$ queries, including the ones already made in $T$
\FOR{$outmap \in \{0, 1\}^n \backslash \left\{0^n\right\}$}
    \STATE $fail \gets \FALSE$
    \FOR{$v \in T$}
        \IF{$(outmap \oplus s(v)) \wedge (u \oplus v) = 0^n$}
            \STATE $fail \gets \TRUE$
            \BREAK
        \ENDIF
    \ENDFOR
    \IF{\textbf{not} fail \AND $\textbf{not} \text{ Choose\_Vertex}(n, \mathfrak{C}, q, T \cup \{u\}, s \cup \langle u, outmap \rangle)$}
        \RETURN \TRUE
    \ENDIF
\ENDFOR
\RETURN \FALSE
\end{algorithmic}
\end{algorithm}
\end{figure}

These algorithms simulate a two-player game. The first player picks a vertex in $\mathfrak{C}$ and the second player returns an outmap. The second player loses if he cannot return any outmap which is not a sink or that does not clash with a previously known vertex. The first player wins if he can ensure that the second player loses within some determined number of queries.

In addition to that, it uses the initial symmetry of the problem to reduce the number of branches of the game tree it searches.

Initially, it can be assumed that the first vertex the first player queries $0^n$.

After that, there are only $n$ possible non-sink values the second player can give, up to symmetry, which correspond to how many $1$'s appear in the bitstring representation of the value. The algorithm considers as potentials values for $s(0^n)$ the bitstrings of the form $\bigvee_{l=1}^i e_l$ for any $i \in [n]$.

Finally, for each fixed $i$, there is a partition of the dimensions $[n]$ in two subsets $[i]$ and $[n] \backslash [i]$, corresponding to those that have an outgoing and incoming edge to $0^n$, respectively. Hence, the next vertex the first player chooses only depends on the number of $1$'s in the bitstring within both subsets independently.

If for some $i$, the corresponding outmap given to $0^n$ is such that for any vertex the first player picks next, the second player can always win, then the second player has a winning strategy of choosing $s(0^n) = \bigvee_{l=1}^i e_l$. Otherwise, the first player wins.

When analyzing the $5$-dimensional hypercube, according to \cite[Proposition 5.8]{schurrthesis} and \cite[Theorem 3.3]{smallalg}, it follows that $9 \leq q(5) \leq 12$. Therefore, $t(5) \geq 9$. Moreover, the Adapted Improved Fibonacci Seesaw yields $t(5) \leq 2 + t(0) + 5t(1) = 13$.

For $n \geq 5$, it is infeasible to use \Cref{choosevertex} and \Cref{chooseoutmap} to exactly compute $t(n)$, even when using the initial symmetry of the cube. Furthermore, we were still able to discover using \Cref{choosevertex} and \Cref{chooseoutmap} that the second player wins when $n = 5$ and at most $10$ queries are used, so it follows that $t(5) \geq 11$.

\begin{lemma}\label{t5}
    $11 \leq t(5) \leq 13$.
\end{lemma}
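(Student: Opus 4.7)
The proof splits naturally into the two bounds, and the two halves have very different flavors. For the upper bound I would simply specialize the recursion from the Adapted Improved Fibonacci Seesaw, already established in the preceding subsection, to the case $n = 5$. Namely, using $t(n) \le 2 + \sum_{k=0}^{n-5} t(k) + 5\,t(n-4)$ at $n=5$ gives $t(5) \le 2 + t(0) + 5\,t(1)$. Since $t(0) = 1$ (the lone vertex must be queried) and $t(1) = 2$ by \Cref{t1}, this evaluates to $2 + 1 + 10 = 13$, which is the desired upper bound.

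For the lower bound the naive route through \Cref{notharder} only yields $t(5) \ge q(5) \ge 9$, which is two short. The plan is therefore to invoke the adversary game formalized by \Cref{choosevertex} and \Cref{chooseoutmap}: interpret any deterministic algorithm with a budget of $q$ queries as a strategy for the first (vertex-picking) player, and show that for $q = 10$ the second (outmap-responding) player has a winning strategy, meaning it can always extend the partial outmap revealed so far to an assignment avoiding both a sink and a clash. Running \textsc{Choose\_Vertex}$(5, \mathfrak{C}, 10, \emptyset, \emptyset)$ and obtaining \textsc{False} certifies this directly, and by definition forces $t(5) \ge 11$.

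The main obstacle is computational feasibility rather than conceptual: the naive game tree over all vertex/outmap pairs at $n=5$ is far too large. To make the search tractable I would apply exactly the symmetry reductions sketched after \Cref{chooseoutmap} — fix the first query to be $0^n$, reduce the second player's first response to one of $n$ equivalence classes indexed by the Hamming weight of $s(0^n)$, and thereafter only branch on the combinatorial type of the queried vertex relative to the partition $[i] \sqcup ([n]\setminus[i])$. With these quotients in place, the depth-$10$ search at $n=5$ becomes manageable, and the output \textsc{False} of the top-level call supplies the bound. I would also verify by direct inspection that the clash test $(outmap \oplus s(v)) \wedge (u \oplus v) = 0^n$ and the sink exclusion $outmap \neq 0^n$ are implemented faithfully in \Cref{chooseoutmap}, so that a computer-assisted verdict can be trusted as a proof.

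Finally, combining $t(5) \ge 11$ from the adversary computation with $t(5) \le 13$ from the Fibonacci Seesaw recursion yields the claimed $11 \le t(5) \le 13$.
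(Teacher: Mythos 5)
Your proposal is correct and matches the paper's approach exactly: the upper bound $t(5)\le 13$ comes from specializing the Adapted Improved Fibonacci Seesaw recursion $t(n)\le 2+\sum_{k=0}^{n-5}t(k)+5t(n-4)$ at $n=5$ (with $t(0)=1$, $t(1)=2$), and the lower bound $t(5)\ge 11$ is obtained by running the adversary search of \Cref{choosevertex} and \Cref{chooseoutmap} with symmetry reductions to show that the second player can survive any $10$-query strategy.
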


\subsection{Randomized Algorithms}

Although in up to dimension $4$ the number of required queries to solve \textsc{Sink-or-Clash} and \textsc{Sink-USO} deterministically is the same, this will not happen when randomized algorithms are allowed.

For a $1$-dimensional hypercube, when the chosen outmap $s$ is such that $s(0) = s(1) = 1$, then both vertices are sources. As no algorithm is able to find a sink, and it must evaluate both vertices to find the clash. Moreover, evaluating both vertices is always enough, as it receives the full information about the outmap. Therefore, $\tilde{t}(1) = 2$, which is already different from \textsc{Sink-USO}, as $\tilde{q}(1) = 1.5$.

\begin{lemma}
    $\tilde{t}(1) = 2$.
\end{lemma}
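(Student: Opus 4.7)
The plan is to prove $\tilde{t}(1) = 2$ via matching upper and lower bounds, both essentially immediate once the correct adversarial input is identified.

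For the upper bound $\tilde{t}(1) \leq 2$, I would simply note that the deterministic algorithm which queries both vertices $0$ and $1$ uses exactly $2$ queries and is always correct: if either has outmap $0$, that vertex is a sink; otherwise, $s(0) = s(1) = 1$, which by \Cref{clash} is a clash since $(s(0) \oplus s(1)) \wedge (0 \oplus 1) = 0 \wedge 1 = 0$. This is also a valid (trivially randomized) algorithm, so $\tilde{t}(1) \leq 2$.

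For the lower bound $\tilde{t}(1) \geq 2$, I would use the adversarial input $s^\star$ defined by $s^\star(0) = s^\star(1) = 1$. I would argue that on this input, no correct algorithm can terminate after only a single query, regardless of randomness. Indeed, a valid output for \textsc{Sink-or-Clash} is either a vertex $u$ that has been queried and satisfies $s(u) = 0^n$, or a pair of distinct queried vertices forming a clash. After a single query to some vertex $v \in \{0, 1\}$, the algorithm has observed $s^\star(v) = 1 \neq 0$, so $v$ is not a valid sink; and since only one vertex has been queried, no pair of queried vertices exists to certify a clash. Hence every run of any algorithm on $s^\star$ performs at least $2$ queries, giving $\mathbb{E}(\tilde{t}(\mathcal{A}, s^\star)) \geq 2$ for every randomized algorithm $\mathcal{A}$, and therefore $\tilde{t}(1) \geq 2$.

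Combining the two bounds yields $\tilde{t}(1) = 2$. There is no real obstacle here: the argument relies on the observation that a clash is the only kind of non-sink certificate allowed by \Cref{sinkorclash}, and any clash certificate inherently requires at least two evaluations. The lemma is mainly notable because it already distinguishes the randomized query complexity of \textsc{Sink-or-Clash} from that of \textsc{Sink-USO}, where $\tilde{q}(1) = 3/2$ can be achieved by querying a single uniformly random vertex and exploiting the promise to deduce the other outmap.
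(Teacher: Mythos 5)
Your proposal is correct and follows essentially the same approach as the paper: use the adversarial outmap $s(0)=s(1)=1$ to force two queries in every run (since a clash certificate requires two evaluated vertices and neither vertex is a sink), and note that two queries always suffice. The paper's proof is stated more tersely but contains the same ideas, including the closing comparison with $\tilde{q}(1)=3/2$.
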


For a $2$-dimensional hypercube, there are already $2^{2 \cdot 2^2} = 256$ possible outmaps (some of them isomorphic), so this already becomes infeasible to compute by hand.

Tessaro designed a linear program whose optimal solution corresponds to the expected number of queries in the optimal randomized algorithm for \textsc{Sink-USO} \cite[Equation 4.2]{3rand}. In his thesis, he considers the equivalent two player game, in which initially the adversary picks an outmap and then the algorithm (which we denote by the main player) performs some queries until a sink is found. By considering the payoff of a branch of the game tree to be the number of queries made, the main player wishes to minimize that number, using some randomized strategy.

He defines a history $H$ to be a sequence of vertices and their outmaps, such that there is no sink or clash. This is all the information that the main player has at a certain state of the game. Afterwards, he lets a sequence $S$ be a history together with a vertex that has not been evaluated yet. Finally, he lets $\Omega_n$ to be the set of all $n$-dimensional USOs, $\mathcal{H}_n$ to be the set of all histories and $\mathcal{S}_n$ to be the set of all sequences.

So, the strategy of the adversary can be encoded as a vector $y \in \mathbb{R}^{|\Omega_n|}$, where $y_\mathcal{O}$ is the probability of the adversary picking the outmap $\mathcal{O} \in \Omega_n$. The strategy of the main player can be explained as a vector $x \in \mathbb{R}^{|\mathcal{S}_n|}$, where the probability of the main player picking the vertex $P$ after knowing history $H$ is $x_{(H, P)} / x_{\sigma(H)}$, where $\sigma(H)$ is the sequence leading to the history $H$. In order for this to be a probability distribution, then \begin{align*}
    \sum_{P \in \mathfrak{C}} x_{(\emptyset, P)} &= 1 \\
    \sum_{P \text{ not queried in } H} x_{(H, P)} &= x_{\sigma(H)} \text{ for all } H \in \mathcal{H}_n \backslash \emptyset. 
\end{align*}

Finally, the payoff is represented by a matrix $M \in \mathbb{R}^{|\mathcal{S}_n|} \times \mathbb{R}^{|\Omega_n|}$, where $M_{S, \mathcal{O}}$ is the number of vertices queried in $S$ if the last vertex will lead to a sink, and otherwise it is $0$.

The optimal value is going to be $\min_x \max_y x^TMy$. Given a fixed strategy for the main player, the adversary can always pick his best strategy to be pure (in the sense that there exists some $\mathcal{O} \in \Omega_n$ such that $y_\mathcal{O} = 1$ and $y_{\mathcal{O}'} = 0$ for $\mathcal{O}' \neq \mathcal{O}$). Thus, Tessaro shows that the optimal value of the following linear program is the optimal value for a randomized algorithm for \textsc{Sink-USO} \cite[Lemma 4.7]{3rand}.

\begin{align*}
    \text{minimize } & v \\
    \text{subject to} & \sum_{S \in \mathcal{S}_n} M_{S, \mathcal{O}} \cdot x_S \leq v & \text{for all } \mathcal{O} \in \Omega_n \\
    &\sum_{P \in \mathfrak{C}} x_{(\emptyset, P)} = 1 \\
    &\sum_{P \text{ not queried in } H} x_{(H, P)} = x_{\sigma(H)} &\text{for all } H \in \mathcal{H}_n \backslash \emptyset. \\
    & x_S \geq 0 &\text{for all } S \in \mathcal{S}_n 
\end{align*}

However, we wish to solve \textsc{Sink-or-Clash} and not \textsc{Sink-USO}. Fortunately, only small modifications are needed to move from one problem to the other. It is important to redefine $\Omega_n$ to be the set of all possible outmaps, as now the adversary can choose an outmap which is not a USO. Moreover, it is essential to ensure that the histories and sequences do not have clashes (in addition to not having sinks). Finally, the payoff function is nonzero not only when the next vertex is a sink, but also when it clashes with another known vertex, as that is the other possible solution. These small modifications are enough to ensure that the linear program has the same optimal value as the best randomized algorithm for \textsc{Sink-or-Clash}, in a proof similar to the one presented in \cite[Lemma 4.7]{3rand}.

We designed a program that was able to write this linear problem, which consisted of $225$ variables, $256$ inequalities and $141$ equalities. We used Maple to solve it, giving the answer $\tilde{t}(2) = 46/17 \approx 2.706$, which is also larger than $\tilde{q}(2) = 43/20 = 2.15$ \cite[Lemma 5.1]{upperuso}.

\begin{lemma}
    $\tilde{t}(2) = \frac{46}{17}$.
\end{lemma}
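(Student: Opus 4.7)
The plan is to apply, verbatim, the linear-programming game-theoretic framework described just above this lemma, but with the three modifications already highlighted: $\Omega_2$ is the set of all $256$ outmaps $s\colon\{0,1\}^2\to\{0,1\}^2$; the histories in $\mathcal{H}_2$ are exactly the partial evaluations that are free of both sinks and clashes; and the payoff matrix $M$ records the length of a sequence $S$ whenever its final queried vertex is either a sink of $\mathcal{O}$ or clashes (under $\mathcal{O}$) with some earlier vertex of $S$, and is $0$ otherwise.

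First I would carry out the enumeration. Iterating over all $256$ outmaps and all subsets of the $4$ vertices of $\{0,1\}^2$ gives, by direct count, the $225$ sequence variables $x_S$, the $141$ flow-conservation equalities (one per nonempty history, enforcing $\sum_P x_{(H,P)} = x_{\sigma(H)}$, plus the single normalization $\sum_P x_{(\emptyset, P)} = 1$), and one payoff inequality $\sum_S M_{S,\mathcal{O}}\, x_S \le v$ per outmap $\mathcal{O}$, for $256$ inequalities in total. I would then hand this LP to Maple, which returns the exact optimum $v = \tfrac{46}{17}$.

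Finally I would invoke LP strong duality to turn this single number into the required two-sided bound. The optimal primal $x^{*}$ encodes a randomized decision tree whose expected query count on every outmap is at most $\tfrac{46}{17}$, proving $\tilde{t}(2)\le \tfrac{46}{17}$. The optimal dual $y^{*}$ encodes a distribution over outmaps against which \emph{every} randomized algorithm (i.e.\ every feasible $x$) must spend at least $\tfrac{46}{17}$ queries in expectation, proving $\tilde{t}(2)\ge \tfrac{46}{17}$. Combined, these give the claimed equality.

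The only non-routine part is the bookkeeping in the enumeration step: one must be careful that histories are pruned as soon as a clash or sink appears (rather than only when a sink appears, as in Tessaro's original), and that $M$ is populated symmetrically in the ``sink'' and ``clash'' cases, so that detecting a clash is rewarded identically to detecting a sink. Once the LP is correctly assembled, the remaining argument is purely computational, and since the answer $\tfrac{46}{17}$ is a rational number with a small denominator there is no concern about numerical precision in the solver.
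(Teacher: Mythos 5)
Your proposal follows essentially the same route as the paper: both adapt Tessaro's zero-sum game LP with the same three modifications (all $256$ outmaps as adversary strategies, clash-free histories, and payoff triggered by clash as well as sink), arrive at the same LP sizes ($225$ variables, $256$ inequalities, $141$ equalities), and solve it exactly with Maple to get $46/17$. The explicit appeal to strong duality for the two-sided bound is left implicit in the paper but is the same underlying argument.
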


Interestingly, the use of the Adapted Product Algorithm using $\tilde{t}(2) = 46/17$ as a base case yields an upper bound of $\tilde{t}(n) \in O\left(\sqrt{46/17}^n\right) = O(1.645^n)$, which is worse than the Adapted Fibonacci Seesaw.

In order to find $\tilde{t}(3)$, we use the symmetry of the $3$-hypercube, as in \cite[Subsection 4.2]{3rand}. This uses the fact that there exists an optimal strategy for the main player in which he gives the same probability to isomorphic sequences (which are sequences that can be mapped into one another using some automorphisms of the hypercube). This follows from the same proof as in \cite[Theorem 4.15]{3rand}. When using the symmetry with the $2$-dimensional hypercube, the obtained linear program only has $43$ variables, $14$ inequalities and $20$ equalities, which is a significant decrease when compared to the previous one. Unfortunately, the linear program we obtain for the $3$-dimensional hypercube consists of $87716$ variables, $352744$ inequalities and $57880$ equalities, which is too large to solve exactly using the computational resources available to us. We used Gurobi to obtain the following approximation.

\begin{lemma}
    $\tilde{t}(3) \approx 3.591333$.
\end{lemma}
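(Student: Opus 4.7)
The plan is to extend the linear program formulation already used for $\tilde{t}(2) = 46/17$ to the $3$-dimensional case, exploit the symmetry of the hypercube to reduce the size of the program, and then solve the reduced program numerically. The fundamental setup is the two-player game already described: the adversary commits to an outmap (now drawn from the set $\Omega_3$ of \emph{all} outmaps, not just USOs), while the main player chooses a randomized sequence of vertex evaluations and stops upon finding a sink or a clash. Encoding the adversary's strategy as a distribution over $\Omega_3$ and the main player's strategy as weights $x_S$ over clash-free and sink-free sequences $S \in \mathcal{S}_3$, the optimal expected query count is the value of a linear program of exactly the same structure as the one solved for $\tilde{t}(2)$, just with $n = 3$.

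First I would enumerate all outmaps, histories and sequences for $n = 3$ subject to the no-sink and no-clash constraints, and build the payoff matrix $M$, where $M_{S, \mathcal{O}}$ counts the vertices of $S$ whenever the last vertex of $S$ either has outmap $0^3$ under $\mathcal{O}$ or clashes with some earlier vertex in $S$. Naively the resulting program is vast (hundreds of thousands of constraints), so the crucial step is the symmetry reduction, carried over verbatim from Tessaro's argument for $\tilde{q}(3)$: by a convexity argument, there is an optimal strategy in which isomorphic sequences (those related by a hypercube automorphism acting jointly on the vertex coordinates and on the outmap coordinates) receive equal weight. One identifies orbits of sequences under the automorphism group of the hypercube, introduces one variable per orbit, and rewrites the normalisation and payoff constraints as sums over orbits.

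After this reduction the program has, as reported, roughly $87{,}716$ variables, $352{,}744$ inequalities and $57{,}880$ equalities. I would implement the orbit enumeration and the construction of the sparse constraint matrix programmatically (keyed on canonical representatives of each orbit), then feed the resulting LP into a numerical solver such as Gurobi. The correctness of the formulation is essentially a transcription of the $\tilde{t}(2)$ proof, so the mathematical content is routine once the machinery is in place; the main obstacle is purely computational, namely managing the memory footprint of the orbit enumeration and obtaining a reliable numerical optimum. With the solver run to high precision, the optimal value is $\tilde{t}(3) \approx 3.591333$, which matches the claim. Because Gurobi returns a floating-point solution, the result is stated as an approximation rather than an exact rational, in contrast to the exact value $46/17$ obtained for $\tilde{t}(2)$.
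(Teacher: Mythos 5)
Your proposal matches the paper's approach exactly: both adapt Tessaro's LP formulation to the non-promise setting (allowing non-USO outmaps, restricting sequences to be clash-free as well as sink-free, and extending the payoff to count clash-terminations), both exploit hypercube automorphisms to collapse the LP to orbit-representatives following Tessaro's symmetry lemma, and both solve the resulting LP of the stated size numerically with Gurobi, yielding the approximate value. No substantive difference.
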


When using the Adapted Product Algorithm with $\tilde{t}(3) \approx 3.591333$ as a base case, this creates an upper bound of $\tilde{t}(n) \in O\left(\sqrt[3]{3.591333}\right) = O\left(1.531^n\right)$, which is the best upper bound so far.

\begin{remark}
    The C++ code for writing this linear program taking into consideration symmetry is available at \url{https://github.com/TiagoMarques13/Non-Promise-Version-of-Unique-Sink-Orientations}. This is a variation of the implementation presented in \cite[Algorithm 1]{3rand}.
\end{remark}

\section{Resolution Proof}\label{section_resproof}

The class \texttt{TFNP} contains all total \texttt{NP} search problems, i.e., search problems where a solution is guaranteed to exist and checking such solution can be done in polynomial time. \texttt{TFNP} has been studied by defining subclasses using polynomial-time reductions to some natural complete problem \cite{10.1145/3663758}. In this paper, we only analyze the black-box model, in which the input is described as a black-box, instead of a white-box \cite{10.1145/3663758}. Formally, in the black-box model, the input can be seen as a (possibly exponentially) long bitstring, which can be accessed by queries to a random access single-bit oracle. Moreover, in this model, the algorithms and reductions are decision trees, and only the query complexity, i.e. the depth of the tree, is analyzed. This model is a natural choice for \textsc{Sink-or-Clash}, because as stated in \Cref{introduction}, we only analyze the number of queries, ignore the time between queries and we assume that the outmap is given by an oracle. The classes in this model are denoted by their name in the white-box model with a superscript \texttt{dt}, standing for ``decision tree''.

There exists a natural bijection between total search problems and unsatisfiable CNF formulas \cite[Subsection 2.1]{10.1145/3663758}. Given a total search problem, its equivalent CNF is created as a conjunction stating that every possible solution is not a solution. As a solution must exist for that total search problem, this CNF must be unsatisfiable. Vice versa, given an unsatisfiable CNF formula, it is possible to define a total search problem. Given a specific assignment of the variables (which corresponds to an instance of the problem), the problem consists in finding an unsatisfied clause, which must exist, as the CNF is unsatisfiable.

Several recent works have found connections between the query complexity of total search problems and the proof complexity of their related formulas. Hence, it is possible to assign proof systems and a measure of efficiency to some classes, so that a problem is in a certain class if and only if its CNF formula has an efficient proof in a determined proof system. A proof system for unsatisfiable CNFs can be seen as a relation $R \subseteq \{0, 1\}^* \times \{0, 1\}^*$ that is polynomial-time decidable, sound (i.e., if $(x, \mathcal{F}) \in R$, then $\mathcal{F}$ is an unsatisfiable CNF) and complete (in the sense that for any unsatisfiable CNF $\mathcal{F}$, there exists some $x \in \{0, 1\}^*$ such that $(x, \mathcal{F}) \in R$) \cite[Definition 1.5.1]{Krajíček_2019}.

The following image (adapted from \cite[Figure 2]{10.1145/3663758}) shows several classes in $\texttt{TFNP}^\texttt{dt}$ together with their corresponding proof systems.

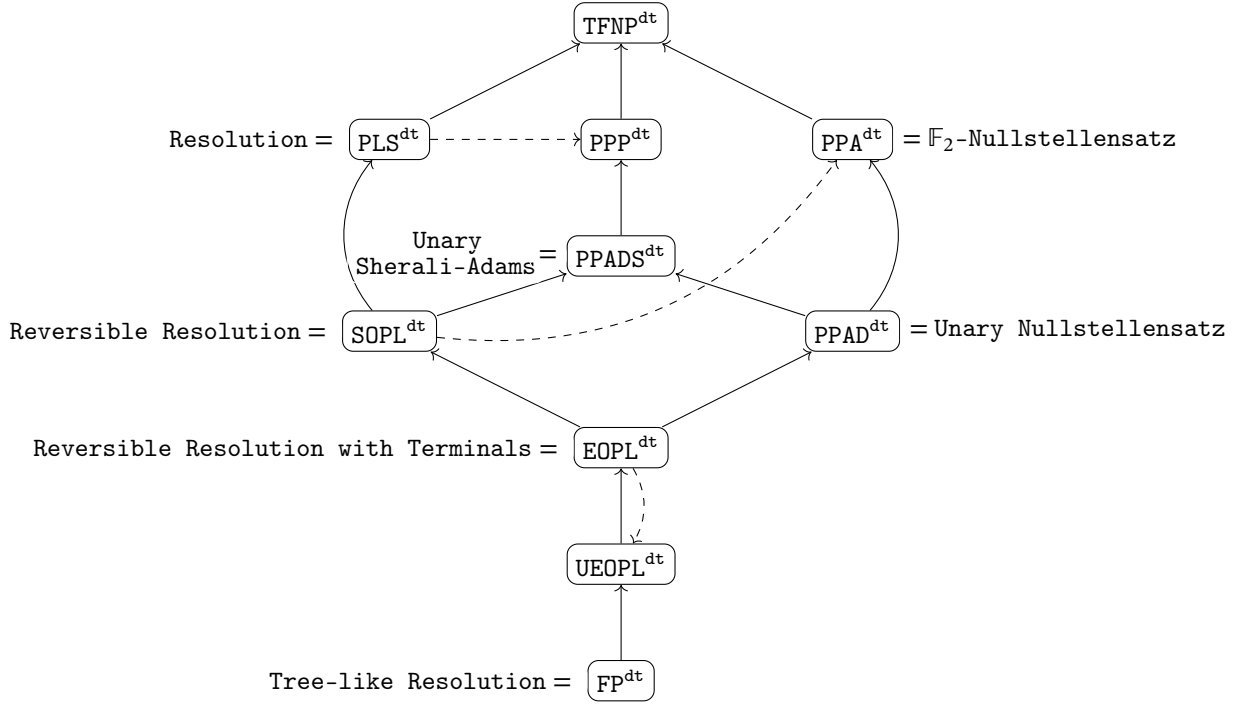
\begin{figure}[htbp] \centering
\begin{tikzpicture}[
    every node/.style={draw, rounded corners, align=center}
]
\node (TFNP) {$\texttt{TFNP}^\texttt{dt}$};
\node (PPP) [below=1cm of TFNP] {$\texttt{PPP}^\texttt{dt}$};
\node (PPA) [right=2cm of PPP] {$\texttt{PPA}^\texttt{dt}$};
\node (PPADS) [below=1cm of PPP] {$\texttt{PPADS}^\texttt{dt}$};
\node (PLS) [left=2cm of PPP] {$\texttt{PLS}^\texttt{dt}$};
\node (PPAD) [below=2cm of PPA] {$\texttt{PPAD}^\texttt{dt}$};
\node (SOPL) [below=2cm of PLS] {$\texttt{SOPL}^\texttt{dt}$};
\node (EOPL) [below=2cm of PPADS] {$\texttt{EOPL}^\texttt{dt}$};
\node (UEOPL) [below=1cm of EOPL] {$\texttt{UEOPL}^\texttt{dt}$};
\node (FP) [below=1cm of UEOPL] {$\texttt{FP}^\texttt{dt}$};
\draw[->] (PLS) -- (TFNP);
\draw[->] (PPP) -- (TFNP);
\draw[->] (PPA) -- (TFNP);
\draw[->] (SOPL) edge [bend left = 40] (PLS);
\draw[->] (PPADS) -- (PPP);
\draw[->] (SOPL) -- (PPADS);
\draw[->] (PPAD) edge [bend right = 40] (PPA);
\draw[->] (PPAD) -- (PPADS);
\draw[->] (EOPL) -- (SOPL);
\draw[->] (EOPL) -- (PPAD);
\draw[->] (UEOPL) -- (EOPL);
\draw[->] (FP) -- (UEOPL);
\draw[->, dashed] (PLS) -- (PPP);
\draw[->, dashed] (EOPL) edge [bend left] (UEOPL);
\draw[->, dashed] (SOPL) edge [bend right] (PPA);
\node[draw=white, left=0cm of PLS] {\texttt{Resolution} = };
\node[draw=white, right=0cm of PPA] { = $\mathbb{F}_2$\texttt{-Nullstellensatz}};
\node[draw=white, left=0cm of SOPL] {\texttt{Reversible Resolution} = };
\node[draw=white, right=0cm of PPAD] { = \texttt{Unary Nullstellensatz}};
\node[draw=white, left=0cm of EOPL] {\texttt{Reversible Resolution with Terminals} = };
\node[draw=white, left=0cm of FP] {\texttt{Tree-like Resolution} = };
\node[draw=white, above left=-0.37cm and 1cm of PPADS] {\texttt{Unary}};
\node[draw=white, below left=-0.37cm and 0.28cm of PPADS] {\texttt{Sherali–Adams}};
\node[draw=white, left=0cm of PPADS] {=};
\end{tikzpicture}
\caption{Representation of several classes in $\texttt{TFNP}^\texttt{dt}$. As in \cite[Figure 2]{10.1145/3663758}, a non-dashed arrow means inclusion and a dashed arrow means not inclusion. Some classes have their equivalent proof systems next to them.}
\label{img-tfnp}
\end{figure}

The problem \textsc{Sink-or-Clash} lies in $\texttt{UEOPL}^\texttt{dt}$ (Unique End of Potential Line) \cite[Theorem 24]{FEARNLEY20201}. Hence, it follows that it is also contained in $\texttt{PLS}^\texttt{dt}$, and so it has an efficient resolution proof, which we will formally define next, following the definitions in \cite[Subchapter 2.2]{10.1145/3663758} and in \cite[Chapter 5]{Krajíček_2019}.

\begin{definition}[Resolution Step]
    Given two disjunctive clauses $C$ and $D$ and a literal $x$, a \emph{resolution step} is the derivation $(C \vee x), (D \vee \neg x) \vdash (C \vee D)$.
\end{definition}

\begin{definition}[Resolution Proof]
    Given an unsatisfied CNF formula $\mathcal{F} \coloneq (C_1 \wedge \cdots \wedge C_d)$, a \emph{resolution proof} is a sequence of resolution steps which use either clauses in $\mathcal{F}$ or clauses derived in previous steps to derive new clauses, until it derives the empty class, i.e., $\bot$.
\end{definition}

\begin{definition}[Size of a Resolution Proof]
    The \emph{size} of a resolution proof is the number of resolution steps in the proof. It will be denoted by $s(\mathcal{P})$, where $\mathcal{P}$ is a proof.
\end{definition}

\begin{definition}[Width of a Resolution Proof]
    The \emph{width} of a resolution proof is defined as the largest number of literals in any clause in the proof. It will be denoted as $w(\mathcal{P})$, where $\mathcal{P}$ is a proof.
\end{definition}

\begin{definition}[Efficient Resolution Proof]
    A resolution proof $\mathcal{P}$ for a formula $\mathcal{F}$ on $N$ variables is efficient if $\log(s(\mathcal{P})) + w(\mathcal{P}) \in \text{polylog}(N)$.
\end{definition}

It is known that resolution is a proof system which is sound and complete \cite[Theorem 5.1.3]{Krajíček_2019}. This means that all derivable clauses are entailed by the initial clauses, and if the initial formula is unsatisfiable, then $\bot$ is derivable.

\begin{lemma}\cite{resproof}
    An unsatisfiable CNF formula $\mathcal{F}$ has an efficient resolution proof if and only if its search problem $\mathcal{P}$ is in $\texttt{PLS}^\texttt{dt}$.
\end{lemma}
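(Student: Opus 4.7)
The plan is to prove both directions of the equivalence separately, using the standard correspondence between $\texttt{PLS}^{\texttt{dt}}$ and resolution.

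For the direction from efficient resolution proof to $\texttt{PLS}^{\texttt{dt}}$, I would construct a black-box reduction that walks the proof DAG of $\mathcal{P}$ from the empty clause to a falsified axiom. The configurations of the PLS instance are the clauses of $\mathcal{P}$, each representable by $O(w \log N)$ bits since it contains at most $w$ literals. The potential of a clause is its depth from the axioms in the DAG. The key invariant maintained by the walk is that the current clause is falsified by the queried input. Starting at $\bot$ (which is vacuously falsified), at a clause $C$ derived by a step $(D \vee x), (E \vee \neg x) \vdash C$, we query the input at variable $x$; since $C$ is falsified, exactly one of the two parents is also falsified, and we move to it, strictly decreasing the potential. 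When we reach an axiom, it is a clause of $\mathcal{F}$ falsified by the input, i.e. a solution to the search problem. Each step needs only one fresh query, and the walk has length at most the height of the DAG, which is bounded by $\log(s)$ once the DAG is organized by potential levels. Combined with the $O(w)$ bits needed to describe each configuration, this gives a decision tree of depth $\log(s) + w \in \text{polylog}(N)$.

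For the reverse direction, I would start from the fact that $\texttt{PLS}^{\texttt{dt}}$ is characterized by black-box reductions to the canonical complete problem $\texttt{SinkOfDag}^{\texttt{dt}}$, so any $\mathcal{P} \in \texttt{PLS}^{\texttt{dt}}$ comes with a decision-tree reduction of depth $d \in \text{polylog}(N)$ encoding a successor function and a potential. For each configuration $C$ of this reduction, I would introduce a clause $\mathrm{Clause}(C)$ asserting ``the input does not have $C$ as a local optimum'', whose literals are exactly the input bits queried along the reduction's decision tree for $C$; this gives width at most $d \in \text{polylog}(N)$. The refutation is then assembled by processing configurations in decreasing order of potential: at a local optimum we obtain a clause directly from a solution axiom of $F_\mathcal{P}$, and at a non-optimum we resolve $\mathrm{Clause}(C)$ against $\mathrm{Clause}(\mathrm{successor}(C))$ along the queried variables witnessing the strict potential decrease. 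The starting configuration yields $\bot$. The total number of clauses is bounded by the number of configurations, which is at most $2^{\text{polylog}(N)}$, so $\log(s) + w \in \text{polylog}(N)$ as required.

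The main obstacle is the reverse direction, where one must carefully set up the clauses $\mathrm{Clause}(C)$ so that each alleged resolution step is in fact a legal application of the resolution rule on a single literal. This requires matching the literals queried at a configuration with those of its successor so that exactly one literal is resolved at a time, which typically needs a short auxiliary chain of resolutions per ``logical'' step. The forward direction is essentially a walk argument, and the only subtlety there is verifying that potential decrease together with width control yield $\text{polylog}(N)$ overall query complexity rather than merely polylog in the number of configurations.
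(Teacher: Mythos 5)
This lemma is stated in the paper with a citation to the literature (the \texttt{resproof} reference) and is not proved in the paper itself, so there is no internal proof to compare against; your proposal is an attempt to reconstruct the external argument.

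Your high-level plan follows the standard route (walking the refutation DAG in one direction, reverse-engineering a refutation from a $\texttt{SinkOfDag}^{\texttt{dt}}$ reduction in the other), but the forward direction as written contains a genuine error. You claim that ``the walk has length at most the height of the DAG, which is bounded by $\log(s)$ once the DAG is organized by potential levels.'' This is false: a resolution DAG on $s$ clauses can have height $\Theta(s)$, and relabeling nodes by levels does not compress the height. Fortunately, this bound is not what the reduction to $\texttt{PLS}^{\texttt{dt}}$ actually requires. In the black-box model, the cost of a $\texttt{PLS}^{\texttt{dt}}$ reduction is measured by the decision-tree depth needed to evaluate the potential and successor functions at a single node, and by the number of bits needed to name a node; the number of nodes (and hence the length of any path to a local optimum) may be quasipolynomial, i.e.\ $2^{\text{polylog}(N)}$, and no algorithm ever traverses the whole path. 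The correct accounting is: $\log s$ bits name a clause, and checking whether a given clause is falsified (and, if so, which parent to move to) costs at most $O(w)$ input queries, giving a reduction of complexity $O(\log s + w) \subseteq \text{polylog}(N)$. Your ``one fresh query per step'' phrasing also suggests you are imagining an algorithm that physically follows the walk, which would be exponentially many queries; that is not the $\texttt{PLS}^{\texttt{dt}}$ model. The reverse direction is sketched at a level where the acknowledged obstacle (making each alleged resolution step a legal single-variable resolution, with a short auxiliary chain per logical step) is exactly where the real work lies, and the sketch does not yet resolve it.
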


\begin{corollary}\label{cnfresproof}
    The CNF formula of \textsc{Sink-or-Clash} has an efficient resolution proof.
\end{corollary}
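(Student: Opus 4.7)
The plan is to derive the corollary directly from the preceding lemma together with the containment results already cited in the excerpt. Specifically, the lemma gives a characterization: a CNF formula admits an efficient resolution proof if and only if its associated total search problem lies in $\texttt{PLS}^\texttt{dt}$. So all I need is to verify that \textsc{Sink-or-Clash} lies in $\texttt{PLS}^\texttt{dt}$, at which point the corollary is immediate.

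For that membership, I would chain two facts that are already in the excerpt. First, Fearnley et al.\ have shown that \textsc{Sink-or-Clash} lies in $\texttt{UEOPL}^\texttt{dt}$ (cited just before the definition of resolution and recalled when the problem was introduced via \cite{FEARNLEY20201}). Second, the diagram in \Cref{img-tfnp} records the inclusion $\texttt{UEOPL}^\texttt{dt} \subseteq \texttt{EOPL}^\texttt{dt} \subseteq \texttt{SOPL}^\texttt{dt} \subseteq \texttt{PLS}^\texttt{dt}$, so in particular $\texttt{UEOPL}^\texttt{dt} \subseteq \texttt{PLS}^\texttt{dt}$. Combining these, \textsc{Sink-or-Clash} is in $\texttt{PLS}^\texttt{dt}$, and then the preceding lemma yields an efficient resolution proof of its CNF formula.

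There is essentially no technical obstacle here; the whole point of the corollary is to highlight that existence is a formal consequence of known complexity-theoretic results, and to motivate the remainder of the section, where a concrete resolution proof is explicitly constructed. The only thing to be careful about is referencing the correct CNF formula: one must check that the natural unsatisfiable CNF associated with \textsc{Sink-or-Clash} (with one clause per non-solution, as described in the bijection between total search problems and unsatisfiable CNFs) is indeed the one to which the lemma's equivalence applies, but this is the standard translation used in \cite{10.1145/3663758} and matches the setup of the lemma, so no additional work is needed.
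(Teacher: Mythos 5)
Your argument is exactly the paper's: \textsc{Sink-or-Clash} lies in $\texttt{UEOPL}^\texttt{dt}$ by \cite[Theorem 24]{FEARNLEY20201}, the inclusions $\texttt{UEOPL}^\texttt{dt} \subseteq \texttt{EOPL}^\texttt{dt} \subseteq \texttt{SOPL}^\texttt{dt} \subseteq \texttt{PLS}^\texttt{dt}$ give membership in $\texttt{PLS}^\texttt{dt}$, and the preceding lemma converts that into an efficient resolution proof of the associated CNF. This is correct and matches the paper's intended derivation.
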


However, such a resolution proof for the formula of \textsc{Sink-or-Clash} was not known yet.

\subsection{Resolution Proof of \textsc{Sink-or-Clash}}

In the specific case of \textsc{Sink-or-Clash} in an $n$-dimensional hypercube $\mathfrak{C}$, the corresponding conjunction has $n2^n$ variables, one for each bit of the outmap of each vertex. Given a vertex $u \in \mathfrak{C}$ and $i \in [n]$, we encode $s(u)_i$ as the literal $x_{u, i}$ in a way such that $x_{u, i}$ is true if and only if $s(u)_i = 1$.

Since \textsc{Sink-or-Clash} has two types of solutions (sinks or clashes), the corresponding formula has two types of clauses.
\begin{enumerate}
    \item The solution consists of a vertex $u \in \mathfrak{C}$ which is a sink. This can be identified as $\bigwedge_{i=1}^n \neg x_{u, i}$. In order for this not to be a solution, then there must exist the clause $\neg \bigwedge_{i=1}^n \neg x_{u, i} \equiv \bigvee_{i=1}^n x_{u, i}$. We will call such clauses \emph{non-sink clauses}.
    \item The solution consists of a pair of distinct vertices $u, v \in \mathfrak{C}$ which clashes. In order for this to happen, their outmap must be the same in the dimensions $I \coloneq \{i \in [n] \mid u_i \neq v_i\}$, which can be identified as $\bigvee_{o \in \{0, 1\}^I} \bigwedge_{i \in I} (x_{u, i} = o_i \wedge x_{v, i} = o_i)$. Notice that $x_{u, i} = o_i$ can be immediately translated as a literal. In order for this not to be a solution, then $\bigwedge_{o \in \{0, 1\}^I} \bigvee_{i \in I} (x_{u, i} \neq o_i \vee x_{v, i} \neq o_i)$. We will call such clauses \emph{non-clash clauses}.
\end{enumerate}

\begin{lemma}[CNF Formula of \textsc{Sink-or-Clash}]
    Given an outmap $s$ in a $n$-dimensional hypercube $\mathfrak{C}$ as an instance of \textsc{Sink-or-Clash}, the corresponding CNF formula is \[\mathcal{F}_n \coloneq \left(\bigwedge_{u \in \mathfrak{C}} \bigvee_{i \in [n]} x_{u, i} \right) \wedge \left(\bigwedge_{\substack{u, v \in \mathfrak{C} \\u \neq v}} \bigwedge_{o \in \{0, 1\}^{I_{u, v}}} \bigvee_{i \in I_{u, v}} \left(x_{u, i} \neq o_i \vee x_{v, i} \neq o_i\right)\right)\] where $I_{u, v} \coloneq \{i \in [n] \mid u_i \neq v_i\}$.
    
    As \textsc{Sink-or-Clash} is a total search problem, then $\mathcal{F}_n$ must be unsatisfiable.
\end{lemma}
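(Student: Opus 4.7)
The plan is to translate any hypothetical satisfying assignment of $\mathcal{F}_n$ into an outmap on $\mathfrak{C}$ that violates \Cref{fund}, so unsatisfiability follows directly from the USO characterization theorem. First, I would interpret a truth assignment of the variables $\{x_{u,i}\}$ as defining an outmap $s \colon \mathfrak{C} \to \{0,1\}^n$ via $s(u)_i \coloneq x_{u,i}$. The task is then to read off what the two clause families say about $s$.

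Next, I would verify the logical equivalences. The non-sink clauses $\bigvee_{i \in [n]} x_{u,i}$ are satisfied exactly when $s(u) \neq 0^n$, so no vertex is a sink. For the non-clash clauses, fix distinct $u,v \in \mathfrak{C}$ and set $I \coloneq I_{u,v}$. A clash between $u$ and $v$ is precisely the existence of some $o \in \{0,1\}^I$ with $s(u)_i = s(v)_i = o_i$ for all $i \in I$. Its negation, asserted clause-by-clause over all $o$, is that for every $o$ some position $i \in I$ has $s(u)_i \neq o_i$ or $s(v)_i \neq o_i$, which is exactly the family of non-clash clauses in $\mathcal{F}_n$. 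Hence any satisfying assignment yields an outmap with no clash.

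Finally, I would invoke \Cref{fund}: the absence of any clash means that $s$ represents a USO. By definition, every USO has a unique sink in each face, including the full cube $\mathfrak{C}$ itself, so some vertex $u^* \in \mathfrak{C}$ satisfies $s(u^*) = 0^n$, directly contradicting the non-sink conclusion from the previous paragraph. Thus no satisfying assignment exists and $\mathcal{F}_n$ is unsatisfiable.

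The main potential obstacle is purely bookkeeping in the second step, namely matching the quantifier structure $\bigwedge_o \bigvee_i (\cdots)$ of the non-clash clauses against the existential-in-$o$ formulation of the clash condition; once one writes the clash condition in that form, the de Morgan dualization is mechanical, and the rest of the argument reduces to a one-line appeal to \Cref{fund} plus the definition of a USO.
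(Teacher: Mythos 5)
Your proof is correct and essentially matches the paper's reasoning. The paper establishes the lemma by appealing to the general bijection between total search problems and unsatisfiable CNFs (and to the earlier observation, via \Cref{fund}, that \textsc{Sink-or-Clash} is total); you unwind that same appeal concretely by turning a hypothetical satisfying assignment into an outmap that is simultaneously sink-free and clash-free, then invoking \Cref{fund} to reach a contradiction, which is the same content made explicit.
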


\begin{example}\label{cnf1}
    For $n = 1$, the formula is \[\mathcal{F}_1 \coloneq x_{0, 1} \wedge x_{1, 1} \wedge (x_{0, 1} \vee x_{1, 1}) \wedge (\neg x_{0, 1} \vee \neg x_{1, 1}).\]

    The first two clauses claim that neither of the vertices is a sink, and the last two state that their outmap cannot be the same.
\end{example}

Let us now give a direct proof of \Cref{cnfresproof}, by constructing an efficient resolution proof of \textsc{Sink-or-Clash} recursively.

Let $\mathcal{P}_n$ denote the resolution proof for $\mathcal{F}_n$.

We begin with the induction basis, $n = 1$. The formula is $\mathcal{F}_1 = x_{0, 1} \wedge x_{1, 1} \wedge (x_{0, 1} \vee x_{1, 1}) \wedge (\neg x_{0, 1} \vee \neg x_{1, 1})$, as in \Cref{cnf1}. One initial resolution step derives $x_{0, 1}, (\neg x_{0, 1} \vee \neg x_{1, 1}) \vdash \neg x_{1, 1}$, and then finally $x_{1, 1}, \neg x_{1, 1} \vdash \bot$. This is a valid proof of $\mathcal{F}_1$, where $s(\mathcal{P}_1) = 2$ and $w(\mathcal{P}_1) = 2$.

Now, for some $n \in \mathbb{N}$ with $n \geq 2$, assume that the proof $\mathcal{P}_{n-1}$ has already been constructed. Fix some $u \in \mathfrak{C}_{1^n}^{[n-1]}$. For each $v \in \mathfrak{C}_{0^n}^{[n-1]}$, there exists a clause in $\mathcal{F}_n$ which is \[\left(\bigvee_{i \in [n-1] \colon u_i \neq v_i} (x_{u, i} \vee x_{v, i})\right) \vee \neg x_{u, n} \vee \neg x_{v, n}\] as $u_n = 1 \neq 0 = v_n$. This clause prevents $v$ and $u$ from being the sinks in the facets $\mathfrak{C}_{0^n}^{[n-1]}$ and $\mathfrak{C}_{1^n}^{[n-1]}$, respectively, but not global sinks.

Now, we derive the clause that shows that $v$ and $u$ cannot both be sinks simultaneously in $\mathfrak{C}_{0^n}^{[n-1]}$ and in $\mathfrak{C}_{1^n}^{[n-1]}$, respectively. We denote such clause as $C_{v, u} \coloneq \bigvee_{i \in [n-1]} (x_{u, i} \vee x_{v, i})$. We derive it using only two steps, as follows. \begin{align*}
    1.\;\;& \left(\bigvee_{i \in [n-1] \colon u_i \neq v_i} (x_{u, i} \vee x_{v, i})\right) \vee \neg x_{u, n} \vee \neg x_{v, n}, \bigvee_{i \in [n]} x_{u, i} \vdash \left(\bigvee_{i \in [n-1]} x_{u, i}\right) \vee \left(\bigvee_{i \in [n-1] \colon u_i \neq v_i} x_{v, i}\right) \vee \neg x_{v, n}. \\
    2.\;\;& \left(\bigvee_{i \in [n-1]} x_{u, i}\right) \vee \left(\bigvee_{i \in [n-1] \colon u_i \neq v_i} x_{v, i}\right) \vee \neg x_{v, n}, \bigvee_{i \in [n]} x_{v, i} \vdash \bigvee_{i \in [n-1]} (x_{u, i} \vee x_{v, i}).
\end{align*}

Now, focus on $\mathfrak{C}_{0^n}^{[n-1]}$. A clash in $\mathfrak{C}_{0^n}^{[n-1]}$ is also a clash in $\mathfrak{C}$, so all clauses in \[\bigwedge_{\substack{u, v \in \mathfrak{C}_{0^n}^{[n-1]} \\u \neq v}} \bigwedge_{o \in \{0, 1\}^{I_{u, v}}} \bigvee_{i \in I_{u, v}} \left(x_{u, i} \neq o_i \vee x_{v, i} \neq o_i\right)\] also appear in \[\bigwedge_{\substack{u, v \in \mathfrak{C} \\u \neq v}} \bigwedge_{o \in \{0, 1\}^{I_{u, v}}} \bigvee_{i \in I_{u, v}} \left(x_{u, i} \neq o_i \vee x_{v, i} \neq o_i\right).\]
The clauses $C_{v, u}$ for all $v \in \mathfrak{C}_{0^n}^{[n-1]}$ and a fixed $u \in \mathfrak{C}_{1^n}^{[n-1]}$ contain the non-sink clauses of $\mathcal{F}_{n-1}$, together with the extra literals $\bigvee_{i \in [n-1]} x_{u, i}$. Hence, it is possible to apply $\mathcal{P}_{n-1}$ to these clauses to solve $\mathfrak{C}_{0^n}^{[n-1]}$. By induction, at every resolution step, the used clauses either are the ones used in $\mathcal{P}_{n-1}$ or they have the extra literals $\bigvee_{i \in [n-1]} x_{u, i}$. This is true in the original clauses. Moreover, when applying a resolution step, if the used clauses contain the extra literals, they will just also appear in the derived clause. If a non-sink clause in $\mathcal{F}_{n-1}$ was not used in $\mathcal{P}_{n-1}$ for some vertex $v$, then it is impossible to derive $\bot$, as there exists a possible outmap which is a USO in which $v$ is the sink. Therefore, all non-sink clauses are used in $\mathcal{P}_{n-1}$, and so the extra literals will appear in the final clause. Hence, this recursion step derives $\bot \vee \bigvee_{i \in [n-1]} x_{u, i} = \bigvee_{i \in [n-1]} x_{u, i}$. This clause claims that $u$ is not a sink in $\mathfrak{C}_{1^n}^{[n-1]}$.

Now, perform this for each $u \in \mathfrak{C}_{1^n}^{[n-1]}$. As before, all non-clash clauses of $\mathfrak{C}_{1^n}^{[n-1]}$ also appear in $\mathcal{F}_n$, and now the non-sink clauses of $\mathfrak{C}_{1^n}^{[n-1]}$ were derived. As all clauses of $\mathcal{F}_{n-1}$ were derived, applying $\mathcal{P}_{n-1}$ will derive $\bot$, as required.

Now, we analyze the efficiency of $\mathcal{P}_n$.

\begin{lemma}
    $s(\mathcal{P}_n) \in O\left(4^{n^2}\right)$.
\end{lemma}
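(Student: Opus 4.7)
The plan is to carefully count the resolution steps in the recursive construction above to obtain a recurrence for $s(\mathcal{P}_n)$, and then to solve that recurrence by induction.

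First, I would count the cost of one recursion level. Fix the upper facet vertex $u \in \mathfrak{C}_{1^n}^{[n-1]}$: for each of the $2^{n-1}$ vertices $v \in \mathfrak{C}_{0^n}^{[n-1]}$, two resolution steps suffice to derive $C_{v,u}$ (as written explicitly in the construction), contributing $2 \cdot 2^{n-1} = 2^n$ steps. Then one invocation of $\mathcal{P}_{n-1}$ on the lower facet contributes $s(\mathcal{P}_{n-1})$ additional steps (each step of the simulated proof costs exactly one resolution step here, since carrying the extra literals $\bigvee_{i \in [n-1]} x_{u,i}$ along does not multiply the number of steps). Repeating this for each of the $2^{n-1}$ choices of $u$, and finally running $\mathcal{P}_{n-1}$ once more on the upper facet, gives
\[
s(\mathcal{P}_n) \;\le\; 2^{n-1}\bigl(2^n + s(\mathcal{P}_{n-1})\bigr) + s(\mathcal{P}_{n-1}) \;=\; 2^{2n-1} + (2^{n-1}+1)\, s(\mathcal{P}_{n-1}).
\]
For the target asymptotic bound I would relax this to the cleaner inequality $s(\mathcal{P}_n) \le 4^n + 2^n\, s(\mathcal{P}_{n-1})$.

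Second, I would prove by induction that $s(\mathcal{P}_n) \le 4^{n^2}$ for all $n \ge 1$, which immediately yields $s(\mathcal{P}_n) \in O(4^{n^2})$. The base case $s(\mathcal{P}_1) = 2 \le 4$ is immediate. For the inductive step, assuming $s(\mathcal{P}_{n-1}) \le 4^{(n-1)^2}$, I would compute $2^n \cdot 4^{(n-1)^2} = 2^{2n^2 - 3n + 2}$ and observe that this is at most $\tfrac{1}{2}\cdot 4^{n^2}$ for $n$ large, while $4^n$ is negligible compared to $4^{n^2}$; adjusting the constant absorbs the finitely many small cases. An alternative and slightly tighter path is to set $b_n = s(\mathcal{P}_n)/2^{n(n+1)/2}$, observe that the recurrence telescopes into $b_n \le b_1 + \sum_{k \ge 2} 2^{k(3-k)/2}$, which is a convergent geometric-type sum, and conclude $s(\mathcal{P}_n) \in O\bigl(2^{n(n+1)/2}\bigr) \subseteq O(4^{n^2})$.

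The main obstacle, such as it is, is the bookkeeping in the first step: one must verify that carrying the extra literals $\bigvee_{i \in [n-1]} x_{u,i}$ through the simulated proof $\mathcal{P}_{n-1}$ does not increase its step count (each derivation step of the simulated proof translates to exactly one actual resolution step whose premises happen to contain those extra literals), and that the non-clash clauses required for both the lower-facet and the upper-facet invocations are indeed already present in $\mathcal{F}_n$, so no extra derivation cost is hidden. Once the recurrence is justified, the asymptotic estimate is a routine induction.
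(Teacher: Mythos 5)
Your recurrence is exactly the one the paper derives: two steps per clause $C_{v,u}$, $2^{n-1}$ such clauses for each of the $2^{n-1}$ choices of $u$, one run of $\mathcal{P}_{n-1}$ per $u$ for the lower facet, and a final run of $\mathcal{P}_{n-1}$ for the upper facet, giving $s(\mathcal{P}_n) = s(\mathcal{P}_{n-1}) + 2^{n-1}\bigl(2\cdot 2^{n-1} + s(\mathcal{P}_{n-1})\bigr)$, which matches your expression term for term. Your solution of the recurrence is the same induction as the paper's, just phrased on the relaxed form $s(\mathcal{P}_n) \le 4^n + 2^n s(\mathcal{P}_{n-1})$; note the inductive step in fact holds for all $n \ge 2$ with no constant adjustment needed, since $4^n + 2^n\cdot 4^{(n-1)^2} \le 2^{2n^2-1} + 2^{2n^2-1} = 4^{n^2}$. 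Your alternative telescoping argument giving $s(\mathcal{P}_n) \in O\bigl(2^{n(n+1)/2}\bigr)$ is a genuinely sharper bound than the paper states, though of course both are $2^{O(n^2)}$ and hence suffice for the logarithm to be $O(n^2)$, which is all the efficiency definition requires.
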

\begin{proof}
    In order to derive $C_{v, u}$, we use $2$ resolution steps. For a fixed $u \in \mathfrak{C}_{1^n}^{[n-1]}$, this is done for all $v \in \mathfrak{C}_{0^n}^{[n-1]}$, which takes $2 \cdot 2^{n-1}$ resolution steps. Finally, the recursion on $\mathfrak{C}_{0^n}^{[n-1]}$ takes $s(\mathcal{P}_{n-1})$ steps, to create the clause which claims that $u$ is not a sink in $\mathfrak{C}_{1^n}^{[n-1]}$. This is done for every $u \in \mathfrak{C}_{1^n}^{[n-1]}$, requiring $2^{n-1}\left(2 \cdot 2^{n-1} + s(\mathcal{P}_{n-1})\right)$ steps. In the end, $s(\mathcal{P}_{n-1})$ steps are used to evaluate $\mathfrak{C}_{1^n}^{[n-1]}$.

    Therefore, $s(\mathcal{P}_n) = s(\mathcal{P}_{n-1}) + 2^{n-1}\left(2 \cdot 2^{n-1} + s(\mathcal{P}_{n-1})\right)$, where $s(\mathcal{P}_1) = 2$. For some $n \in \mathbb{N}$, $n > 1$, if $s(\mathcal{P}_{n-1}) \leq 4^{(n-1)^2}$ (which is the case for $n = 2$), then $s(\mathcal{P}_n) \leq 4^{(n-1)^2} + 4^n + 4^{(n-1)^2+n/2} \leq 4 \cdot 4^{(n-1)^2+n/2} \leq 4^{n^2}$. Hence, it follows by induction that $s(\mathcal{P}_n) \leq 4^{n^2}$, and so $s(\mathcal{P}_n) \in O\left(4^{n^2}\right)$, as claimed.   
\end{proof}

\begin{lemma}
    $w(\mathcal{P}_n) \in O\left(n^2\right)$.
\end{lemma}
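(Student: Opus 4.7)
The plan is to track, stage by stage in the recursive construction of $\mathcal{P}_n$, an upper bound on the width of every clause produced, and to extract a recurrence of the form $w(\mathcal{P}_n) \leq w(\mathcal{P}_{n-1}) + (n-1)$ (up to lower-order terms) which telescopes to $O(n^2)$.

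First, I would dispose of the easy stages. The original clauses of $\mathcal{F}_n$ have width at most $2n$: non-sink clauses have width $n$, and each non-clash clause has width $2|I_{u,v}| \leq 2n$. The explicit two-step derivation of each $C_{v,u}$ passes through an intermediate clause of width $(n-1) + (n-1) + 1 = 2n-1$ and outputs $C_{v,u}$ of width $2(n-1)$. Thus this bookkeeping contributes only $O(n)$ to the overall width.

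The crucial stage is the first recursive invocation of $\mathcal{P}_{n-1}$ on the facet $\mathfrak{C}_{0^n}^{[n-1]}$. I would observe that the non-sink clauses of $\mathcal{F}_{n-1}$ are replaced by $C_{v,u}$, each of which carries the $n-1$ additional literals $\bigvee_{i \in [n-1]} x_{u,i}$ for the fixed $u \in \mathfrak{C}_{1^n}^{[n-1]}$, while the non-clash clauses of $\mathcal{F}_{n-1}$ appear unchanged in $\mathcal{F}_n$. Because no clause participating in this invocation contains any literal $\neg x_{u,i}$ (the variables of $u$ only ever appear positively, since $u$ is not a vertex of the facet being processed), the extra literals are never touched by a resolution step and are passively carried into every derived clause. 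Therefore every clause inside this invocation has width at most $w(\mathcal{P}_{n-1}) + (n-1)$, and in particular the final derived clause is $\bigvee_{i \in [n-1]} x_{u,i}$ of width $n-1$. The second recursive invocation on $\mathfrak{C}_{1^n}^{[n-1]}$ uses these derived non-sink clauses together with non-clash clauses already present in $\mathcal{F}_n$; no extra literals are introduced, so its width is at most $w(\mathcal{P}_{n-1})$.

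Combining the stages yields $w(\mathcal{P}_n) \leq \max\{2n, w(\mathcal{P}_{n-1}) + (n-1)\}$ with base case $w(\mathcal{P}_1) = 2$. Since the second term dominates from a small $n$ onward, unrolling the recurrence gives $w(\mathcal{P}_n) \leq O(1) + \sum_{k=2}^{n}(k-1) = O(n^2)$. The main subtlety I would be careful about is the passivity claim for the extra literals: it must be verified that in every step of the first recursive invocation, the resolved literal is one of the $x_{v,i}$ variables (with $v \in \mathfrak{C}_{0^n}^{[n-1]}$) and never an $x_{u,i}$, so that the block $\bigvee_{i \in [n-1]} x_{u,i}$ is genuinely inherited rather than partially resolved away, which could otherwise complicate the width accounting.
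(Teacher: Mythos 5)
Your proposal is correct and follows essentially the same approach as the paper: bound the width of the initial clauses by $2n$, bound the width of the first recursive invocation by $w(\mathcal{P}_{n-1}) + (n-1)$ (due to the passively carried extra literals), note the second invocation adds nothing, and solve the recurrence $w(\mathcal{P}_n) \leq \max\{2n,\ w(\mathcal{P}_{n-1}) + n - 1\}$ to get $O(n^2)$. Your explicit verification that the block $\bigvee_{i \in [n-1]} x_{u,i}$ is never partially resolved away (since those literals appear only positively in the facet invocation) is a detail the paper leaves implicit, and it is a worthwhile observation.
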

\begin{proof}
    The initial clauses in $\mathcal{F}_n$ contain at most $2n$ literals each. Then, $C_{v, u}$ contains $2n-2$ literals. When performing $\mathcal{P}_{n-1}$ in $\mathfrak{C}_{0^n}^{[n-1]}$, the maximum number of literals in any clause is going to be $w(\mathcal{P}_{n-1}) + n-1$, due to the extra $n-1$ literals. Finally, when applying $\mathcal{P}_{n-1}$ to $\mathfrak{C}_{1^n}^{[n-1]}$, there are no extra literals, so the width there is $w(\mathcal{P}_{n-1})$. The initial case is $w(\mathcal{P}_1) = 2$. Therefore, $w(\mathcal{P}_n) = \max(2n, w(\mathcal{P}_{n-1}) + n - 1)$ for $n > 1$, which solves to \[w(\mathcal{P}_n) = \begin{cases}
    2n & \text{if } n \leq 3 \\
    \frac{n^2-n+6}{2} & \text{if } n \geq 4 \\ 
    \end{cases}\]
    In general, $w(\mathcal{P}_n) \in O\left(n^2\right)$.
\end{proof}

Hence, $\log(s(\mathcal{P}_n)) + w(\mathcal{P}_n) \in O(n^2) \subset \text{polylog}(n2^n)$, where $n2^n$ is the total number of variables. Therefore, this is an efficient resolution proof that solves \textsc{Sink-or-Clash}.

\begin{remark}
    In this resolution proof, only one permutation of the dimensions was considered in the recursion (from dimension $1$ to $n$). Hence, this resolution proof also efficiently proves the formula corresponding to a $\texttt{UEOPL}^\texttt{dt}$-complete version of \textsc{Cube-OPDC} (which is presented in \cite[Theorem 20]{FEARNLEY20201}). This shows that $\texttt{UEOPL}^\texttt{dt} \subseteq  \texttt{PLS}^\texttt{dt}$, which was already known as in \Cref{img-tfnp}.
\end{remark}

\section{Completability of Unique Sink Orientations}\label{cert-chapter}

According to \Cref{notharder}, finding a sink or a clash in an outmap is not easier than finding the sink of a USO. Moreover, let $\mathcal{A}$ be the optimal algorithm for \textsc{Sink-USO}. When this algorithm is used for \textsc{Sink-or-Clash}, if it fails to find the sink, it must be because the orientation induced by the vertices already queried is not compatible with any USO. Hence, in this chapter, we will analyze the minimal subsets of outmaps that are not compatible with any USO (defined formally in \Cref{cert-def}), specially the ones in which no clash exist.

Initially, some definitions must be made.

\begin{definition}[Partial Outmap]
    A function $s \colon \mathfrak{C} \to \{0, 1\}^n \cup \{*\}$ is called a \emph{partial outmap} of the $n$-dimensional hypercube.
\end{definition}

Notice that this definition of partial outmap is only taking into consideration the outmap of each vertex. It is possible that this does not define a proper orientation in $\mathfrak{C}$, as one edge might be oriented in different directions depending on the endpoints.

\begin{definition}[Agreeable Partial Outmaps]
    A partial outmap $t$ of the $n$-dimensional hypercubes $\mathfrak{C}$ \emph{agrees} with another partial outmap $s$ of $\mathfrak{C}$ if for all $u \in \mathfrak{C}$, $s(u) \neq \;* \implies s(u) = t(u)$.
\end{definition}

\begin{definition}[Completable Partial Outmap]
    A function $s \colon \mathfrak{C} \to \{0, 1\}^n \cup \{*\}$ is called a \emph{completable partial outmap} if there exists a unique sink orientation $t \colon \mathfrak{C} \to \{0, 1\}^n$ which agrees with $s$.
\end{definition}

\begin{definition}[Non-completable Partial Outmap]
    A function $s \colon \mathfrak{C} \to \{0, 1\}^n \cup \{*\}$ is called a \emph{non-completable partial outmap} if it is not a completable partial outmap.
\end{definition}

\begin{definition}[Clash]
    Given a partial outmap $s \colon \mathfrak{C} \to \{0, 1\}^n \cup \{*\}$, a \emph{clash} consists of two distinct vertices $u, v \in \mathfrak{C}$ such that $s(u) \neq \; *$, $s(v) \neq \; *$ and $(s(u) \oplus s(v)) \wedge (u \oplus v) = 0^n$. Notice that this definition of clash agrees with the preexisting one in the case $s$ is an outmap.
\end{definition}

A completable partial outmap can be seen as a partial outmap function of a USO, where some of the values have been hidden. Hence, if $s$ is a completable partial outmap then it follows that no pair of vertices clashes. However, this condition is not enough to ensure that a partial outmap is completable, as seen in \Cref{4-cert}.

\begin{figure}[htbp]\centering
\begin{tikzpicture}
        \draw[middlearrow={<}] (0,0,0) -- (2,0,0);
        \draw[middlearrow={<}] (2,0,0) -- (2,2,0);
        \draw[middlearrow={<}] (2,2,0) -- (0,2,0);
        \draw[middlearrow={>}] (0,2,0) -- (0,0,0);
        \draw[middlearrow={<}] (0,0,2) -- (2,0,2);
        \draw[middlearrow={>}] (2,0,2) -- (2,2,2);
        \draw[middlearrow={<}] (2,2,2) -- (0,2,2);
        \draw[middlearrow={<}] (0,2,2) -- (0,0,2);
        \draw[middlearrow={<}] (0,0,0) -- (0,0,2);
        \draw[middlearrow={>}] (2,0,0) -- (2,0,2);
        \draw[middlearrow={<}] (0,2,0) -- (0,2,2);
        \draw[middlearrow={>}] (2,2,0) -- (2,2,2);
        \fill [blue] (2, 0, 0) circle (2pt);
        \fill [blue] (0, 2, 0) circle (2pt);
        \fill [blue] (0, 0, 2) circle (2pt);
        \fill [blue] (2, 2, 2) circle (2pt);
\end{tikzpicture}
    \caption{A non-completable partial outmap in which no pair clashes. The blue vertices are the vertices that have been assigned an outmap. All the edges are determined but there are two sinks.}
    \label{4-cert}
\end{figure}
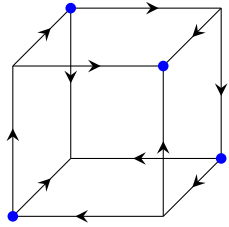

It is possible to define the decision problem \textsc{Uso-Completability}, which has access to an oracle that encodes a partial outmap $s$ and returns YES if and only if $s$ is completable. \textsc{Uso-Completability} is not easier than \textsc{Uso-Recognition} \cite[Section 3]{gartnerrec} (the problem of deciding whether a certain outmap is a USO), as \textsc{Uso-Completability} solves \textsc{Uso-Recognition} when the input is an outmap. Hence, as \textsc{Uso-Recognition} is \texttt{coNP}-hard, it follows that \textsc{Uso-Completability} is also \texttt{coNP}-hard (when the oracle is represented as a Boolean circuit).

Gärtner and Thomas use clashes as an efficiently verifiable certificate that an outmap is not a USO. Moreover, due to \Cref{fund}, any outmap that is not a USO has a clash, so this shows that \textsc{Uso-Recognition} is in \texttt{coNP} \cite[Section 3]{gartnerrec}. However, this proof does not show that \textsc{Uso-Completability} is in \texttt{coNP}, as some non-completable partial outmaps have no clashes. Furthermore, it is not known whether \textsc{Uso-Completability} is in \texttt{coNP} or not. So, as clashes are not enough to show non-completability, we study the nature of minimal non-completable partial outmaps (defined formally as \emph{certificates} in \Cref{cert-def}).

Note that in the previous sections when we studied \textsc{Sink-or-Clash}, we could assume access to an oracle to \textsc{Uso-Completability} without cost, since we only cared about the query complexity.

\subsection{Certificates for Non-Completability}

\begin{definition}[Known Vertices]
    Given a partial outmap $s \colon \mathfrak{C} \to \{0, 1\}^n \cup \{*\}$, define the set of known vertices of $s$ as $P_s \coloneq \{u \in \mathfrak{C} \mid s(u) \neq *\}$.
\end{definition}

\begin{definition}
    Given a partial outmap $s \colon \mathfrak{C} \to \{0, 1\}^n$ and $P \subseteq P_s$, define the partial outmap $s_P \colon \mathfrak{C} \to \{0, 1\}^n$ as \[s_P(u) = \begin{cases}
        s(u) & \text{if } u \in P \\
        * & \text{if } u \notin P\\
    \end{cases}\]
\end{definition}

Notice that in the previous definition, $s$ agrees with $s_P$. This can be seen as forgetting some of the already known outmaps.

\begin{definition}[$k$-certificate]\label{cert-def}
    A partial outmap $s \colon \mathfrak{C} \to \{0, 1\}^n \cup \{*\}$ is considered a \emph{$k$-certificate} if $|P_s| = k$, $s$ is non-completable and for all $P \subset P_s$ the partial outmap $s_P$ is completable. Sometimes, the notation \emph{certificate} is used for a $k$-certificate for some $k \in \mathbb{N}$, without specifying $k$.
\end{definition}

\begin{observation}
    A partial outmap $s$ is non-completable if and only if there exists some $P \subseteq P_s$ such that $s_P$ is a certificate.
\end{observation}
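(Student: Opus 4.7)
The plan is to prove the two directions separately, both of which follow from straightforward monotonicity arguments about completability under shrinking the set of known vertices.

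For the ($\Leftarrow$) direction, I would first observe that completability is monotone in the sense that if $P' \subseteq P \subseteq P_s$, then any USO agreeing with $s_P$ also agrees with $s_{P'}$, so $s_P$ completable implies $s_{P'}$ completable. Contrapositively, if some $s_P$ is non-completable, so is any $s_{P''}$ for $P \subseteq P'' \subseteq P_s$, including $s = s_{P_s}$ itself. Hence if some $s_P$ is a certificate (and so non-completable by definition), then $s$ is non-completable.

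For the ($\Rightarrow$) direction, I would argue by a minimality choice. Suppose $s$ is non-completable. Then the collection $\mathcal{N} := \{P \subseteq P_s \mid s_P \text{ is non-completable}\}$ is non-empty, since $P_s \in \mathcal{N}$. Because $P_s$ is finite, $\mathcal{N}$ has an inclusion-minimal element $P^*$. By minimality, every strict subset $P' \subsetneq P^*$ satisfies $s_{P'}$ completable, and $s_{P^*}$ is non-completable. Since $P_{s_{P^*}} = P^*$ and $(s_{P^*})_{P'} = s_{P'}$ for any $P' \subseteq P^*$, this is exactly the definition of $s_{P^*}$ being a $|P^*|$-certificate.

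The argument is entirely formal, so there is no real obstacle; the only point to make explicit is that the operation $P \mapsto s_P$ is compatible with restriction, i.e.\ $(s_{P^*})_{P'} = s_{P'}$ whenever $P' \subseteq P^* \subseteq P_s$, which is immediate from the definition of $s_P$. I would state this as a small remark before invoking it, to make the logical chain watertight.
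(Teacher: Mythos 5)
Your proof is correct, and it is the natural argument: downward monotonicity of completability (fewer known vertices means fewer constraints) gives the backward direction, and choosing an inclusion-minimal non-completable restriction gives the forward direction. The paper states this as an observation without proof, and your write-up, including the small compatibility check $(s_{P^*})_{P'} = s_{P'}$, supplies exactly the routine verification the authors left implicit.
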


If a partial outmap $s$ only assigns the outmap of a single vertex (i.e., $|P_s| = 1$), then it is always completable (by using any USO and \Cref{flipall}, per example). Moreover, Schurr showed that if $|P_s| = 2$ and the pair of vertices does not clash, then the partial outmap is completable to a USO \cite[Lemma 4.21]{schurrthesis}. So, the $2$-certificates consist of two vertices which clash. Furthermore, Schurr showed that given the outmaps of $3$ vertices which do not pairwise clash, then that is always completable to a USO, which we phrase as follows.

\begin{lemma}\label{non3}\cite[Lemma 4.22]{schurrthesis}
    There are no $3$-certificates.
\end{lemma}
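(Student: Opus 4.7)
My plan for proving \Cref{non3} is constructive: given three distinct vertices $u, v, w$ with prescribed outmap values $s(u), s(v), s(w)$ that are pairwise non-clashing, I would exhibit a USO $t$ agreeing with $s$ on $\{u, v, w\}$.

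I would begin by normalizing with the standard symmetries. Applying \Cref{flipall} with $z = s(u)$ and then the USO-preserving vertex relabeling $x \mapsto x \oplus u$ (which is an automorphism of the hypercube and preserves the condition of \Cref{fund}), I may assume $u = 0^n$ and $s(u) = 0^n$, so $u$ is to be the global sink. Writing $V = \operatorname{supp}(v)$ and $W = \operatorname{supp}(w)$, the three non-clash conditions become: (i) $s(v)$ has a $1$ somewhere in $V$, (ii) $s(w)$ has a $1$ somewhere in $W$, and (iii) $s(v)$ and $s(w)$ disagree somewhere in $V \triangle W$.

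Next, I would use the product USO construction of \Cref{product} to peel off the dimensions in $[n]\setminus(V\cup W)$. On those dimensions $u, v, w$ all share the value $0$, so the prescribed bits there can be realized by inserting, independently at each vertex of the internal face spanned by $V \cup W$, a USO on the external factor tuned via \Cref{flipall}. This reduces the problem to the case $V \cup W = [n]$.

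The genuine difficulty lies in the reduced case, since $v$ and $w$ then share every coordinate inside $V \cap W$. A naive product split along $V \cap W$ versus $V \triangle W$ would force $s(v)$ and $s(w)$ to agree on the $V \cap W$-block, which the input does not guarantee. My plan is to partition $[n]$ into the three blocks $V \cap W$, $V \setminus W$, $W \setminus V$ and proceed by cases on which of these blocks contains a witness for condition~(iii). In each case I would start from a product USO that already realizes the required bits on the $V \triangle W$-block — condition~(iii) is precisely what guarantees the induced $2$-vertex partial outmap in the $V \triangle W$-factor is non-clashing, so Schurr's $2$-vertex result \cite[Lemma 4.21]{schurrthesis} applies — and then reconcile the $V \cap W$-block bits at $v$ and $w$ independently by a bounded sequence of flippable-edge moves (the $|I|=1$ case of \Cref{flip}), possibly combined with a partial swap (\Cref{partswap}). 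Each such move preserves the USO property, so the remaining verification collapses to checking the condition of \Cref{fund} on the three pairs in $\{u,v,w\}$, which holds by construction. I expect the bookkeeping around the $V \cap W$-adjustments to be the main obstacle, as one must ensure that flips made to correct the block at $v$ do not propagate into disagreements on the already-fixed $V \triangle W$-block or at $w$.
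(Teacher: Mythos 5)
The paper does not prove this lemma; it cites it directly from Schurr's thesis \cite[Lemma~4.22]{schurrthesis}, so there is no internal proof to compare against. Judged on its own, your proposal is a reasonable \emph{outline} but not a proof, and the gap you flag at the end is not peripheral bookkeeping --- it is the entire content of the lemma.

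Your normalization ($u=0^n$, $s(u)=0^n$ via \Cref{flipall} and a vertex relabeling) and the translation of the three non-clash conditions are correct. The reduction to $V\cup W=[n]$ via the product construction with outer factor $V\cup W$ also goes through: $u,v,w$ project to three distinct outer vertices, and each can be given an independent inner factor whose outmap at $0^{[n]\setminus(V\cup W)}$ is tuned by \Cref{flipall}. However, the remaining case is where all the difficulty lives, and you leave it unresolved. First, the product split with outer factor $V\triangle W$ does not by itself let you invoke Schurr's $2$-vertex lemma for $v,w$ and then fix things up: whenever $V\subseteq W$ (or $W\subseteq V$), the vertices $u$ and $v$ (resp.\ $u$ and $w$) project to the \emph{same} outer vertex $0^{|V\triangle W|}$, so the outer USO is forced to assign them the same outmap in the $V\triangle W$ block, which the input does not guarantee (it only guarantees $s(v)_{V}\neq 0$, and the witness may lie entirely in $V\cap W$). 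Second, even when the three outer projections are distinct, you must still simultaneously match $s(u)_{V\triangle W}$, $s(v)_{V\triangle W}$, $s(w)_{V\triangle W}$ in the outer USO, and it is not clear these three projected partial outmaps are pairwise non-clashing in the outer cube --- the pair $(u,v)$ clashes in the outer factor precisely when $s(v)_{V\setminus W}=0$, which is allowed. Third, and most importantly, the claim that a ``bounded sequence of flippable-edge moves, possibly combined with a partial swap'' can always reconcile the $V\cap W$ bits at $v$ and at $w$ without damaging the $V\triangle W$ block is simply asserted, not argued. Flipping a single $i$-edge incident to $v$ to toggle $s(v)_i$ requires that edge to be flippable \emph{in the USO you have built so far}, which depends delicately on the outmap of $v\oplus e_i$; whether one can always steer a sequence of such flips to the prescribed values without introducing a clash with $u$ or $w$ is exactly the question. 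Until that is established (for instance with an explicit construction or a case analysis over which block witnesses condition~(iii), as you gesture at), the proof is incomplete.
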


Next, we show that although there are no $3$-certificates, there are $n$-certificates for any $n \geq 4$.

\begin{lemma}
    There exist $n$-certificates for all $n \in \mathbb{N}$, $n \geq 4$.
\end{lemma}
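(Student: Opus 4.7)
The case $n=4$ is witnessed by the partial outmap of Figure~\ref{4-cert}: one verifies by direct inspection that its four known vertices pairwise avoid clashes, that their outmaps determine every edge of the $3$-cube, and that the resulting orientation has two global sinks at $0^3$ and $1^3$. Hence no USO completion exists. Minimality is immediate from \Cref{non3}: every $3$-vertex sub-configuration has no clashes and is therefore completable.

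For $n \geq 5$, my plan is to exhibit an $n$-certificate directly in dimension $n-1$, generalising the pattern of Figure~\ref{4-cert}. Consider the vertex set $P = \{e_1, \dots, e_{n-1}, 1^{n-1}\}$, together with outmaps $s(1^{n-1}) = 0^{n-1}$ and, for each $i \in [n-1]$, $s(e_i) = e_i \oplus e_{i-1}$, where indices are taken cyclically (so $e_0 := e_{n-1}$). The condition $s(e_i)_i = 1$ forces every edge incident to $0^{n-1}$ to point into $0^{n-1}$, so $0^{n-1}$ becomes a second global sink; together with the sink at $1^{n-1}$ this immediately yields non-completability. A routine check of the $\binom{n}{2}$ pairs shows that the cyclic pattern produces no clashes: pairs $(e_i, e_j)$ differ in two coordinates but their outmaps disagree in at least one of them, and pairs $(e_i, 1^{n-1})$ differ in $n-2$ coordinates with the outmap bit of $e_i$ at some coordinate in $\{i, i-1\}$ being $1$ while $s(1^{n-1})$ is identically $0$.

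The main obstacle is verifying minimality, i.e.\ that every $(n-1)$-vertex subset of $P$ is completable to a USO of the $(n-1)$-cube. Removing $1^{n-1}$: I would extend $\{(e_i, s(e_i))\}_i$ starting from the eikonal USO $f(v) = v$ (whose sink is $0^{n-1}$) and successively applying \Cref{flip} to each of the $n-1$ edges $(e_i, e_i \oplus e_{i-1})$. Each such edge is a flippable edge of the eikonal (both endpoints have outmap agreeing outside dimension $i-1$), and since the flips act on disjoint pairs of vertices they can be performed independently, yielding a USO whose outmap at $e_i$ is exactly $e_i \oplus e_{i-1}$. Removing some $e_j$: the constraint $s(e_j)_j = 1$ disappears, releasing the edge $(0^{n-1}, e_j)$; one then builds a completion in which $1^{n-1}$ is the unique sink, starting from the anti-eikonal USO $f(v) = v \oplus 1^{n-1}$ and applying a sequence of hypervertex replacements (\Cref{flip}) and partial swaps (\Cref{partswap}) to reconcile the $n-2$ remaining constraints. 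This second case is the most delicate step, since the anti-eikonal does not agree with the required outmaps and the chain of modifications must be performed compatibly; this is where the bulk of the combinatorial work of the proof lies.
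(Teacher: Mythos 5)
Your construction is the same as the paper's up to a cyclic relabeling of indices (the paper sets $s(e_i) = e_i \vee e_{i+1}$; you set $s(e_i) = e_i \oplus e_{i-1}$, which is the mirror image). Your non-completability argument and pairwise non-clash check are correct. For minimality when removing $1^{n-1}$, your approach (start from the identity outmap $f(v)=v$ and flip the matching of edges $(e_i, e_i\oplus e_{i-1})$) matches the paper's; one caveat is that flipping a matching of flippable edges is not automatically USO-preserving merely because the pairs are disjoint, and the paper explicitly cites a lemma of Schurr for this fact — your ``the flips act on disjoint pairs so can be performed independently'' reasoning should invoke that lemma rather than rely on disjointness alone.

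The genuine gap is the case of removing some $e_j$, which you yourself flag as ``the most delicate step'' and leave as a plan (start from the anti-eikonal $f(v) = v\oplus 1^{n-1}$ and ``reconcile'' via hypervertex replacements and partial swaps). As written this is not a proof: the anti-eikonal has $1^{n-1}$ as its source, not its sink, and it is not clear that the required sequence of local modifications exists or terminates. The paper instead resolves this case by induction on the dimension. It first extends the given partial outmap $s_P$ to a larger agreeing partial outmap $s'$, then splits the cube into the facets $\mathfrak{C}_{0^n}^{[n-1]}$ and $\mathfrak{C}_{1^n}^{[n-1]}$: the known vertices in $\mathfrak{C}_{0^n}^{[n-1]}$ are completed via the same matching-flip idea you used for the other case, the known vertices in $\mathfrak{C}_{1^n}^{[n-1]}$ form a copy of the $(n-1)$-dimensional instance and are completed by the induction hypothesis (with $n=3$ as base case, using \Cref{non3}), and the two facets are glued via the product construction (\Cref{product}) with a final matching-flip of some $n$-edges. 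If you want to salvage your anti-eikonal route you would need to supply the actual sequence of modifications and prove each step preserves the USO property; adopting the paper's inductive facet decomposition is a cleaner way to close the gap.
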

\begin{proof}
    In this proof, assume that the indices are seen modulo $n$, i.e., $u_{n+1} \coloneq u_1$ per example.
    
    Consider the partial outmap $s \colon \mathfrak{C} \to \{0, 1\}^n \cup \{*\}$ defined as 
    \begin{itemize}
        \item $s(1^n) = 0^n$.
        \item $s(e_i) = e_i \vee e_{i+1}$ for every $i \in [n]$.
        \item $s(u) = \; *$ for the remaining vertices $u \in \mathfrak{C}$.
    \end{itemize}
    
    We claim that this is a $(n+1)$-certificate, for every $n \geq 3$.

    Suppose $s$ is completable. As $0^n$ and $e_i$ do not clash, it follows that $(s(0^n) \oplus s(e_i)) \wedge (e_i \oplus 0^n) \neq 0 \Rightarrow s(0^n)_i = 0$. As this is true for all $1 \leq i \leq n$, then $s(0^n) = 0^n$. Therefore, $0^n$ and $1^n$ clash, which is a contradiction. Hence, $s$ is non-completable.

    Consider any proper subset $P \subset P_s$. It is enough to show that $s_P$ is completable when $|P| = n = |P_s| - 1$. We check two different cases.

    \begin{enumerate}[label=\textbf{Case \arabic*:}, leftmargin=*]
       \item
       Suppose that $P = P_s \backslash \{1^n\}$. Consider the uniform orientation $t$ given by $t(u) = u$. Then, all edges are flippable. Moreover, $\{(e_i, e_i \vee e_{i+1}) \mid 1 \leq i \leq n\}$ is a matching of the edges, which when flipped create a USO $t'$ \cite[Lemma 4.30]{schurrthesis}. As $t'(e_i) = t(e_i) \oplus e_{i+1} = e_i \oplus e_{i+1} = e_i \vee e_{i+1}$, then $t'$ is a USO which agrees with $s_P$. Thus, $s_P$ is completable.
       
       \item
        Otherwise, suppose without loss of generality that $P = P_s \backslash \{e_n\}$. It follows by induction that $s_P$ is constructible.

        If $n = 3$, then $|P| = 3$. Moreover, $(e_i \oplus e_j) \wedge (s(e_i) \oplus s(e_j)) = e_i \oplus e_j$ if $|j-i| \neq 1$, $(e_i \oplus e_{i+1}) \wedge (s(e_i) \oplus s(e_{i+1})) = e_i$ and $(e_i \oplus 1^n) \wedge (s(e_i) \oplus s(1^n)) = e_{i+1}$, so there are no clashes. By \Cref{non3}, $s_P$ is completable.
        
        Now, assume that $n > 3$. Let $s'$ be a partial outmap of $\mathfrak{C}$ such that 
        \begin{itemize}
            \item $s'(e_i) = e_i \vee e_{i+1}$ for $1 \leq i < n$.
            \item $s'(1^n) = 0^n$.
            \item $s'(e_i \vee e_n) = e_i \vee e_{i+1} \vee e_n$ for $1 \leq i < n-1$.
            \item $s'(u) = \;*$ for all other $u \in \mathfrak{C}$.
        \end{itemize}

        Notice that for all $u \in P$, $s(u) = s'(u)$, so $s'$ agrees with $s_P$. Therefore, if $s'$ is completable, then $s_P$ is also completable.
        
        Consider the facets $\mathfrak{C}_{0^n}^{[n-1]}$ and $\mathfrak{C}_{1^n}^{[n-1]}$.
        
        Notice that $P_{s'} \cap \mathfrak{C}_{0^n}^{[n-1]} = \{e_i \mid 1 \leq i < n\}$. Similarly to Case $1$, let $t$ be the uniform outmap $t(u) = u$ for $u \in \mathfrak{C}$ and then let $t'$ be the outmap obtained after flipping the matching $\{(e_i, e_i \vee e_{i+1}) \mid 1 \leq i < n-1\}$ of $t$. Then, $t'(e_i)_{[n-1]} = (e_i \vee e_{i+1})_{[n-1]} = s'(e_i)$ for $i < n-1$ and $t'(e_{n-1})_{[n-1]} = s'(e_{n-1})_{[n-1]} = e_{n-1}$. Thus, this facet is completable.
        
        As $P_{s'} \cap \mathfrak{C}_{1^n}^{[n-1]} = \{e_i \vee e_n \mid 1 \leq i < n-1\} \cup \{1^n\}$, it follows that this facet is also completable by induction within $\mathfrak{C}_{1^n}^{[n-1]}$, as it corresponds to the $n-1$ case.

        After completing both facets, orient all $n$-edges from $\mathfrak{C}_{0^n}^{[n-1]}$ to $\mathfrak{C}_{1^n}^{[n-1]}$, with the exception of the edges $(e_i, e_i \oplus e_n)$ for $1 \leq i < n-1$, which are oriented in the other direction. Let this final outmap be $r$, which agrees with $s'$.

        Notice that $r(e_i)_{[n-1]} = s'(e_i)_{[n-1]} = s'(e_i \vee e_n)_{[n-1]} = r(e_i \vee e_n)_{[n-1]}$ for $1 \leq i < n-1$, so the edges $(e_i, e_i \oplus e_n)$ are flippable according to \Cref{flip}. Let $r'$ be the outmap obtained after flipping all such edges. As those edges form a matching, it follows that $r$ is a USO if and only if $r'$ is a USO. Moreover, $r'$ is combed in the dimension $n$, and as it is both a USO when restricted to $\mathfrak{C}_{0^n}^{[n-1]}$ and $\mathfrak{C}_{1^n}^{[n-1]}$, then $r'$ is a USO according to \Cref{product}.

        Hence, $s'$ is completable, and it follows that $s_P$ is completable.
    \end{enumerate}
\end{proof}

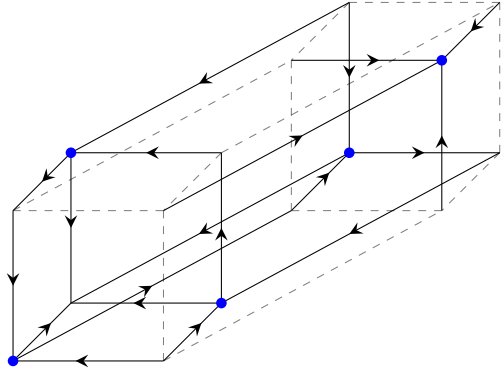
\begin{figure}[htbp]\centering
\begin{tikzpicture}
        \coordinate (offset) at (\offsetx, \offsety, \offsetz);
            
        \draw[middlearrow={<}] (0,0,0) -- (2,0,0);
        \draw[middlearrow={>}] (2,0,0) -- (2,2,0);
        \draw[middlearrow={>}] (2,2,0) -- (0,2,0);
        \draw[middlearrow={>}] (0,2,0) -- (0,0,0);
        \draw[middlearrow={<}] (0,0,2) -- (2,0,2);
        \draw[dashed, opacity=0.5] (2,0,2) -- (2,2,2);
        \draw[dashed, opacity=0.5] (2,2,2) -- (0,2,2);
        \draw[middlearrow={>}] (0,2,2) -- (0,0,2);
        \draw[middlearrow={<}] (0,0,0) -- (0,0,2);
        \draw[middlearrow={<}] (2,0,0) -- (2,0,2);
        \draw[middlearrow={>}] (0,2,0) -- (0,2,2);
        \draw[dashed, opacity=0.5] (2,2,0) -- (2,2,2);

        \draw[middlearrow={>}] ($(0,0,0)+(offset)$) -- ($(2,0,0) + (offset)$);
        \draw[dashed, opacity=0.5] ($(2,0,0)+(offset)$) -- ($(2,2,0)+(offset)$);
        \draw[dashed, opacity=0.5] ($(2,2,0)+(offset)$) -- ($(0,2,0)+(offset)$);
        \draw[middlearrow={>}] ($(0,2,0)+(offset)$) -- ($(0,0,0)+(offset)$);
        \draw[dashed, opacity=0.5] ($(0,0,2)+(offset)$) -- ($(2,0,2)+(offset)$);
        \draw[middlearrow={>}] ($(2,0,2)+(offset)$) -- ($(2,2,2)+(offset)$);
        \draw[middlearrow={<}] ($(2,2,2)+(offset)$) -- ($(0,2,2)+(offset)$);
        \draw[dashed, opacity=0.5] ($(0,2,2)+(offset)$) -- ($(0,0,2)+(offset)$);
        \draw[middlearrow={<}] ($(0,0,0)+(offset)$) -- ($(0,0,2)+(offset)$);
        \draw[dashed, opacity=0.5] ($(2,0,0)+(offset)$) -- ($(2,0,2)+(offset)$);
        \draw[dashed, opacity=0.5] ($(0,2,0)+(offset)$) -- ($(0,2,2)+(offset)$);
        \draw[middlearrow={>}] ($(2,2,0)+(offset)$) -- ($(2,2,2)+(offset)$);

        \draw[middlearrow={<}] (0,0,0) -- ($(0,0,0)+(offset)$);
        \draw[middlearrow={<}] (2,0,0) -- ($(2,0,0)+(offset)$);
        \draw[dashed, opacity=0.5] (2,2,0) -- ($(2,2,0)+(offset)$);
        \draw[middlearrow={<}] (0,2,0) -- ($(0,2,0)+(offset)$);
        \draw[middlearrow={>}] (0,0,2) -- ($(0,0,2)+(offset)$);
        \draw[dashed, opacity=0.5] (2,0,2) -- ($(2,0,2)+(offset)$);
        \draw[middlearrow={>}] (2,2,2) -- ($(2,2,2)+(offset)$);
        \draw[dashed, opacity=0.5] (0,2,2) -- ($(0,2,2)+(offset)$);
            
        \fill [blue] (2,0,0) circle (2pt);
        \fill [blue] (0,2,0) circle (2pt);
        \fill [blue] (0,0,2) circle (2pt);
        \fill [blue] (offset) circle (2pt);
        \fill [blue] ($(2,2,2)+(offset)$) circle (2pt);
\end{tikzpicture}
    \caption{\label{5-cert}A $5$-certificate.}
\end{figure}

This fully determines for which $n \in \mathbb{N}$ there exists an $n$-certificate.

\subsection{4-certificates}

In this final subsection, we fully characterize and categorize the the $4$-certificates.

\begin{lemma}\label{extend}
    Let $s \colon \mathfrak{C} \to \{0, 1\}^n$ be a partial outmap. Let $u, v \in \mathfrak{C}$ be any two distinct vertices that do not clash and such that $(s(u) \oplus s(v)) \wedge (u \oplus v) \neq e_i$ for some fixed $i \in [n]$. Then, $u \oplus e_i \neq v$ and if $s(u \oplus e_i) = s(u) \oplus e_i$, it follows that $u \oplus e_i$ and $v$ do not clash.
\end{lemma}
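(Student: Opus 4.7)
The plan is to verify both claims by direct bitwise computation, using the algebraic clash condition from \Cref{clash,fund} without appealing to any deeper structural result.

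For the first claim, I would argue by contradiction. Suppose $u \oplus e_i = v$; then $u \oplus v = e_i$, so $(s(u) \oplus s(v)) \wedge (u \oplus v)$ is supported at most on bit $i$, meaning it equals either $0^n$ or $e_i$. The first case contradicts the no-clash hypothesis and the second contradicts the assumption $(s(u) \oplus s(v)) \wedge (u \oplus v) \neq e_i$. Hence $u \oplus e_i \neq v$.

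For the second claim, I would introduce the shorthand $a \coloneq s(u) \oplus s(v)$ and $b \coloneq u \oplus v$, so the hypotheses become $a \wedge b \neq 0^n$ and $a \wedge b \neq e_i$. Using the assumption $s(u \oplus e_i) = s(u) \oplus e_i$, the clash-indicator for the pair $(u \oplus e_i, v)$ simplifies to
\[
(s(u \oplus e_i) \oplus s(v)) \wedge ((u \oplus e_i) \oplus v) = (a \oplus e_i) \wedge (b \oplus e_i).
\]
The key observation is that at every position $j \neq i$, the bit $((a \oplus e_i) \wedge (b \oplus e_i))_j$ equals $a_j \wedge b_j = (a \wedge b)_j$. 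So it suffices to produce some $j \neq i$ with $(a \wedge b)_j = 1$. From $a \wedge b \neq 0^n$ the set of positions at which $a \wedge b$ is $1$ is non-empty, and from $a \wedge b \neq e_i$ that set cannot be exactly $\{i\}$, so such a $j \neq i$ exists. This witnesses $(a \oplus e_i) \wedge (b \oplus e_i) \neq 0^n$, which is precisely the no-clash condition for the pair $(u \oplus e_i, v)$.

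The argument is essentially bookkeeping, and there is no real obstacle: the only subtlety is the small case analysis ensuring that the non-zero bit of $a \wedge b$ can be chosen away from index $i$, which is handled by combining the two negative hypotheses $a \wedge b \neq 0^n$ and $a \wedge b \neq e_i$. This lemma then serves as a useful propagation step, telling us that extending a non-clashing partial outmap by a single vertex in a dimension $i$ in the ``natural'' way (forcing $s(u \oplus e_i) = s(u) \oplus e_i$) preserves the no-clash property against any fixed $v$, provided $v$ and $u$ were already separated in more than just the $i$-th coordinate.
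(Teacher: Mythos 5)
Your proof is correct and is essentially the same argument as the paper's. The paper picks a witness dimension $j \neq i$ with $u_j \neq v_j$ and $s(u)_j \neq s(v)_j$ (which exists because the clash indicator is neither $0^n$ nor $e_i$) and observes it is unaffected by XOR with $e_i$; you reach the identical witness via the direct bitwise computation $(a \oplus e_i) \wedge (b \oplus e_i)$, which is a mildly more systematic restatement of the same idea.
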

\begin{proof}
    As $u$ and $v$ do not clash, it follows that there exists some $j \in [n]$ such that $u_j \neq v_j \wedge s(u)_j \neq s(v)_j$. Assume $j \neq i$ as $(s(u) \oplus s(v)) \wedge (u \oplus v)) \neq e_i$. Therefore, $u_j \neq v_j$, which implies that $u \oplus e_i \neq v$.  Moreover, $(u \oplus e_i)_j = u_j \neq v_j$ and $s(u \oplus e_i)_j = s(u)_j \neq s(v)_j$, so $u \oplus e_i$ and $v$ also do not clash.    
\end{proof}

\Cref{fund} shows that given any pair of vertices which do not clash, there must exist some dimension in which their outmap is different. Then, \Cref{extend} states that we can collapse any dimension as long as it is not the only one in which they differ, and the resulting vertices still do not clash.

\begin{definition}[Projected Partial Outmap]
    Let $s$ be a partial outmap of the $n$-dimensional $\mathfrak{C}$. Let $I \subseteq [n]$. Define the partial outmap $s^I \colon \{0, 1\}^{|I|} \to \{0, 1\}^{|I|} \cup \{*\}$ such that $s^I(u_I) = s(u)_I$ for all $u \in P_s$ and $s^I(v) = \;*$ for all other $v \in \{0, 1\}^{|I|}$. This is called a \emph{projected partial outmap} to the dimensions $I$. Notice that this is only well-defined when $s(u)_I = s(v)_I$ whenever $u_I = v_I$ and $u, v \in P_s$.
\end{definition}

Notice that this projected partial outmap can be interpreted as removing some dimensions of the hypercube and merging those lost dimensions together. Furthermore, the order in which the dimensions are removed is irrelevant.

\begin{definition}[Spanned Dimensions]
    Let $s$ be a partial outmap of $\mathfrak{C}$. A dimension $i \in [n]$ is called \emph{spanned} if there are $u, v \in P_S$ such that $u_i \neq v_i$. Otherwise, it is called \emph{non-spanned}.
\end{definition}

\begin{lemma}[Spanned Dimensions]\label{span}
    Let $s$ be a partial outmap of $\mathfrak{C}$ and $i \in [n]$ a non-spanned dimension. Then, $s$ is completable if and only if $s^{[n] \backslash \{i\}}$ is completable.
\end{lemma}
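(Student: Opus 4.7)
The plan is to prove the equivalence by tackling each direction separately, both reducing to machinery already assembled in the preliminaries. Since $i$ is non-spanned, every $v \in P_s$ shares the same $i$-th coordinate; by \Cref{flipall} applied with $z = e_i$ if necessary, I may assume without loss of generality that $v_i = 0$ for all $v \in P_s$, so that $P_s \subseteq F \coloneq \mathfrak{C}^{[n]\setminus\{i\}}_{0^n}$. I will identify $F$ with the $(n-1)$-dimensional hypercube via the bijection $v \mapsto v_{[n]\setminus\{i\}}$, under which the outmap of any function on $F$ restricted to the $[n]\setminus\{i\}$ coordinates corresponds exactly to $s^{[n]\setminus\{i\}}$.

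For the forward direction, suppose $s$ is completed by a USO $t$. I would form the restriction of $t$ to $F$, i.e., the outmap $v \mapsto t(v)_{[n]\setminus\{i\}}$ for $v \in F$. A direct consequence of \Cref{fund} is that every face of a USO is itself a USO: any clash inside a sub-face would involve a difference vector supported on the sub-face's dimensions, and hence would already be a clash in the ambient cube. Thus the restriction is an $(n-1)$-dim USO, and at every $v \in P_s$ it evaluates to $t(v)_{[n]\setminus\{i\}} = s(v)_{[n]\setminus\{i\}} = s^{[n]\setminus\{i\}}(v_{[n]\setminus\{i\}})$, so $s^{[n]\setminus\{i\}}$ is completable.

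For the backward direction, let $\hat{t}$ be an $(n-1)$-dim USO completing $s^{[n]\setminus\{i\}}$. I would build a completion $t$ of $s$ via the Product USO construction (\Cref{product}) with $k = n-1$: place a copy of $\hat{t}$ on both antipodal facets in dimension $i$ by setting $t(w)_{[n]\setminus\{i\}} \coloneq \hat{t}(w_{[n]\setminus\{i\}})$ for every $w \in \mathfrak{C}$, and then choose orientations of the $i$-edges by setting, for each pair $\{w, w \oplus e_i\}$ with $w_i = 0$, the outmap bit $t(w)_i \coloneq s(w)_i$ (and $t(w \oplus e_i)_i \coloneq 1 - s(w)_i$) whenever $w \in P_s$, and arbitrarily otherwise. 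As emphasized in the discussion following \Cref{product}, the $k = n-1$ case allows each $i$-edge to be oriented independently once the two antipodal facets carry identical USOs; hence $t$ is an $n$-dim USO, and by construction it agrees with $s$ on $P_s$.

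I expect no serious obstacle: both directions reduce directly to constructions from the preliminaries. The only subtlety worth double-checking is that \Cref{product} with $k = n-1$ indeed provides enough freedom to realize every required $s(v)_i$ simultaneously, but since all known vertices share a common facet and the lemma grants independent orientation of every $i$-edge, this is automatic.
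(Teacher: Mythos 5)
Your proof takes essentially the same route as the paper: project to the facet containing $P_s$, restrict the completing USO in the forward direction, and duplicate the $(n-1)$-dimensional USO across the two $i$-facets with freely chosen $i$-edges in the backward direction (the $k=n-1$ case of \Cref{product}).

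One small slip: the normalization ``$v_i = 0$ for all $v \in P_s$'' cannot be obtained from \Cref{flipall}, which XORs the \emph{outmap values} $s(u)$ by $z$ and never changes which vertices lie in $P_s$; what you actually want is the hypercube automorphism $u \mapsto u \oplus e_i$ (a vertex relabeling), or simply to work with the facet $\mathfrak{C}_u^{[n]\setminus\{i\}}$ for an arbitrary $u \in P_s$, as the paper does, avoiding the normalization altogether. This is easily repaired and does not affect the substance of the argument.
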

\begin{proof}
    Without loss of generality, assume that $i = n$, by reordering the dimensions.
    
    Suppose that $u_{[n-1]} = v_{[n-1]}$ for some $u, v \in \mathfrak{C}$. As $n$ is not spanned, then $u_n = v_n$ and $u = v$. Hence, $s^{[n-1]}$ is well-defined.

    Pick any $u \in P_s$. As $n$ is non-spanned, then $P_s \subseteq \mathfrak{C}_u^{[n-1]}$.

    Assume that $s$ is completable. Then, there exists a USO $t \colon \mathfrak{C} \to \{0, 1\}^n$ which agrees with $s$. Moreover, the restriction of the orientation $t$ to the face $\mathfrak{C}_u^{[n-1]}$ is also a USO, which agrees with $s^{[n-1]}$, as $P_s \subseteq \mathfrak{C}_u^{[n-1]}$. Thus, $s^{[n-1]}$ is completable.

    Assume that $s^{[n-1]}$ is completable. Then, let $t$ be a USO in the facet $\mathfrak{C}_u^{[n-1]}$ which agrees with $s^{[n-1]}$. Construct the outmap $r \colon \mathfrak{C} \to \{0, 1\}^n$ by creating two copies of $t$ in the opposite facets $\mathfrak{C}_u^{[n-1]}$ and $\mathfrak{C}_{u \oplus e_n}^{[n-1]}$, and orienting the $n$-edges in the direction so that it agrees with $s$ in the $n$-edges. By the product construction in \ref{product}, $r$ is a USO no matter how the $n$-edges are oriented. Furthermore, $r$ agrees with $s$ in the facet $\mathfrak{C}_u^{[n-1]}$, and it also agrees in the $n$-edges, by definition, so $s$ is completable.
\end{proof}

Multiple uses of \Cref{span} show that given a partial outmap $s$ and the subset of the spanned dimensions $I \subseteq [n]$, then $s$ is completable if and only if $s^I$ is completable.

\begin{definition}[Projectable Dimensions]
    Let $s$ be a partial outmap of $\mathfrak{C}$. A dimension $i \in [n]$ is called \emph{projectable} if $s(u)_i = s(v)_i$ for all $u, v \in P_s$. Otherwise, it is called \emph{non-projectable}.
\end{definition}

\begin{lemma}[Projectable Dimensions]\label{proj}
    Let $s$ be a partial outmap of $\mathfrak{C}$ with no clashes and a projectable dimension $i \in [n]$. Then, $s$ is completable if and only if $s^{[n]\backslash\{i\}}$ is completable.
\end{lemma}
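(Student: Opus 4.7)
The plan is to verify that $s^{[n] \backslash \{i\}}$ is well-defined, reduce to the case where $s(u)_i = 0$ on $P_s$ via a flipping trick, and then prove both directions using \Cref{inheorie} (forward) and \Cref{product} (backward).

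For well-definedness, I would observe that if two distinct $u, v \in P_s$ project to the same vertex, i.e., $u_{[n] \backslash \{i\}} = v_{[n] \backslash \{i\}}$, then $u \oplus v = e_i$; the no-clash hypothesis forces $s(u)_i \neq s(v)_i$, contradicting projectability. Next, by projectability there is a constant $c \in \{0,1\}$ with $s(u)_i = c$ for all $u \in P_s$, so setting $\tilde{s}(u) \coloneq s(u) \oplus c \cdot e_i$ on $P_s$ produces an equivalent problem via \Cref{flipall} with the same projection and no new clashes (clash conditions depend only on $s(u) \oplus s(v)$); I may thus assume $c = 0$.

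For the forward direction, if a USO $t$ completes $s$, I would form the inherited USO $s'$ (\Cref{inheorie}) along $I = [n] \backslash \{i\}$. Since $s(u)_i = 0$ for each $u \in P_s$, the vertex $u$ is itself the sink of the $i$-edge $\mathfrak{C}^{\{i\}}_u$ in $t$, which gives $s'(u_I) = t(u)_I = s(u)_I = s^I(u_I)$. Thus $s'$ completes $s^I$.

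For the backward direction, given a USO $t'$ completing $s^I$, I would use \Cref{product} with $k = n-1$: place a copy of $t'$ on each of the antipodal facets $\mathfrak{C}^I_{0^n}$ and $\mathfrak{C}^I_{e_i}$, leaving the $i$-edges free to be independently oriented. The hard part will be choosing $i$-edge orientations so that every known value $s(u)_i = 0$ is respected. This is possible because no $i$-edge has both endpoints in $P_s$: if $u, u \oplus e_i \in P_s$, then projectability gives $(s(u) \oplus s(u \oplus e_i)) \wedge e_i = 0$, and the AND at all other coordinates is automatically $0$, producing a clash. Hence each $i$-edge carries at most one constraint, which I would satisfy by orienting it toward the unique known endpoint (if any) and arbitrarily otherwise, yielding a USO $t$ completing $s$.
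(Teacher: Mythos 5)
Your proof is correct, and the backward direction coincides with the paper's: both place two copies of a USO completing $s^{[n]\backslash\{i\}}$ on the two $i$-facets, observe that no $i$-edge has both endpoints in $P_s$ (the same fact that gives well-definedness of the projection), and orient the $i$-edges freely to match $s$, invoking the product construction of \Cref{product}.

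Where you genuinely diverge is the forward direction. The paper applies the partial swap of \Cref{partswap} in dimension $i$ to the completing USO and then restricts to the facet $\mathfrak{C}^{[n]\backslash\{i\}}_{0^n}$, checking by a two-case analysis on $u_i$ that the known vertices survive the swap with the right outmap. You instead invoke the inherited orientation of \Cref{inheorie} along $I = [n]\backslash\{i\}$, and exploit the normalization $s(u)_i = 0$ to observe that each known $u$ is already the sink of its own $i$-edge, so the inherited outmap at $u_I$ is $t(u)_I = s(u)_I$ on the nose, with no case split. The two constructions actually produce the identical $(n-1)$-dimensional outmap — the inherited orientation along a single co-dimension is exactly partial swap followed by restriction — but your route reads cleaner and is a more natural citation to make once one has normalized to $c = 0$. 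Your explicit reduction to $c = 0$ via $\tilde{s}(u) = s(u) \oplus c\cdot e_i$ is also valid and matches the paper's "without loss of generality" step via \Cref{flipall}.
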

\begin{proof}
    Without loss of generality, assume that $i = n$, by reordering the dimensions. Due to \Cref{flipall}, flipping the orientation of all $n$-edges yields an equivalent problem, so assume without loss of generality that $s(u)_n = 0$ for all $u \in P_s$.

    Suppose that $u_{[n-1]} = v_{[n-1]}$ for some distinct $u, v \in P_s$. As $u$ and $v$ do not clash, it follows that $u_n \neq v_n \wedge s(u)_n \neq s(v)_n$, which is a contradiction. Thus, $s^{[n-1]}$ is well-defined.

    Suppose that $s^{[n-1]}$ is completable. Then, there exists an outmap $t$ of an $(n-1)$-dimensional USO which agrees with $s^{[n-1]}$. Create a new outmap $t'$ by using two copies of $t$ in $\mathfrak{C}_{0^n}^{[n-1]}$ and $\mathfrak{C}_{1^n}^{[n-1]}$, and orienting the $n$-edges such that $t'(u)_n = s(u)_n$ and $t'(u \oplus e_n)_n = 1 - s(u)_n$ for each $u \in P_s$ (and the others can have any orientation). Notice that this is always possible, because if $u \in P_s$, then $u \oplus e_n \notin P_S$. By \Cref{product}, the obtained orientation is a USO. Finally, for each $u \in P_s$, it follows that $t'(u)_{[n-1]} = t(u_{[n-1]})_{[n-1]} = s^{[n-1]}(u_{[n-1]}) = s(u)_{[n-1]}$ and $t'(u)_n = s(u)_n$, so $t$ agrees with $s$. Hence, $s$ is completable.

    Suppose that $s$ is completable. Then, let $t \colon \mathfrak{C} \to \{0, 1\}^n$ be the outmap of a USO which agrees with $s$. Perform a partial swap as in \Cref{partswap} in the dimension $n$ in order to create another USO and restrict it afterwards to $\mathfrak{C}_{0^n}^{[n-1]}$. Call the resulting $(n-1)$-dimensional USO $t'$. For any $u \in P_s$, then $s^{[n-1]}(u_{[n-1]}) = s(u)_{[n-1]} = t(u)_{[n-1]}$. If $u_n = 0$, then it will not change after the partial swap and $u \in \mathfrak{C}_{0^n}^{[n-1]}$, so $t(u)_{[n-1]} = t'(u_{[n-1]})$. If $u_n = 1$, then it will change during the partial swap and $u \oplus e_n \in \mathfrak{C}_{0^n}^{[n-1]}$, so $t(u)_{[n-1]} = t'((u \oplus e_n)_{[n-1]}) = t'(u_{[n-1]})$. Either way, $s^{[n-1]}(u_{[n-1]}) = t'(u_{[n-1]})$, so $s^{[n-1]}$ is completable.    
\end{proof}

Similarly, \Cref{proj} shows that it is also possible to ignore any projectable dimension. Multiple uses of \Cref{proj} state that it is possible to ignore all non-projectable dimensions.

In particular, when analyzing a $4$-certificate $s$, we can assume that there no projectable dimensions and that $P_s$ spans all the dimensions of the hypercube, as otherwise we could efficiently remove those dimensions using \Cref{span} and \Cref{proj}, and it would reduce to a case with fewer dimensions.

\begin{theorem}
    Let $s \colon \mathfrak{C} \to \{0, 1\}^n \cup \{*\}$ be a $4$-certificate with no projectable dimensions and such that $P_s$ spans all dimensions. Then, $n = 3$ and $s$ is isomorphic to either \Cref{4-cert1} or \Cref{4-cert2}, possibly by flipping all the edges in some dimensions.
\end{theorem}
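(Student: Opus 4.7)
The plan is to exploit the minimality of the $4$-certificate: by \Cref{non3}, removing any single vertex from $P_s$ yields a completable $3$-partial-outmap, so every triple of the four known vertices admits a USO extension. Non-completability must therefore arise from the mutual incompatibility of these four required extensions, which is a strong structural constraint.

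First I would normalize. Using \Cref{flipall}, I can shift outmaps so that $s(u_1) = 0^n$, and by relabeling the hypercube I can set $u_1 = 0^n$. Under this normalization, the no-clash condition between $u_1$ and each $u_j$ says $u_j \wedge s(u_j) \neq 0^n$, the ``no projectable dimension'' condition says every coordinate $i$ is $1$ in at least one of $s(u_2), s(u_3), s(u_4)$, and the spanning condition says every coordinate is $1$ in at least one of $u_2, u_3, u_4$. These three conditions together, combined with the pairwise no-clash conditions among $u_2, u_3, u_4$, drastically restrict the admissible $(4 \times n)$ vertex matrix and $(4 \times n)$ outmap matrix.

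Next I would bound $n \leq 3$ by a column-pattern argument. Each coordinate $i \in [n]$ is described by the column pair $\bigl(((u_j)_i)_{j=1}^4, (s(u_j)_i)_{j=1}^4\bigr) \in \{0,1\}^4 \times \{0,1\}^4$, and non-projectability plus spanning eliminate the constant columns on both sides. Up to the symmetries already used (dimension permutation and flipping), only a small finite list of column patterns remain. For any $n\geq 4$, I would show that either two columns of the vertex matrix coincide (contradicting minimality via a reduction along a spanned dimension that becomes redundant) or the no-clash conditions leave enough slack that an explicit completion can be built — most easily via a combed product (\Cref{product}) on some dimension $i$ whose edges are all forced in a common direction, or via hypervertex replacement (\Cref{flip}) when two of the known outmaps agree on the complement of some $I$.

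Finally, for $n = 3$, I would enumerate the $4$-subsets of $\{0,1\}^3$ spanning all three dimensions. Up to hypercube symmetry there are only two such isomorphism classes: the two ``parity tetrahedra'' (four vertices at pairwise Hamming distance $2$) and the ``vertex-with-neighbors'' configuration (a vertex together with its three hypercube neighbors). For each class I would list all outmap assignments satisfying no-clash and no-projectable-dimension, and test completability by trying to extend to a USO using the construction toolkit (combed product, hypervertex replacement, partial swap). In the completable cases an explicit $3$-dimensional USO will exhibit the completion; in the non-completable cases \Cref{fund} applied to the forced orientations will produce the obstruction. The main obstacle will be organizing this enumeration cleanly enough that exactly two certificates survive, matching \Cref{4-cert1} and \Cref{4-cert2}, while all other candidates are shown to be completable.
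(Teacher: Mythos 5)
There is a genuine gap in your proposal, and it is located precisely at the crux of the argument: the bound $n \le 3$.

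Your ``column-pattern argument'' does not close the case $n \ge 4$. You suggest that for $n \ge 4$ either two columns of the $4 \times n$ vertex matrix coincide or an explicit completion exists, but neither arm of this dichotomy is established, and the dichotomy itself is not exhaustive. Since each vertex column is a non-constant element of $\{0,1\}^4$, there are $14$ available patterns; for $4 \le n \le 14$ you may have all columns distinct, so the first arm need not trigger. And even if two vertex columns did coincide, it is not clear how to collapse a dimension without disturbing the outmap columns, which can still differ in those two coordinates. The idea the paper uses, and which your proposal lacks, is a notion of a \emph{special} dimension: a dimension $i$ for which two of the $\binom{4}{2}=6$ pairs of known vertices avoid clashing \emph{only} through coordinate $i$, i.e.\ $(s(x)\oplus s(y)) \wedge (x \oplus y) = e_i$. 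A given pair can be critical for at most one dimension, so there are at most $3$ special dimensions. The heart of the proof is then a case analysis showing that if some dimension $n$ is non-special, a completion can be assembled by splitting along the two facets $\mathfrak{C}_{0^n}^{[n-1]}, \mathfrak{C}_{1^n}^{[n-1]}$ and combining \Cref{non3}, \Cref{product}, \Cref{extend} and flippable edges. Your ``slack'' intuition points in this direction, but without identifying which pairs are critical for which dimension you have no handle on where the slack is, and no quantitative bound on $n$.

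Your final enumeration step is also off. You claim there are only two isomorphism classes of spanning $4$-subsets of $\{0,1\}^3$ (the parity tetrahedron and a vertex with its three neighbors), but this is false; by counting pairwise Hamming distance multisets one finds several more (for example $\{1,1,1,2,2,3\}$ for a path, $\{1,1,2,2,3,3\}$ for two antipodal pairs joined by adjacencies, and $\{1,1,2,2,2,3\}$). Worse, the configuration underlying \Cref{4-cert2} is of the type $\{1,1,2,2,3,3\}$ (two antipodal pairs, each pair joined by a $1$-edge), \emph{not} the vertex-with-neighbors configuration; the latter has a $3$-$1$ split in every coordinate, so no dimension can be special and it produces no certificate at all. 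With your enumeration as stated you would miss \Cref{4-cert2} entirely. The paper instead leverages ``all $3$ dimensions are special'' to force the configuration: if some pair is adjacent, specialness of every dimension forces the remaining vertices to be antipodal to the first pair and adjacent to each other, which pins down the second certificate; if no pair is adjacent, one is in the parity tetrahedron, which reduces to the known classification of $3$-dimensional pseudo-USOs.
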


\begin{figure}[htbp]
    \centering
    \begin{subfigure}[t]{0.45\textwidth}
        \centering
        \begin{tikzpicture}
            \draw[middlearrow={<}] (0,0,0) -- (2,0,0);
            \draw[middlearrow={<}] (2,0,0) -- (2,2,0);
            \draw[middlearrow={<}] (2,2,0) -- (0,2,0);
            \draw[middlearrow={>}] (0,2,0) -- (0,0,0);
            \draw[middlearrow={<}] (0,0,2) -- (2,0,2);
            \draw[middlearrow={>}] (2,0,2) -- (2,2,2);
            \draw[middlearrow={<}] (2,2,2) -- (0,2,2);
            \draw[middlearrow={<}] (0,2,2) -- (0,0,2);
            \draw[middlearrow={<}] (0,0,0) -- (0,0,2);
            \draw[middlearrow={>}] (2,0,0) -- (2,0,2);
            \draw[middlearrow={<}] (0,2,0) -- (0,2,2);
            \draw[middlearrow={>}] (2,2,0) -- (2,2,2);
            \fill [blue] (2, 0, 0) circle (2pt);
            \fill [blue] (0, 2, 0) circle (2pt);
            \fill [blue] (0, 0, 2) circle (2pt);
            \fill [blue] (2, 2, 2) circle (2pt);
        \end{tikzpicture}
        \caption{4-certificate}
        \label{4-cert1}
    \end{subfigure}
    \hfill
    \begin{subfigure}[t]{0.45\textwidth}
        \centering
        \begin{tikzpicture}
            \draw[middlearrow={<}] (0,0,0) -- (2,0,0);
            \draw[middlearrow={<}] (2,0,0) -- (2,2,0);
            \draw[red, dashed] (2,2,0) -- (0,2,0);
            \draw[middlearrow={<}] (0,2,0) -- (0,0,0);
            \draw[red, dashed] (0,0,2) -- (2,0,2);
            \draw[middlearrow={>}] (2,0,2) -- (2,2,2);
            \draw[middlearrow={<}] (2,2,2) -- (0,2,2);
            \draw[middlearrow={>}] (0,2,2) -- (0,0,2);
            \draw[middlearrow={<}] (0,0,0) -- (0,0,2);
            \draw[middlearrow={>}] (2,0,0) -- (2,0,2);
            \draw[middlearrow={<}] (0,2,0) -- (0,2,2);
            \draw[middlearrow={>}] (2,2,0) -- (2,2,2);
            \fill [blue] (0, 0, 0) circle (2pt);
            \fill [blue] (2, 0, 0) circle (2pt);
            \fill [blue] (0, 2, 2) circle (2pt);
            \fill [blue] (2, 2, 2) circle (2pt);
        \end{tikzpicture}
        \caption{4-certificate}
        \label{4-cert2}
    \end{subfigure}
    \caption{All $4$-certificates up to automorphisms and flipping all edges in a subset of the dimensions.}
\end{figure}
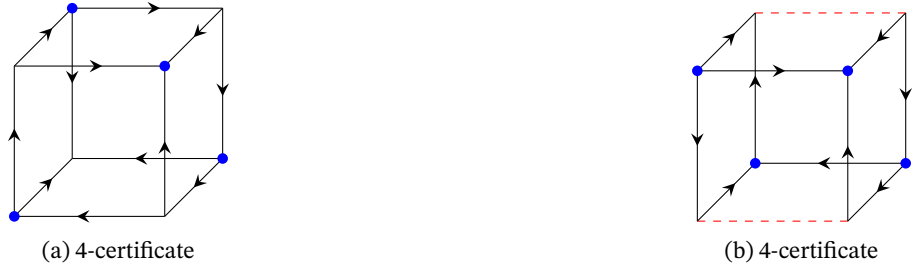

\begin{proof}
    Let the $4$ vertices of the certificate be $u, v, w, z \in P_s$. As $s$ is a $4$-certificate, then no pair of those vertices clashes. Notice that $n = 1$ is impossible and if $n = 2$, then all vertices are determined, so $s$ would be already a USO, which is a contradiction. Assume now for the rest of the proof that $n \geq 3$.

    Denote a dimension $i \in [n]$ by \emph{special} if (up to renaming of the vertices) we have \begin{itemize}
        \item $u_i = v_i = 0$ and $w_i = z_i = 1$,
        \item $s(u)_i = s(w)_i = 0$ and $s(v)_i = s(z)_i = 1$, and
        \item $(s(u) \oplus s(z)) \wedge (u \oplus z) = e_i$ and $(s(v) \oplus s(w)) \wedge (v \oplus w) = e_i$.
    \end{itemize}
    
    In each special dimension $i$, there are $2$ pairs of vertices of $S_p$ which only avoid clashing due to the dimension $i$. Moreover, there are only $6$ pairs of vertices, so there are at most $3$ special dimensions.

    Assume that $n > 3$. Then, there exists some dimension which is not special. Assume without loss of generality that the dimension $n$ is not special. There are some cases to analyze.

    \begin{enumerate}[label=\textbf{Case \arabic*:}, leftmargin=*, labelsep=12pt]
        \item $\left|P_s \cap \mathfrak{C}_{0^n}^{[n-1]}\right| \neq \left|P_s \cap \mathfrak{C}_{1^n}^{[n-1]}\right|$.
        
        As all dimensions are spanned, we can assume without loss of generality that $u, v, w \in \mathfrak{C}_{0^n}^{[n-1]}$ and $z \in \mathfrak{C}_{1^n}^{[n-1]}$. According to \Cref{flipall}, flipping all $n$-edges maintains the USO property, so further assume without loss of generality that $s(z)_n = 0$. There are still some possible configurations, depending on the number of $n$-edges of $u, v, w$ that are oriented in each direction.

        \begin{enumerate}[label=\textbf{Case \arabic{enumi}.\arabic*:}, leftmargin=0pt, labelsep=5pt]
            \item $s(u)_n = s(v)_n = s(w)_n = 0$.

            In this case, dimension $n$ would be projectable, which is a contradiction.

            \item\label{casecombed} $s(u)_n = s(v)_n = s(w)_n = 1$.

            Create a USO $t$ in $\mathfrak{C}_{0^n}^{[n-1]}$ which agrees with $u, v, w$. This is always possible due to \Cref{non3}. Create a USO $t'$ in $\mathfrak{C}_{1^n}^{[n-1]}$ which agrees with $z$. Combine them into a USO $r$ by orienting all $n$-edges towards $\mathfrak{C}_{1^n}^{[n-1]}$. Notice that $r$ is a USO due to the product construction in \Cref{product}. Moreover, $r$ agrees with $s$ in the first $n-1$ dimensions, due to the construction of $t$ and $t'$, and also in the $n$-edges. Hence, $s$ is completable, which is a contradiction.

            \begin{figure}[htbp]
                \centering
                \begin{tikzpicture}

                \coordinate (u) at (-2.1,0);
                \coordinate (v) at (-0.7,0);
                \coordinate (w) at (0.7,0);
                \coordinate (z) at (2.1,2);
    
                \draw[] (0,0) ellipse (3cm and 0.8cm);
                \draw[] (0,2) ellipse (3cm and 0.8cm);
                \node[] at (3.6,0) {$\mathfrak{C}_{0^n}^{[n-1]}$};
                \node[] at (3.6,2) {$\mathfrak{C}_{1^n}^{[n-1]}$};

                \fill [black] (u) circle (2pt);
                \fill [black] (v) circle (2pt);
                \fill [black] (w) circle (2pt);
                \fill [black] (z) circle (2pt);
                \node[below] at (u) {$u$};
                \node[below] at (v) {$v$};
                \node[below] at (w) {$w$};
                \node[above] at (z) {$z$};

                \draw[middlearrow={>}] (u) -- ($(u) + (0,2)$);
                \draw[middlearrow={>}] (v) -- ($(v) + (0,2)$);
                \draw[middlearrow={>}] (w) -- ($(w) + (0,2)$);
                \draw[middlearrow={<}] (z) -- ($(z) - (0,2)$);
                
                \end{tikzpicture}
                \caption{Case 1.2}
                \label{case1.2}
            \end{figure}
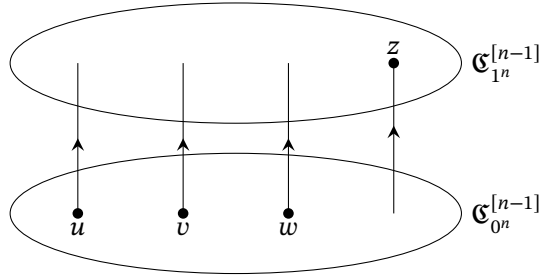

            \item Let $A \coloneq \{x \in \{u, v, w\} \mid s(x)_n = 0\}$ and $B \coloneq \{u, v, w\} \backslash A$. Assume $|A|, |B| > 0$.

            Initially, create a USO $t$ in $\mathfrak{C}_{0^n}^{[n-1]}$ which agrees with $u, v, w$. As in the last case, this is always possible.

            For each $x \in A$, extend $s$ by defining $s(x \oplus e_n) = s(x) \oplus e_n$. By \Cref{extend}, then $x \oplus e_n$ and $z$ do not clash, as $x$ and $z$ do not clash and $s(x)_n = s(z)_n$. Moreover, for distinct $x, y \in A$, as $x$ and $y$ do not clash, there exists some $j \in [n]$ such that $x_j \neq y_j \wedge s(x)_j \neq s(y)_j$. As $x_n = y_n$, then $j \neq n$, and so $(x \oplus e_n)_j \neq (y \oplus e_n)_j$ and $s(x \oplus e_n)_j = s(x)_j \neq s(y)_j = s(y \oplus e_n)_j$, so $x \oplus e_n$ and $y \oplus e_n$ also do not clash.

            As $0 \leq |A| \leq 2$, then at most $3$ vertices in $\mathfrak{C}_{1^n}^{[n-1]}$ are known. As they do not pairwise clash, there is some USO $t'$ in $\mathfrak{C}_{1^n}^{[n-1]}$ which agrees with all of them, by \Cref{non3}.

            Finally, construct $r$ by combining both $t$ and $t'$, and orienting the $n$-edges towards $\mathfrak{C}_{1^n}^{[n-1]}$. Notice that $r$ is a USO by the product construction in \Cref{product}. Furthermore, $r$ agrees with $u, v, w, z$ everywhere but in the $n$-edges for each $x \in A$. Nevertheless, for each $x \in A$, it follows that $s(x)_{[n-1]} = s(x \oplus e_n)_{[n-1]}$, so those $n$-edges are flippable and form a matching. After flipping them, the resulting USO agrees with $s$, so $s$ is completable, which is a contradiction.   

            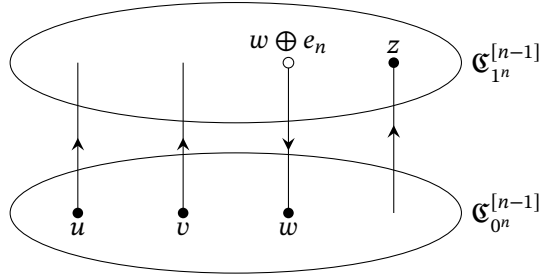
\begin{figure}[htbp]
                \centering
                \begin{tikzpicture}

                \coordinate (u) at (-2.1,0);
                \coordinate (v) at (-0.7,0);
                \coordinate (w) at (0.7,0);
                \coordinate (z) at (2.1,2);
    
                \draw[] (0,0) ellipse (3cm and 0.8cm);
                \draw[] (0,2) ellipse (3cm and 0.8cm);
                \node[] at (3.6,0) {$\mathfrak{C}_{0^n}^{[n-1]}$};
                \node[] at (3.6,2) {$\mathfrak{C}_{1^n}^{[n-1]}$};

                \fill [black] (u) circle (2pt);
                \fill [black] (v) circle (2pt);
                \fill [black] (w) circle (2pt);
                \fill [black] (z) circle (2pt);
                \node[below] at (u) {$u$};
                \node[below] at (v) {$v$};
                \node[below] at (w) {$w$};
                \node[above] at (z) {$z$};

                \draw[middlearrow={>}] (u) -- ($(u) + (0,2)$);
                \draw[middlearrow={>}] (v) -- ($(v) + (0,2)$);
                \draw[middlearrow={<}] (w) -- ($(w) + (0,2)$);
                \draw[middlearrow={<}] (z) -- ($(z) - (0,2)$);

                \node[above] at ($(w)+(0,2)$) {$w \oplus e_n$};
                \draw [black, fill=white] ($(w)+(0,2)$) circle (2pt);
                
                \end{tikzpicture}
                \caption{Case 1.3}
                \label{case1.3}
            \end{figure}
        \end{enumerate}

        \item $\left|P_s \cap \mathfrak{C}_{0^n}^{[n-1]}\right| = \left|P_s \cap \mathfrak{C}_{1^n}^{[n-1]}\right| = 2$.

        Assume without loss of generality that $u, v \in \mathfrak{C}_{0^n}^{[n-1]}$ and $w, z \in \mathfrak{C}_{1^n}^{[n-1]}$. There are some possible configurations, after assuming that the dimension $n$ is not special.

        \begin{enumerate}[label=\textbf{Case \arabic{enumi}.\arabic*:}, leftmargin=0pt, labelsep=5pt]
            \item $s(u)_n = s(v)_n$ or $s(w)_n = s(z)_n$.

            Assume without loss of generality that $s(u)_n = s(v)_n = 0$, by also possibly flipping the direction of all $n$-edges, according to \Cref{flipall}. As $n$ is not projectable, assume further that $s(w)_n = 1$.

            If $s(z)_n = 1$, all given $n$-edges are oriented towards $\mathfrak{C}_{0^n}^{[n-1]}$. Complete the USO in $\mathfrak{C}_{0^n}^{[n-1]}$ agreeing with $u$ and $v$, then complete the USO in $\mathfrak{C}_{1^n}^{[n-1]}$ agreeing with $w$ and $z$, and finally orient all $n$-edges towards $\mathfrak{C}_{0^n}^{[n-1]}$. This is possible as no pair of vertices clash, and the resulting orientation is a combed USO according to \Cref{product}. As it agrees with $s$, it follows that $s$ is completable, which is a contradiction.

            \begin{figure}[htbp]
                \centering
                \begin{tikzpicture}

                \coordinate (u) at (-2.1,0);
                \coordinate (v) at (-0.7,0);
                \coordinate (w) at (0.7,2);
                \coordinate (z) at (2.1,2);
    
                \draw[] (0,0) ellipse (3cm and 0.8cm);
                \draw[] (0,2) ellipse (3cm and 0.8cm);
                \node[] at (3.6,0) {$\mathfrak{C}_{0^n}^{[n-1]}$};
                \node[] at (3.6,2) {$\mathfrak{C}_{1^n}^{[n-1]}$};

                \fill [black] (u) circle (2pt);
                \fill [black] (v) circle (2pt);
                \fill [black] (w) circle (2pt);
                \fill [black] (z) circle (2pt);
                \node[below] at (u) {$u$};
                \node[below] at (v) {$v$};
                \node[above] at (w) {$w$};
                \node[above] at (z) {$z$};

                \draw[middlearrow={<}] (u) -- ($(u) + (0,2)$);
                \draw[middlearrow={<}] (v) -- ($(v) + (0,2)$);
                \draw[middlearrow={>}] (w) -- ($(w) - (0,2)$);
                \draw[middlearrow={>}] (z) -- ($(z) - (0,2)$);
                
                \end{tikzpicture}
                \caption{First Part of Case 2.1}
                \label{case2.1A}
            \end{figure}
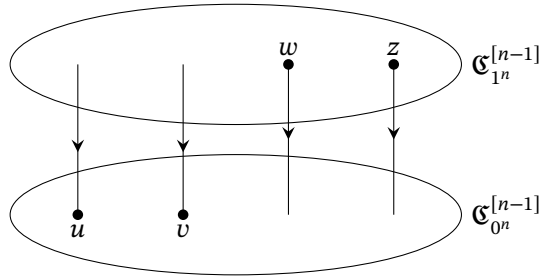

            If $s(z)_n = 0$, extend $s$ by defining $s(z \oplus e_n) = s(z) \oplus e_n$. According to \Cref{extend}, then $z \oplus e_n$ and $u$, and $z \oplus e_n$ and $v$ do not clash, as $s(z)_n = s(u)_n = s(v)_n$. Create a USO $t$ in $\mathfrak{C}_{0^n}^{[n-1]}$ which agrees with $u, v$ and $z \oplus e_n$, which is always possible according to \Cref{non3} as they do not pairwise clash. Create a USO $t'$ in $\mathfrak{C}_{1^n}^{[n-1]}$ which agrees with $w$ and $z$, which is possible as they do not clash. Finally, combine them together in an orientation $r$ by orienting all $n$-edges towards $\mathfrak{C}_{0^n}^{[n-1]}$. It follows that $r$ is a combed USO, according to \Cref{product}. Moreover, $r$ agrees with $s$ in all edges except for $(z, z \oplus e_n)$. As $s(z)_{[n-1]} = s(z \oplus e_n)_{[n-1]}$, then that edge is flippable according to \Cref{flip}, so after flipping it, the result is a USO which agrees with $s$. Therefore, $s$ is completable, which is a contradiction.

            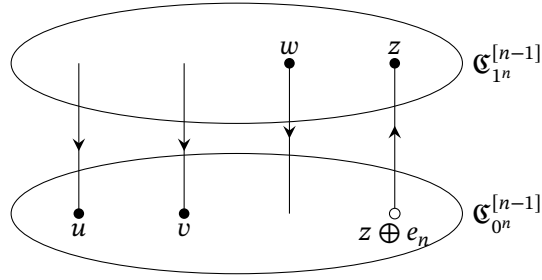
\begin{figure}[htbp]
                \centering
                \begin{tikzpicture}

                \coordinate (u) at (-2.1,0);
                \coordinate (v) at (-0.7,0);
                \coordinate (w) at (0.7,2);
                \coordinate (z) at (2.1,2);
    
                \draw[] (0,0) ellipse (3cm and 0.8cm);
                \draw[] (0,2) ellipse (3cm and 0.8cm);
                \node[] at (3.6,0) {$\mathfrak{C}_{0^n}^{[n-1]}$};
                \node[] at (3.6,2) {$\mathfrak{C}_{1^n}^{[n-1]}$};

                \fill [black] (u) circle (2pt);
                \fill [black] (v) circle (2pt);
                \fill [black] (w) circle (2pt);
                \fill [black] (z) circle (2pt);
                \node[below] at (u) {$u$};
                \node[below] at (v) {$v$};
                \node[above] at (w) {$w$};
                \node[above] at (z) {$z$};

                \draw[middlearrow={<}] (u) -- ($(u) + (0,2)$);
                \draw[middlearrow={<}] (v) -- ($(v) + (0,2)$);
                \draw[middlearrow={>}] (w) -- ($(w) - (0,2)$);
                \draw[middlearrow={<}] (z) -- ($(z) - (0,2)$);

                \node[below] at ($(z)-(0,2)$) {$z \oplus e_n$};
                \draw[black, fill=white] ($(z)-(0,2)$) circle (2pt);
                
                \end{tikzpicture}
                \caption{Second Part of Case 2.1}
                \label{case2.1B}
            \end{figure}

            \item Assume without loss of generality that $s(u)_n = s(w)_n = 0$ and $s(v)_n = s(z)_n = 1$ and $(s(u) \oplus s(z)) \wedge (u \oplus z) \neq e_n$. Hence, there exists some $j \in [n-1]$ such that $u_j \neq z_j \wedge s(u)_j \neq s(z)_j$.

            Extend $s$ by defining $s(u \oplus e_n) = s(u) \oplus e_n$. According to \Cref{extend}, $u$ and $w$ do not clash (as $s(u)_n = s(w)_n$) and $u \oplus e_n$ and $z$ do not clash.

            Similarly, extend $s$ by defining $s(z \oplus e_n) = s(z) \oplus e_n$. By the same reasons as before, $z \oplus e_n$ and $u$, and $z \oplus e_n$ and $v$ do not clash.

            Construct a USO $t$ in $\mathfrak{C}_{0^n}^{[n-1]}$ which agrees with $u, v$ and $z \oplus e_n$ and a USO $t'$ in $\mathfrak{C}_{1^n}^{[n-1]}$ which agrees with $w, z$ and $u \oplus e_n$. This is possible according to \Cref{non3}, as the vertices do not pairwise clash. Finally, combine them together in the orientation $r$, by orienting all the $n$-edges towards $\mathfrak{C}_{1^n}^{[n-1]}$. It follows that $r$ is a combed USO, according to \Cref{product}.

            Moreover, $r$ agrees with $s$ in all edges except for $(u, u \oplus e_n)$ and $(z, z \oplus e_n)$. As $s(u)_{[n-1]} = s(u \oplus e_n)_{[n-1]}$ and $s(z)_{[n-1]} = s(z \oplus e_n)_{[n-1]}$, then those edges are flippable, so after flipping them, the resulting USO agrees with $s$. Thus, $s$ would be completable, which is a contradiction.

            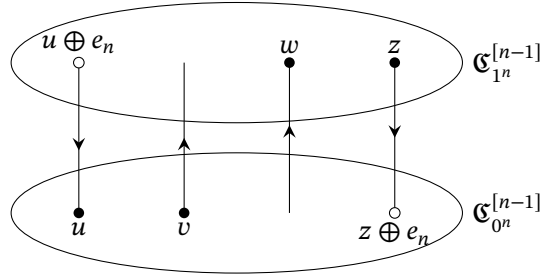
\begin{figure}[htbp]
                \centering
                \begin{tikzpicture}

                \coordinate (u) at (-2.1,0);
                \coordinate (v) at (-0.7,0);
                \coordinate (w) at (0.7,2);
                \coordinate (z) at (2.1,2);
    
                \draw[] (0,0) ellipse (3cm and 0.8cm);
                \draw[] (0,2) ellipse (3cm and 0.8cm);
                \node[] at (3.6,0) {$\mathfrak{C}_{0^n}^{[n-1]}$};
                \node[] at (3.6,2) {$\mathfrak{C}_{1^n}^{[n-1]}$};

                \fill [black] (u) circle (2pt);
                \fill [black] (v) circle (2pt);
                \fill [black] (w) circle (2pt);
                \fill [black] (z) circle (2pt);
                \node[below] at (u) {$u$};
                \node[below] at (v) {$v$};
                \node[above] at (w) {$w$};
                \node[above] at (z) {$z$};

                \draw[middlearrow={<}] (u) -- ($(u) + (0,2)$);
                \draw[middlearrow={>}] (v) -- ($(v) + (0,2)$);
                \draw[middlearrow={<}] (w) -- ($(w) - (0,2)$);
                \draw[middlearrow={>}] (z) -- ($(z) - (0,2)$);

                \node[above] at ($(u)+(0,2)$) {$u \oplus e_n$};
                \draw[black, fill=white] ($(u)+(0,2)$) circle (2pt);
                \node[below] at ($(z)-(0,2)$) {$z \oplus e_n$};
                \draw[black, fill=white] ($(z)-(0,2)$) circle (2pt);
                
                \end{tikzpicture}
                \caption{Case 2.2}
                \label{case2.2}
            \end{figure}
        \end{enumerate}
    \end{enumerate}

    This shows that $n = 3$. Furthermore, all $3$ dimensions must be special, otherwise $s$ would be completable, as stated above. Now, there are two possible cases.

    \begin{enumerate}[label=\textbf{Case \arabic*:}, leftmargin=*, labelsep=12pt]

    \item No pair of vertices in $P_s$ is adjacent.

    There exists only one configuration of vertices (shown in \Cref{pusoimage}). Moreover, such configuration determine all edges in $\mathfrak{C}$. Each facet is a square with two antipodal vertices in $S_p$, so it must be a USO, otherwise the antipodal vertices would clash. Thus, in order for $s$ not to be completable, it would follow that the obtained orientation is not a USO, but all its facets are. These are called Pseudo USOs (PUSO)~\cite[Definition 2]{puso}.

    In a PUSO, only the antipodal vertices clash \cite[Theorem 5]{puso}, so no pair of vertices in $S_p$ clashes. Hence, in any PUSO, selecting this configuration of $4$ vertices yields a $4$-certificate.

    There are only $2$ different $3$-dimensional PUSOs \cite[Figure 3]{puso}, which are obtained from one another after flipping all the edges in one dimension. Hence, in this case, all $4$-certificates are obtained from \Cref{pusoimage} after flipping all the edges in some subset of the dimensions.

    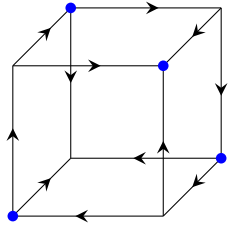
\begin{figure}[htbp]\centering
    \begin{tikzpicture}
            \draw[middlearrow={<}] (0,0,0) -- (2,0,0);
            \draw[middlearrow={<}] (2,0,0) -- (2,2,0);
            \draw[middlearrow={<}] (2,2,0) -- (0,2,0);
            \draw[middlearrow={>}] (0,2,0) -- (0,0,0);
            \draw[middlearrow={<}] (0,0,2) -- (2,0,2);
            \draw[middlearrow={>}] (2,0,2) -- (2,2,2);
            \draw[middlearrow={<}] (2,2,2) -- (0,2,2);
            \draw[middlearrow={<}] (0,2,2) -- (0,0,2);
            \draw[middlearrow={<}] (0,0,0) -- (0,0,2);
            \draw[middlearrow={>}] (2,0,0) -- (2,0,2);
            \draw[middlearrow={<}] (0,2,0) -- (0,2,2);
            \draw[middlearrow={>}] (2,2,0) -- (2,2,2);
            \fill [blue] (2, 0, 0) circle (2pt);
            \fill [blue] (0, 2, 0) circle (2pt);
            \fill [blue] (0, 0, 2) circle (2pt);
            \fill [blue] (2, 2, 2) circle (2pt);
    \end{tikzpicture}
        \caption{Pseudo USO}
        \label{pusoimage}
    \end{figure}

    \item A pair of vertices is adjacent. Assume $u \oplus e_1 = v$. As each facet must have $2$ vertices in $P_s$, the other two vertices must also satisfy $w \oplus e_1 = z$ and they must be $2$ pairs of antipodal vertices, i.e., $u \oplus z = v \oplus w = 1^n$. 

    As the two known $1$-edges must have different orientations, assume that $s(u)_1 = s(z)_1 = 1$ and $s(v)_1 = s(w)_1 = 0$.

    As $u$ and $z$ do not clash, assume that $s(u)_2 = 1 \neq s(z)_2 = 0$. As the dimension $2$ is special, then $s(v)_2 = 0$ and $s(w)_2 = 1$.

    As flipping all $3$-edges is possible, assume $s(u)_3 = 1$. Hence, $s(v)_3 = 0$. In order for $u$ and $w$, and $v$ and $z$ not to clash, then $s(w)_3 = 0$ and $s(z)_3 = 1$.

    This fully defines the outmap of all $4$ vertices. Moreover, the edges $(u \oplus e_3, v \oplus e_3)$ and $(u \oplus e_2, v \oplus e_2)$, which are red in \Cref{4-cert2image}, have no possible orientations.

    Moreover, in this configuration, no pair of vertices clash, $s$ is not completable and there are no $3$-certificates. Therefore, this is a $4$-certificate.

    \begin{figure}[htbp]
        \centering
        \begin{subfigure}[t]{0.45\textwidth}
            \centering
            \begin{tikzpicture}
                \draw[middlearrow={>}] (0,0,0) -- (2,0,0);
                \draw[middlearrow={<}] (2,0,0) -- (2,2,0);
                \draw[red, dashed] (2,2,0) -- (0,2,0);
                \draw[middlearrow={<}] (0,2,0) -- (0,0,0);
                \draw[red, dashed] (0,0,2) -- (2,0,2);
                \draw[middlearrow={>}] (2,0,2) -- (2,2,2);
                \draw[middlearrow={>}] (2,2,2) -- (0,2,2);
                \draw[middlearrow={>}] (0,2,2) -- (0,0,2);
                \draw[middlearrow={>}] (0,0,0) -- (0,0,2);
                \draw[middlearrow={<}] (2,0,0) -- (2,0,2);
                \draw[middlearrow={>}] (0,2,0) -- (0,2,2);
                \draw[middlearrow={<}] (2,2,0) -- (2,2,2);
                \fill [blue] (0, 0, 0) circle (2pt);
                \fill [blue] (2, 0, 0) circle (2pt);
                \fill [blue] (0, 2, 2) circle (2pt);
                \fill [blue] (2, 2, 2) circle (2pt);
                \node[above right] at (0,0,0) {$u$};
                \node[above right] at (2,0,0) {$v$};
                \node[above left] at (0,2,2) {$w$};
                \node[above left] at (2,2,2) {$z$};
            \end{tikzpicture}
            \caption{4-certificate}
            \label{4-cert2image}
        \end{subfigure}
        \hfill
        \begin{subfigure}[t]{0.45\textwidth}
            \centering
            \begin{tikzpicture}
                \draw[-] (0,0,0) -- (2,0,0);
                \draw[-] (0,0,0) -- (0,2,0);
                \draw[-] (0,0,0) -- (0,0,2);
                \node at (2.3,0,0) {$1$};
                \node at (0,2.3,0) {$2$};
                \node at (0,0,2.5) {$3$};
            \end{tikzpicture}
            \caption{Labelling of the Dimensions}
            \label{labeldimension}
        \end{subfigure}
        \caption{$4$-certificate}
    \end{figure}
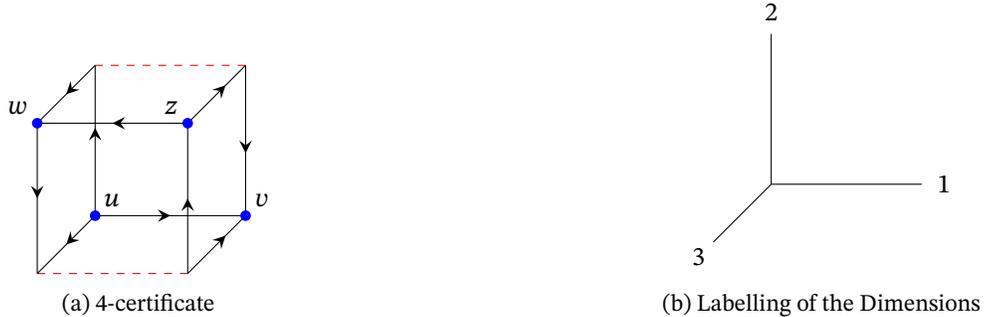
    \end{enumerate}
\end{proof}

\section{Future Work}

Studying the problem \textsc{Sink-or-Clash} seems to be an important step in investigating USOs, as several problems that reduce to USO are promise problems, as seen in \Cref{introduction}. Moreover, there are many aspects of \textsc{Sink-or-Clash} that should be studied further.

Although it seems that \textsc{Sink-or-Clash} is harder than \textsc{Sink-USO} in the deterministic setting, this is not the case up until $4$-dimensional hypercubes. However, we have not been able to find upper bounds for \textsc{Sink-USO} or lower bounds for \textsc{Sink-or-Clash} that confirm that \textsc{Sink-or-Clash} is indeed strictly harder.

In \Cref{section_resproof}, a resolution proof of \textsc{Sink-or-Clash} was constructed. Although efficient proofs in weaker proof systems are known to exist, for example using \emph{Reversible Resolution}, the proof system that characterizes $\texttt{SOPL}^\texttt{dt}$ (as $\textsc{Sink-or-Clash} \in \texttt{UEOPL}^\texttt{dt} \subseteq \texttt{SOPL}^\texttt{dt}$ \cite{FEARNLEY20201, 10.1145/3663758}), we do not know concrete proofs. Hence, the ideas behind the resolution proof of \textsc{Sink-or-Clash} could potentially be used to find them or to design an intuitive proof system that would solve the class of problems in which \textsc{Sink-or-Clash} is complete. Since the given resolution proof also works for \textsc{Cube-OPDC}, which is $\texttt{UEOPL}^\texttt{dt}$-complete \cite[Theorem 20]{FEARNLEY20201}, it seems more likely to be used to construct a proof system characterizing $\texttt{UEOPL}^\texttt{dt}$.

We do not have a categorization of the $n$-certificates for $n > 4$ (which may not exist). However, even with a good characterization, it is not clear how to find an upper bound on the number of queries needed to find a clash given a certain certificate. This would be an upper bound between the number of queries needed for \textsc{Sink-or-Clash} and \textsc{Sink-USO}, and it would help to relate the two problems. Moreover, the $4$-certificates are efficiently verifiable, so maybe one can define a non-promise version of USO which also accepts them as a solution, and this problem could potentially be easier than \textsc{Sink-or-Clash}.

Moreover, an interesting open question would be whether in a fixed dimension $d$, there would only be $n$-certificates up to a small $n=f(d)$, or if all $n$-certificates without non-spanned or projectable dimensions only exist in hypercubes up to $d = g(n)$ dimensions (as this is the case for $2$-certificates and $4$-certificates).

\newpage
\bibliographystyle{plainurl}
\bibliography{Bibliography}

\end{document}